\documentclass[11pt]{article}
\usepackage[vcentermath]{youngtab}
\usepackage[font=small]{caption}
\usepackage{tikz}
\usepackage{cite}
\usepackage{amsthm}
\usepackage{amssymb}
\usepackage{geometry}
\usepackage[english]{babel}
\usepackage[utf8]{inputenc}
\usepackage[T1]{fontenc}
\usepackage{indentfirst}
\usepackage{amsmath}
\usepackage{amssymb}
\usepackage{graphicx}
\usepackage{psfrag}
\usepackage{bbold}
\usepackage{proof}
\usepackage{lmodern}

\frenchspacing
\allowhyphens

\newcommand{\complex}{\mathbb{C}}

\newcommand{\valos}{\mathbb{R}}
\newcommand{\eps}{\varepsilon}

\newcommand{\ii}{^{(i)}}

\newcommand{\ordo}{\mathcal{O}}
\newcommand{\Tr}{\text{Tr}}
\newcommand{\physstr}{\mathcal{P}}

\newtheorem{thm}{Theorem}
\newtheorem{conj}{Conjecture}

\newtheorem{lemma}{Lemma}

\newcommand{\vev}[1]{\left\langle #1 \right\rangle}
\newcommand{\ket}[1]{{\left|#1\right\rangle}}
\newcommand{\bra}[1]{{\left\langle #1\right|}}
\newcommand{\braket}[1]{{\left\langle #1\right\rangle}}
\newcommand{\skalarszorzat}[2]{{\langle #1 | #2 \rangle}}

\def\ii{\mathrm{i}}

\setlength{\topmargin}{-1.5cm}
\setlength{\textheight}{23.5cm}
\setlength{\textwidth}{16cm}
\setlength{\oddsidemargin}{0cm}

\makeatletter
\makeatother

\usepackage{ifpdf}

\ifpdf
\usepackage{epstopdf}
\usepackage[pdftex,colorlinks,urlcolor=blue,citecolor=blue,linkcolor=blue]{hyperref}
\else
\usepackage[hypertex,colorlinks,urlcolor=blue,citecolor=blue,linkcolor=blue]{hyperref}
\fi
\usepackage{cleveref}
\pdfadjustspacing=1

\title{Generalized Gibbs Ensemble and string-charge relations\\ in
  nested Bethe Ansatz}
\author{Gy\"{o}rgy Z. Feh\'{e}r$^{1}$, Bal\'{a}zs Pozsgay$^{2}$  \bigskip \\
\parbox{0.7\textwidth}{\center  \small
 $^1$\textit{BME Statistical Field Theory Research Group\\
   1111 Budapest, Budafoki \'ut 8, Hungary}\\
 \bigskip
 $^2$\textit{
MTA-BME Quantum Dynamics and Correlations Research Group,
    Department of Theoretical Physics,\\ Budapest University
of Technology and Economics,\\ 1521 Budapest, Hungary}
\small\date{\small\today}   
}}

\begin{document}
\numberwithin{equation}{section}
\maketitle
\abstract{The non-equilibrium steady states of integrable models are
  believed to be described by the Generalized Gibbs Ensemble (GGE), which
  involves all local and quasi-local conserved charges of the model. In
  this work we investigate integrable lattice models solvable by the
  nested Bethe Ansatz, with group symmetry $SU(N)$, $N\ge 3$. In these
  models the Bethe Ansatz involves various types of Bethe rapidities
  corresponding to the ``nesting'' procedure, describing the internal
  degrees of freedom for the excitations. We show that a complete set
  of charges for the GGE can be obtained from the known fusion hierarchy of
  transfer matrices. The resulting charges are quasi-local in a
  certain regime in rapidity space, and they completely fix the
  rapidity distributions of each string type from each nesting
  level.
  }

\section{Introduction}

One of the central problems in theoretical physics is the
connection between the fundamental laws of quantum mechanics and the
various classical theories describing physics on macroscopic scales. A
particularly interesting question is equilibration and thermalization
of closed quantum systems, i.e. the emergence of statistical
mechanics from the unitary time evolution dictated by the
Schr\"odinger equation. This problem has attracted interest since the
1930's, and significant understanding has been achieved in the last 15
years (for reviews, see \cite{rigol-eth,Silva-quench-colloquium}).
Furthermore, special attention has been
devoted to those systems which do not thermalize, and one class of
such systems are the integrable models.

One dimensional exactly solvable models are known to possess an
infinite number of mutually commuting conserved charges. The resulting
conservation laws prevent the integrable systems from thermalization
to standard Gibbs ensembles. Instead, the idea of the Generalized
Gibbs Ensemble (GGE) was put forward in \cite{rigol-gge,rigol-2}. The
GGE is analogous to the canonical Gibbs ensemble: it is built on the
maximum entropy principle \cite{jaynes-1}, but it involves all 
the conserved charges of the model.

Even though the concept of the GGE is only  $\sim 10$ years
old, it has a quite rich history. Whereas it was fairly quickly proven
to be the correct thermodynamic ensemble in the case of free systems
\cite{mussardo-ising-1,mussardo-ising-2,ising-quench-1,ising-quench-2,ising-quench-3,ising-quench-4,ising-quench-5,essler-truncated-gge,gurarie-gge,second-quench,free-gge-1,free-gge-2,free-gge-3,free-gge-4},
the case of interacting lattice models had its twists and turns. After
some early to attempts to construct the GGE for the Heisenberg spin
chain \cite{sajat-xxz-gge,essler-xxz-gge} it was shown in
\cite{JS-oTBA,sajat-oTBA} that the GGE built on the known set of
strictly local charges fails to give correct predictions for the
steady states in particular quench situations. This failure was later
attributed to an {\it incompleteness} of the known charges, and the
work \cite{JS-CGGE} showed that a {\it Complete GGE} can be built by
incorporating the recently discovered quasi-local charges
\cite{prosen-xxx-quasi,prosen-enej-quasi-local-review} as
well.

After clarifying the GGE for the Heisenberg chains it became widely
accepted that there should be a 
complete GGE for any integrable model, and the remaining issue is 
to find the correct set of conserved charges. Whereas this might seem like
a relatively minor problem, it is far from being trivial in models
 more complicated than the XXZ spin chain. Ultimately one
would like a general proof for the existence of a Complete GGE, at
least after specifying the integrability structure of the
model. However, such a proof is not yet in sight.

The theory of Generalized Hydrodynamics (GHD) also motivates the
further study of the GGE. GHD describes large scale transport
properties of integrable models \cite{doyon-GHD,jacopo-ghd}, and one
of the main assumptions of the theory is the existence of local (space
and time dependent) GGE's, for which there exists a complete set of
charges. The GHD has been applied already for more complicated systems such
as the Hubbard model \cite{jacopo-enej-hubbard}, where this
completeness has not yet been studied. Therefore it is important to
understand the GGE in these more complicated models.

Here we contribute to the subject by formulating the GGE for a
prototypical multi-component system, namely the $SU(3)$-symmetric
fundamental spin chain, also known as the Lai-Sutherland
model. Furthermore, we present some conjectures about the GGE in the
fundamental $SU(N)$-symmetric model for any $N\ge 3$.
These models can be solved by the so-called nested Bethe Ansatz
\cite{kulish1981generalized,johannesson1986structure,johannesson-su3-sun,Hubbard-book}.
The eigenstates can be characterized by multiple sets of rapidities: the
first set describes the lattice momenta of the quasi-particles, which
are the excitations above a reference state, whereas the remaining sets
describes the wave function amplitudes in the internal space of the
spin waves.

We should note that even though the GGE for these
particular models has not yet been set up, specific quantum quenches in
the $SU(3)$ case have already been studied. They all involve so-called
integrable initial states \cite{sajat-integrable-quenches}, which
allow an exact analytic solution due to certain relations to boundary
integrability \cite{sajat-mps}. A quantum quench from a specific
Matrix Product State (MPS) was already studied in
\cite{nested-quench-1}, and local two-site states were investigated in
\cite{sajat-su3-1,sajat-su3-2}. The light cone spreading of
entanglement and correlations was studied in
\cite{pasquale-nested-entanglement}. Nevertheless the question of the
existence of a complete GGE in the $SU(3)$-symmetric model remained
completely open up to now. We should also note that quantum quenches in the $SU(N)$-symmetric
Fermi-Hubbard model have been studied recently in
\cite{fermi-hubbard-sun}, but here the theoretical framework only
dealt with a specific limiting case, and not the general Bethe Ansatz solution.

In the following subsection we give a more detailed description of the
GGE in generic integrable models and explain the main mechanisms
responsible for the emergence of the GGE, while omitting many
technical details.
Afterwards, the remainder of the paper is organized as follows:
In \cref{sec:SUNintro} we define the $SU(N)$ symmetric spin chain, and
consider the most basic properties of it. In \cref{sec:XXX} we discuss
the GGE in
the $N=2$ case, which is the celebrated XXX Heisenberg spin chain.
This Section includes known results, but we re-derive them using
slightly different techniques, more adequate for later
generalizations.
\Cref{sec:XXX} thus sets the stage for our investigations of
the multi-component models.
In \cref{sec:SU3} we
introduce the main interest of our paper, the $SU(3)$ symmetric
model, and discuss its main properties.
In
\cref{sec:DefAndConjRep} we consider two generating functions for conserved
charges, which correspond to the defining and conjugate
representations of $SU(3)$. We rigorously prove their quasi-locality. In
\cref{sec:SCrelations} we build two families of charges, and
we derive the complete set of 
string-charge
relations for this model. In \cref{sec:SUNGGE} we conjecture the
structure of the complete GGE for generic $SU(N)$.  We conclude in
\ref{sec:conclusions}, and a number of technical computations are
collected in the appendices.

\subsection{Generalized Eigenstate Thermalization}

Let us consider an integrable lattice model with a local Hamiltonian
$H$, defined in some finite volume $L$.

We consider a quantum quench situation, when the system is prepared in
the initial state $\ket{\Psi(t=0)}=\ket{\Psi_0}$.
We will consider different volumes and eventually the thermodynamic
limit. Therefore we require, that $\ket{\Psi_0}$ should be well-defined in
any finite volume and also as $L\to\infty$.
One possibility is to define $\ket{\Psi_0}$
as the ground state of a different local Hamiltonian,
or as a Matrix Product State (MPS) \cite{mps-intro1}.
We also require that $\ket{\Psi_0}$ satisfies the
cluster 
decomposition principle, namely for two local operators $\ordo_{1,2}(x)$
\begin{equation}
\lim_{|x_1-x_2|\to\infty}
\bra{\Psi_0}\ordo_1(x_1)\ordo_2(x_2)\ket{\Psi_0}=
\bra{\Psi_0}\ordo_1(x_1)\ket{\Psi_0}\bra{\Psi_0}\ordo_2(x_2)\ket{\Psi_0}.
\end{equation}

The state of the system at later times is given by
\begin{equation}
  \ket{\Psi(t)}=e^{-iHt}\ket{\Psi_0}.
\end{equation}

We are interested in equilibration and thermalization, to this order
we investigate the long time limit of local observables. Here and in
the following we understand that the $L\to\infty$ is taken {\it
  before} the long time limit. However, certain formal expressions are
more easily handled by keeping $L$ finite.

Before turning to the integrable models, let us focus on the more
simple non-integrable case. In generic non-integrable systems equilibration and thermalization to a Gibbs
Ensemble can be argued as follows \cite{rigol-eth}.

A direct finite volume expansion for the time evolution gives
\begin{equation}
 \vev{ \ordo(t)}=\sum_{j,k} \skalarszorzat{\Psi_0}{\Psi_j}\bra{\Psi_j}\ordo\ket{\Psi_k}
\skalarszorzat{\Psi_k}{\Psi_0}e^{-i(E_k-E_j)t}.
\end{equation}

In the long time limit dephasing leads to the emergence of the
Diagonal Ensemble:
\begin{equation}
  \lim_{T\to\infty}\left[ \int_0^T dt \vev{ \ordo(t)}\right]
    =\sum_{j} |\skalarszorzat{\Psi_0}{\Psi_j}|^2\bra{\Psi_j}\ordo\ket{\Psi_j}.
  \end{equation}
The Eigenstate Thermalization Hypothesis (ETH) states that for
almost all states in a small energy window $[E,E+\Delta E]$ the
mean values $\bra{\Psi_j}\ordo\ket{\Psi_j}$ will be close to each
other \cite{eth1,eth2}. Due to energy conservation the system will be populated only
with states that are close to each other in energy density, therefore
the diagonal ensemble has to be equal to the microcanonical
average. In large volumes the microcanonical and canonical
averages become equivalent for local operators, thus we have argued for the emergence of
the Gibbs Ensemble:
\begin{equation}
  \lim_{T\to\infty}\left[ \int_0^T dt \vev{ \ordo(t)}\right]=\vev{\ordo}_{\text{GE}}\equiv
  \frac{\text{Tr}\left(e^{-\beta H}\ordo\right)}{\text{Tr}\left(e^{-\beta H}\right)}.
\end{equation}
Here the parameter $\beta$ has to be chosen such that energy
conservation holds:
\begin{equation}
  \bra{\Psi_0}H\ket{\Psi_0}=\vev{H}_{\text{GE}}.
\end{equation}

In integrable models the situation is different due to the existence
of a large family of additional conserved charges. It has been known
since the early days of integrability that the integrable lattice model
possesses a family of commuting operators
\begin{equation}
  [\mathcal{Q}_j,\mathcal{Q}_k]=0,\qquad j,k=1,\dots,\infty,
\end{equation}
such that the Hamiltonian is a member of the series and each
$\mathcal{Q}_k$ is an extensive operator whose operator density is
strictly local. Typically it is possible to choose the charges such
that the density of $\mathcal{Q}_k$ spans $k$ sites.

The existence of these charges leads to the concept of the Generalized
Gibbs Ensemble (GGE). The main idea is to involve all conservation
laws in the standard statistical physical derivations. Based on the maximum
entropy principle we expect that the equilibrated values
of local observables will be given by
\begin{equation}
  \label{GGE}
\vev{\ordo}_{\text{GGE}}\equiv
  \frac{\text{Tr}\left(e^{-\sum_j \beta_j
        \mathcal{Q}_j}\ordo\right)}{\text{Tr}\left(e^{-\sum_j \beta_j\mathcal{Q}_j}\right)}.
\end{equation}
Here the generalized inverse temperatures $\beta_j$ are determined by the
initial state through 
\begin{equation}
  \bra{\Psi_0}\mathcal{Q}_j\ket{\Psi_0}=\vev{\mathcal{Q}_j}_{\text{GGE}},\qquad
  j=1,\dots,\infty,
\end{equation}
which are a set of coupled non-linear equations.

In analogy with the non-integrable case, where the ETH is the main
mechanism for the emergence of the GE, in integrable models
equilibration to the GGE is guaranteed if the Generalized Eigenstate
Thermalization (GETH) holds with the given set of conserved charges \cite{rigol-geth}. In
rough terms the GETH states that in the TDL the mean values of local observables
only depend on the global mean values of the conserved charges, and
not on any other details of the state.

It was realized in the case of the Heisenberg spin chains that the
GETH does not hold if we only consider the traditional set of local
charges \cite{JS-oTBA,sajat-oTBA,andrei-gge,sajat-GETH}. Instead, it
was realized that the so-called quasi-local charges
\cite{prosen-xxx-quasi,prosen-enej-quasi-local-review} need to be
included as well \cite{JS-CGGE,jacopo-massless-1}. The main reason for
this is the following.

In integrable models solvable by the Bethe Ansatz the finite volume eigenstates are
characterized by a finite set of Bethe rapidities. In the
thermodynamic limit the equilibrium
configurations are described by root distribution functions
$\rho_{\alpha}(\lambda)$, where $\lambda$ is the rapidity parameter
and $\alpha$ is an index or multi-index describing particle types. It
is a general understanding that in such models the local correlation
functions depend on all root densities. This was already postulated in early works \cite{Korepin-Book},
and it was proven for the XXZ
chain in \cite{sajat-corr,sajat-corr2}. According to this picture, a
set of conserved operators  is complete, {\it if their mean values completely fix all
the Bethe root densities}. This is the ultimate form
of the GETH, relevant for interacting integrable models. This idea was
further formalized in \cite{enej-gge,enej-gge-qtm}, where it was
argued that the GGE should be formulated
using root density operators, whose eigenvalues are the root densities
themselves. 

In Section \ref{sec:XXX} we summarize the known results of the XXX
chain and show that a complete set of quasi-local charges
indeed fixes all the root densities. It is the goal of our paper to extend
this picture to the $SU(N)$-symmetric chains with $N\ge 3$, and to find a complete
set of quasi-local operators.

To conclude this Section, we mention an important property of the
GGE construction, which is inherently linked to the
question, whether the GGE can be regarded as a statistical physical ensemble.
 The complete GGE does not use the 
maximum entropy principle, because it fixes all Bethe root densities.
The entropy principle is used in a different approach, the
``Quench Action'' method \cite{quench-action}, which is applicable
only for so-called integrable quenches
\cite{sajat-integrable-quenches,sajat-minden-overlaps}.
Therefore it is our opinion that
the term ``GGE'' is not adequate for quenches in integrable models, and we should instead use
``Generalized Eigenstate Thermalization'' or similar, alternative
names. The term ``GGE'' is useful due to historical reasons, but
it does not capture the essence of equilibration in integrable models.

\section{The $SU(N)$-symmetric spin chains - generalities}

\label{sec:SUNintro}
Let us consider an integer $N\ge 2$. We define a spin chain with local
Hilbert spaces $h_j = \mathbb{C}^N$ called quantum spaces such that the full Hilbert space
of the chain if length $L$ is $\mathcal{H}_L= h_1 \otimes
h_2 \otimes \dots \otimes h_L = \big(\mathbb{C}^N\big)^L$.

We consider the fundamental $SU(N)$-symmetric model \cite{lai1974lattice,
  sutherland1975model}  defined on this Hilbert space, given by the Hamiltonian
\begin{equation}
\label{eq:SU3Hamiltonian}
 H = -L + \sum_{j=1}^L P_{j,j+1}.
\end{equation}
Above $P\in\text{End}(\mathbb{C}^N\otimes\mathbb{C}^N)$ is the
permutation operator, which acts as $P(v_1\otimes v_2) = v_2\otimes
v_1, \, v_1, v_2\in\mathbb C^N$. For simplicity we consider the model
under periodic boundary conditions: $P_{L,L+1}=P_{L,1}$.

For $N=2$ the model is equivalent to the famous Heisenberg XXX spin
chain, whereas for $N\ge 3$ it can be considered a higher rank
generalization of it. 

One of the most important properties of the Hamiltonian
(\ref{eq:SU3Hamiltonian}) is its $SU(N)$ invariance, which is
understood as follows. Let the local Hilbert spaces $h_j$ carry the
defining representation of $SU(N)$, and let us extend the group action
to the tensor product. Then the global $SU(N)$ invariance of the
Hamiltonian immediately follows from the fact that it involves 
invariant local operators.

The model is integrable for any $N$: it possesses an infinite family of commuting
local charges, and it can be solved by the Algebraic Bethe Ansatz.
The exact real space wave functions of the eigenstates are given by the so-called 
nested Bethe Ansatz
\cite{kulish1981generalized,johannesson1986structure,johannesson-su3-sun,Hubbard-book}.
In the following we briefly review the standard integrability
framework of these models.
We focus on the construction of the commuting set of transfer matrices, and their
 eigenvalues expressed in terms of Bethe Ansatz rapidities. We do not
 treat the actual construction of the Bethe states, and we refer the
 reader to \cite{resh-su3,kulish-resh-glN}.

Let us consider the following fundamental $R$-matrix:
\begin{equation}
  \label{Rdef}
R(u) = \frac{1}{u+\ii}\left(u+\ii P\right), \qquad R\in\text{End}\big(\complex^N\otimes\complex^N\big),
\end{equation}
which satisfies the Yang-Baxter equation \cite{Baxter-Book}
\begin{align}
  \label{YB}
 R_{23}(v-w)R_{13}(u-w)R_{12}(u-v) &= R_{12}(u-v)R_{13}(u-w)R_{23}(v-w),
\end{align}
and the unitarity relation:
\begin{equation}
  \label{eq:uni1}
   R(u) R(-u) = 1.
\end{equation}
It is group invariant with respect to $GL(N)$:
\begin{equation}
  \label{groupinvariance}
 G_1 G_2 R(u) = R(u) G_1 G_2, \quad G_i \in GL(N), i=1,2.
\end{equation}

Let us consider an additional space $h_0= \mathbb{C}^N$ called the
auxiliary space. We define the transfer matrix (TM) of the model in the usual way:
\begin{equation}
  \label{tdef}
 t(u) = \Tr_0 R_{10}(u)R_{20}(u)\dots R_{L0}(u),
\end{equation}
where the trace is taken on the auxiliary space $h_0$. The transfer
matrices form a  commuting family:
\begin{equation}
  \label{tt}
 \lbrack t(u), t(v) \rbrack = 0.
\end{equation}
The commuting set of local charges is built from $t(u)$. Let
\begin{equation}
  \label{Qdef}
 \mathcal{Q}_k = (-\ii)\left.\frac{d^{k-1}}{du^{k-1}} \log
   t(u)\right|_{u=0},\qquad k\ge 2.
\end{equation}
It can be shown that the $ \mathcal{Q}_k$ are local charges: they are extensive
such that their operator density spans at most $k$ sites \cite{Luscher-conserved}. It follows
from \eqref{tt} that they commute with each other:
\begin{equation}
  [ \mathcal{Q}_j, \mathcal{Q}_k]=0,\qquad j,k\ge 2.
\end{equation}
Furthermore, the Hamiltonian is a member of this series. Direct
computation gives $H=- \mathcal{Q}_2 $.

In our work a special role will be played by the so-called fusion
hierarchy of the transfer matrices. In the following we briefly
introduce these concepts, while omitting many  technical details.

Let $\Lambda_1$ and $\Lambda_2$ be two irreducible
representations of $SU(N)$. It is known that there exists an
$R$-matrix $R^{\Lambda_1,\Lambda_2}(u)$ acting on the tensor product
of the two representations, which is unique up to an overall scaling
and certain shifts in the rapidity parameter, such that for any three
representations $\Lambda_j,\ j=1,2,3$ they satisfy the Yang-Baxter
equation \cite{kulish-resh-sklyanin--fusion}:
\begin{equation}
\label{eq:genYBE}
 R_{23}^{\Lambda_2,\Lambda_3}(v-z)R_{13}^{\Lambda_1,\Lambda_3}(u-z)R_{12}^{\Lambda_1,\Lambda_2}(u-v)
 = R_{12}^{\Lambda_1,\Lambda_2}(u-v)R_{13}^{\Lambda_1,\Lambda_3}(u-z)R_{23}^{\Lambda_2,\Lambda_3}(v-z).
\end{equation}
These $R$-matrices can be obtained by the so-called fusion procedure
\cite{kulish-resh-sklyanin--fusion,junji-suzuki-kuniba-tomoki-fusion}.

In our models  each local spin variable carries the defining
representation of $SU(N)$, therefore we will need $R$-matrices acting
on the tensor product of the defining representation and some other $\Lambda$. 
For these cases we will use the 
short notation $R^{\Lambda}_{12}(u)$.

For each representation $\Lambda$ we  define the transfer matrix
with auxiliary space carrying $\Lambda$ as
\begin{equation}
\label{eq:GenericTmx}
 t^{\Lambda} (u) = \Tr_0 R_{10}^\Lambda (u)R_{20}^\Lambda (u)\dots R_{L0}^\Lambda (u).
\end{equation}

It follows from \eqref{eq:genYBE} that all of these transfer matrices
commute:
\begin{equation}
  [t^\Lambda(u),t^{\Lambda'}(v)]=0.
\end{equation}

The representations of $SU(N)$ can be described by Young diagrams. The
transfer matrices corresponding to rectangular Young diagrams play a
special role in the theory, and we will show that they are central
also for the GETH. For the Young diagram with $a$ rows and $s$
columns the corresponding transfer matrix will be denoted as $t^{(a)}_s(u)$. These
objects satisfy a closed set of functional relations called the Hirota
equation or $T$-system relations; specific details will be given
later, and for reviews see
\cite{zabrodin-hirota-1,suzuki-kuniba-tomoki-t-system-y-system-review}.

The common eigenstates of the transfer matrices can be found by the
(nested) Bethe Ansatz. The actual construction, and hence the discussion of
the GGE and the GETH strongly depends on $N$.
In the next section we review the known results for the case
$N=2$, which corresponds to the XXX Heisenberg spin chain. In
\ref{sec:SU3} we start our discussion of the $SU(3)$-symmetric chain,
which is the main subject of this paper. However, before going to
these special cases we introduce the notion of quasi-local charges,
that are essential for the GETH.

\subsection{Quasi-local charges}

The set of canonical local charges \eqref{Qdef} has been known since
the early days of integrability. On the other hand, the existence and
importance of quasi-local charges was only understood in recent years
\cite{prosen-xxx-quasi,prosen-enej-quasi-local-review}. Here we define
quasi-locality following  the works \cite{prosen-xxx-quasi,prosen-enej-quasi-local-review}.

Consider the physical Hilbert space $\mathcal H_L =
h_1\otimes\dots\otimes h_L = \left(\mathbb C^N\right)^L$, and consider
$\text{End}(\mathcal H_L)$, the space of linear operators over
$\mathcal H_L$. $\text{End}\mathcal (H_L)$ possesses a Hilbert space
structure, under the Hilbert-Schmidt scalar
product, defined as  
\begin{equation}
 \langle A,B \rangle_{\text{HS}} = N^{-L}\Tr \left(A^\dagger B\right), \quad A, B \in\text{End}\left(\mathcal H_L\right).
\end{equation}
The Hilbert-Schmidt norm is  defined as
\begin{equation}
 \| A \|_{\text{HS}}^2 = \langle A,A \rangle_{\text{HS}} = N^{-L} \Tr \left( A^\dagger A \right).
\end{equation}
This normalization is such, that for the identity operator $||1||_{\text{HS}}=1$.

We define the traceless part of an operator $A$ as 
\begin{equation}
 \{ A \} = A - N^{-L} \Tr (A).
\end{equation}
Quasi-locality is defined for traceless operators.

The $L$-dependent operator $\{A\}(L)\in\mathcal H_L$ is called
quasi-local, if it satisfies the following properties: 
\begin{enumerate}
  \item $\{A\}(L)$ is translationally invariant for every $L$. 
 \item In large volumes $\| \{A\} \|_{\text{HS}}^2 \sim L $
 \item For any locally supported $k$-site operator $b= b_k \otimes
   1_{L-k}$ the overlap $\langle b,\{A\} \rangle_{\text{HS}}$
   is asymptotically independent of $L$ in the $L\rightarrow\infty$.
   \end{enumerate}
Note that the quasi-locality only makes sense in the strictly
$L\rightarrow\infty$ limit, because it is the property of the infinite
series of operators $\{A\}(L)$. The strictly local charges obviously
satisfy these requirements.

A quasi-local operator can be written in the form
\begin{equation}
  A(L)=\sum_{x=1}^L a(x),
\end{equation}
where $a(x)$ is the operator density of $A$. It does not have to be
local, but it has to have a finite norm. As an effect, the long-range
contributions to $a(x)$ have (typically exponentially) decreasing amplitudes.

\section{The Heisenberg spin chain}

\label{sec:XXX}

The $SU(2)$-symmetric Heisenberg XXX spin chain is defined conventionally by the Hamiltonian
\begin{equation}
  H_{XXX}=\sum_{j=1}^L \left(
    \sigma_{j}^x\sigma_{j+1}^x+ \sigma_{j}^y\sigma_{j+1}^y+ \sigma_{j}^z\sigma_{j+1}^z-1,
\right),
  \end{equation}
where $\sigma_j^\alpha$ are the Pauli matrices acting on quantum space $j$. 
In this normalization $H_{XXX}=2H$, where $H$ is the Hamiltonian
\eqref{eq:SU3Hamiltonian} at $N=2$.

The exact eigenstates of this model were first found by Bethe
\cite{Bethe-XXX}. They can be characterized by a set of rapidities
$\{\lambda_1,\dots,\lambda_N\}$, which parametrize the quasi-momenta
of interacting spin waves above the reference state $\ket{\emptyset}$, which is chosen
conventionally as the state with all spins up. The un-normalized
eigenstates can be written as
 \begin{equation} 
  \begin{split}
| \boldsymbol \lambda_N\rangle &= \sum_{1\leq n_1<\ldots <n_N\leq L}\ 
  \Psi(n_1,\dots,n_N) \prod_{s=1}^{N}(\sigma_-)_{n_s}\ket{\emptyset}\\
\Psi(n_1,\dots,n_N)&= \sum_{\mathcal P \in \mathcal S^N} \left(\prod_{1\leq r < l \leq N}\frac{\lambda_{\mathcal P (l)}-\lambda_{\mathcal P (r)}-\ii}{\lambda_{\mathcal P (l)}-\lambda_{\mathcal P (r)}}
\prod_{r=1}^N\left(\frac{\lambda_{\mathcal{P}(r)}+\frac\ii2}{\lambda_{\mathcal{P}(r)}-\frac\ii2}\right)^{n_r}\right).
\label{eq:bethe_stateXXX}    
  \end{split}
\end{equation}
Here $\sigma_-$ is the spin lowering operator, and the $n_s$ describe
the positions of the spin waves.

These states are eigenstates of the Hamiltonian if the wave functions
are periodic, which leads to the Bethe equations:
\begin{equation}
    \begin{split}
 \left(\frac{\lambda_j+\frac{\ii}{2}}{\lambda_j-\frac{\ii}{2}}\right)^L &= \prod_{k=1,k\ne j}^N \frac{\lambda_j - \lambda_k +\ii}{\lambda_j - \lambda_k -\ii} \qquad j=1,\dots,N \label{eq:BEXXX}.
    \end{split}
  \end{equation}
The lattice momentum and the energy of a
Bethe state $\ket{\{\lambda_j\}_{j=1}^N}$ are given
by the sum of one particle momentum and energy, respectively: 
\begin{align}
  \label{PE0}
 P &=\sum_{j=1}^N p(\lambda_j), & p(\lambda)& = \ii \log \left(\frac{\lambda_j+\frac{\ii}{2}}{\lambda_j-\frac{\ii}{2}}\right), \\
 E &=\sum_{j=1}^N \varepsilon(\lambda_j), & \varepsilon (\lambda)& =- \frac{1}{\lambda^2+\frac1 4}. 
\end{align}
The eigenvalues of the fundamental transfer matrix \eqref{tdef} can be
computed using Algebraic Bethe Ansatz \cite{Korepin-Book}.  On the
Bethe state $\ket{\{\lambda\}_N}$ the eigenvalues are
\begin{align}
  \label{tu0}
 t(u)=
        \frac{Q_1\left(u-\frac\ii2\right)}{Q_1\left(u+\frac\ii2\right)}+
        \frac{Q_0(u) Q_1\left(u+\frac{3\ii}2\right)}{Q_0(u+\ii) Q_1\left(u+\frac\ii2\right)},
   \end{align}
where
\begin{equation}
 Q_0 (u) = u^L,\qquad
 Q_1 (u) = \prod_{j=1}^N (u-\lambda_j).
\end{equation}
The Bethe states are highest weight with respect to the global $SU(2)$
symmetry; other states in the same multiplet can be obtained by global
spin lowering operators.

\subsection{String hypothesis and Thermodynamic Bethe Ansatz}

In order to study the thermodynamic limit of the model it is important
to know the positions of the Bethe roots in the complex plain. It is
known that in Bethe Ansatz solvable models the roots typically arrange themselves into so-called string
patterns. A string describes a bound state of spin waves. The string
hypothesis states that in the thermodynamic limit the dynamical
processes can be described by concentrating only on the regular string
solutions, neglecting contributions from the rare outlier states.

In the XXX model an $n$-string pattern 
 centered around the rapidity $x\in\valos$ takes the form
\begin{equation}
  x^\ell=x+\ii\left( \frac{n+1}2 -   \ell \right) +\delta^{\ell} , \quad \ell=1,\dots,n.
\end{equation}
Here $\delta^\ell$ are the so-called string deviations which are
exponentially small in large volumes for regular string solutions.

A Bethe state with $N_n$ number of $n$-strings
thus consists of the rapidities
\begin{align}
 \lambda_\alpha^{n,\ell} &=\lambda_\alpha^n +\ii\left( \frac{n+1}2 -   \ell \right) +\delta_{\alpha}^{n,\ell}
                           , \quad \ell=1,\dots,n,\ \quad \alpha = 1, \dots, N_n.
\end{align}
The total number of Bethe roots is computed simply as
\begin{equation}
  \sum_{n=1}^\infty n N_n = N.
\end{equation}

We will be interested in the thermodynamic limit, when
$N\rightarrow\infty$, $L\rightarrow\infty$ with a fixed $N/L$ ratio.
In this limit the
string deviations become exponentially small and according to the string hypothesis
it is sufficient to describe the positions of the string centers.
Then the string centers can be described
by continuous density functions along the real line. These are denoted by
$\rho_n(u)$, and are normalized such that in a large
volume $L$ the total number of $n$-strings
between rapidities $u$ and $u+\Delta u$ is
$L\rho_n(u)\Delta u$.

Analogously to a free system we also introduce hole densities.
For a Bethe state a hole is a position in rapidity space which would
satisfy the Bethe equations but it is not actually a Bethe root. Such
holes can be defined for each particle type and each string pattern,
and in the TDL they are described by the densities $\rho_{h,n}$.

In the thermodynamic limit the Bethe
equations  \eqref{eq:BE} can be transformed into a set of coupled linear integral equations: 
\begin{align}
  \label{rootXXX}
 \rho_{t,n} (u)&= \delta_{n,1}s(u) + s\star\left(\rho_{h,n-1}^{(1)} + \rho_{h,n+1}^{(1)} \right) (u),
\end{align}
where $\rho_{t,n}^{(j)},\ j=1,2$ are the so-called total root densities, defined by
\begin{equation}
 \rho_{t,n}^{(j)}(u) = \rho_{n}^{(j)}(u)+\rho_{h,n}^{(j)}(u),\qquad j=1,2.
\end{equation}
Furthermore, the convolution is understood as
\begin{equation}
 (f\star g)(\lambda) = \int_{-\infty}^\infty d\mu\ f(\lambda-\mu)g(\mu),
\end{equation}
and the integration kernel is
\begin{equation}
  \label{sdef}
  s(\lambda) = \frac{1}{2\cosh(\pi\lambda)}.
\end{equation} 
The integral equations \eqref{rootXXX} are symbolically depicted on
Fig. \ref{fig:xxxtba}.

\begin{figure}
  \centering
  \begin{tikzpicture}
      \draw[thick,fill] (0,0) circle (0.1);   
    \foreach \i in {0,1,2,3,4}
          {
      \begin{scope}[xshift=\i cm]
      \draw[thick] (0,0) circle (0.1);       
\draw[thick] (0.1,0) to (0.9,0);
    \end{scope}
  }
  \node at (5.5,0) {$\dots$};
  \end{tikzpicture}
  \caption{TBA diagram of the XXX model. Each node corresponds to a
    string type, and the links denote the convolution kernels in
    Eq. \eqref{rootXXX}. The filling of the first node signals the
    source term for $\rho_{t,1}(u)$.}
  \label{fig:xxxtba}
\end{figure}
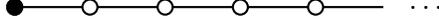

\subsection{String-charge relations}

The main question of the GETH is to what extent a given set of charges
determines the Bethe root densities. Let us first focus on the set of
strictly local charges $\{\mathcal{Q}_\alpha\}$ defined in \eqref{Qdef}. Instead
of dealing with the discrete set it is useful to consider the
generating function $X_1(u)$ defined formally as
\begin{equation}
  X_1(u)= (-\ii)\partial_u \log t (u)=\sum_{j=2}^\infty \frac{u^{j-2}}{(j-2)!}\mathcal{Q}_j.
\end{equation}
It was shown in \cite{essler-xxz-gge,JS-oTBA} that the eigenvalues of
this operator are asymptotically
\begin{equation}
  \frac{1}{2\pi L}X_1(u)=s\star (\rho_{h,1}+a_1),
\end{equation}
where 
\begin{equation}
  a_1(u)=\frac{1}{2\pi}\frac{1}{u^2+\frac{1}{4}}.
\end{equation}
It follows that this set of charges is not sufficient to determine all
Bethe root densities: $X_1(u)$ only fixes the hole density of the $1$-strings.

This situation was remedied in \cite{JS-CGGE}
(see also \cite{jacopo-massless-1,enej-gge}), where it was shown that
the recently introduced quasi-local charges 
\cite{prosen-xxx-quasi,prosen-enej-quasi-local-review}
contain just enough information to fix all the root densities.

The quasi-local charges are obtained from the fusion hierarchy of the
transfer matrices.
Let us define the higher spin Lax operators with spin $s=m/2,\ m\in\mathbb{N}$
acting on the tensor product $V_a\otimes V_j=\complex^{2s+1}\otimes \complex^2$ as
\begin{equation}
  \label{su2lax}
  \mathcal{L}^m_{a,j}(u)=\frac{u+\ii \frac{1}{2}+\ii\boldsymbol S_a \cdot \boldsymbol \sigma_j}{u+\ii\frac{m+1}{2}},
\end{equation}
where $\boldsymbol S_a$ stands for the vector of the spin-$s$
generators of $SU(2)$, and $\boldsymbol \sigma_j$ is the vector
constructed out of Pauli matrices.
Our conventions for the Lax operators differs slightly from the one
used in \cite{prosen-xxx-quasi,JS-CGGE,jacopo-massless-1}.

We define the corresponding transfer matrices (TMs)
\begin{equation}
  \label{highertdef}
 t_m(u) = \Tr_a   \mathcal{L}^m_{a,1}(u)  \mathcal{L}^m_{a,2}(u)\dots   \mathcal{L}^m_{a,L}(u).
\end{equation}
It can be shown that these operators form a commuting family:
\begin{equation}
  [t_m(u),t_{n}(v)]=0.
\end{equation}
The spin-$s$ representations correspond to Young diagrams with 1 row
and $m$ columns, therefore using the notations of Section
\ref{sec:SUNintro} we have the identification $t_m(u)=t^{(1)}_m(u)$.

These transfer matrices satisfy a set of
functional equations called the Hirota equation or $T$-system:
\begin{equation}
  \label{su2t}
  t_m\left(u+\frac\ii2\right)t_m\left(u-\frac\ii2\right)=t_{m+1}(u)t_{m-1}(u)+\phi_m(u).
\end{equation}
Here $\phi_m(u)$ is a scalar function (independent of the Bethe state)
given by
\begin{equation}
  \phi_m(u)=\frac{Q_0\left(u-\ii\frac{m}{2}\right)}{Q_0\left(u+\ii\frac{im}{2}\right)}.
\end{equation}
Furthermore, the initial value for the recursion is $t_0(u)=1$.

The eigenvalues on the common eigenstates are \cite{bazhanov2007analytic}
\begin{equation}
  \label{tmueig}
  t_m(u)=\frac{Q_1\left(u-\ii\frac m2\right)Q_1\left(u+\ii\frac {m+2}2\right)}{Q_0\left(u+\ii\frac {m+1}2\right)} \sum_{k=0}^m \frac{Q_0\left(u+\ii\frac {m+1}2-\ii k \right)}{Q_1\left(u+\ii\frac {m}2 -\ii k\right)Q_1\left(u+\ii\frac {m+2}2 -\ii k\right)}.
\end{equation}
It can be checked by direct computation that these eigenvalues satisfy
the $T$-system \eqref{su2t}.

A key role is played by the operators $X_m(u)$ defined formally by
\begin{equation}
  \label{xmdef}
   X_m (u) = (-\ii)\partial_u \log t_m (u).
\end{equation}
It was shown in \cite{prosen-xxx-quasi} that the traceless operators
$\{X_m(u)\}$ are quasi-local in the thermodynamic limit, if $u$ is
within the physical strip $\mathcal{P}$ defined as
\begin{equation}
  \mathcal{P}\equiv \left\{u\in\complex,\quad |\Im(u)|<\frac{1}{2}\right\}.
\end{equation}
For a more precise treatment of $X_m(u)$ see the next subsection.

Regarding the eigenvalues of the operators  $X_m(u)$ it was obtained in \cite{JS-CGGE}
\begin{equation}
  \label{su2sc0}
\rho_{h,n}=a_n- \frac{1}{2\pi L}(X_n^{[+]}+X_n^{[-]}),
\end{equation}
where
\begin{equation}
  \label{andef}
   a_n (\lambda) = \frac{1}{2\pi}\frac{n}{\lambda^2+\frac{n^2}{4}},
\end{equation}
and we introduced the short-hand notation:
\begin{equation}
\begin{split}
 f^{[\pm]}(u) &= \lim_{\eps\to 0} f\left(u\pm\frac{\ii}{2}\mp\eps\right)
\end{split}
\end{equation}
Making use of the system \eqref{rootXXX} an equivalent form can be derived:
\begin{equation}
\label{su2sc}
     \rho_{n}= \frac{1}{2\pi L}\left(X_n^{[+]}+X_n^{[-]}-X_{n-1}-X_{n+1}\right).
\end{equation}
Thus the higher spin transfer matrices contain just enough information to determine
all the root densities.

Let us comment on some important differences between the finite volume
situation and the thermodynamic limit. 

In finite volume it is known that the spectrum of the transfer matrix
is simple \cite{tarasov-varchenko-xxx-simple}. This means 
that if two states possess the same eigenvalue function $t(u)$ then
they belong to the same $SU(2)$ multiplet. This also implies that if
the spin quantum number $S_z$ is also specified, then the function
$t(u)$ uniquely determines all Bethe roots. A practical procedure for
recovering the Bethe roots from $t(u)$ is explained for example in
\cite{XXXmegoldasok}.

Based on this, it might seem surprising, that the complete family
$\{t_m(u)\}$ of TM's is needed in the $L\to\infty$ limit.
Eq. \eqref{su2t} shows that the higher spin transfer matrices are
algebraically dependent, and they can be expressed using the
fundamental $t(u)$, thus the information stored in the complete family $\{t_m(u)\}$ might
seem redundant.

The explanation for this apparent paradox is the following. 
Even though at
finite $L$ the function $t(u)$ is enough the recover all Bethe roots,
typically a large amount of information is lost by the thermodynamic
limit. On a technical level this happens because for almost all $u$ one of the
two terms in the expression \eqref{tu0}  becomes dominant, and the other one becomes
exponentially suppressed as $L\to\infty$. Thus it becomes impossible
to reconstruct the root densities once the thermodynamic limit has
been taken.
However, further information is preserved in the other
members of the family $\{t_m(u)\}$, such that eventually the set $\{X_m(u)\}$
remains complete in the TDL.

It is our goal to extend this picture to the higher rank
cases. We will show that the situation is analogous to the
$SU(2)$ case: the complete set of charges is obtained from the fusion
hierarchy of the transfer matrices. However, before turning to the
$SU(3)$ case 
we repeat some of the computations already present in the literature.
We will use a slightly different approach, which is more convenient for
later generalizations to the higher rank cases.

\subsection{Inversion and quasi-locality}

\label{sec:invqloc}

In our computations an important role will be played by certain
asymptotic inversion relations. The main goal is to find some
operators that invert the transfer matrices, such that the formal
expressions $\partial_u \log(t_m(u))=(t_m(u))^{-1}\partial_u t_m(u)$
can be made sense using well defined local objects. The transfer
matrices themselves can not be inverted in the desired way, but there exist
asymptotic inversion relations which hold in the $L\to\infty$
limit. 

Such inversion relations are closely tied to the fusion of transfer
matrices. Their study has a long history, which goes back to
the seminal work of Baxter \cite{Baxter-Book}. We do not attempt a
thorough review of this topic, we merely mention a few
references. For example, we will rely on some basic arguments about the
inversion that already appeared in the
 work \cite{pearce-inversion} of Pearce. Closely related ideas
and methods appeared among others in \cite{resh-fusion,kuniba-suzuki-fusion-2}.

All of the asymptotic inversions that we will treat are based on a
local inversion. In the case of the XXX model
the Lax operators \eqref{su2lax} satisfy the local inversion 
\begin{equation}
  \label{localinversion}
 \mathcal{L}^m(u)  \mathcal{L}^m(-u)=
  \frac{-u^2-( \boldsymbol S_a \cdot \boldsymbol \sigma_j+\frac{1}{2})^2}{-u^2-(s+\frac{1}{2})^2}
  =1.
\end{equation}
This is most easily seen using the relation
\begin{equation}
  \label{Sasj}
  \boldsymbol S_a \cdot \boldsymbol \sigma_j+\frac{1}{2}=
  \frac{(  \boldsymbol S_a + \boldsymbol \sigma_j)^2-(\boldsymbol
    S_a)^2-(\boldsymbol \sigma_j)^2+1}{2}.
\end{equation}
From the known values of the Casimir operators we compute the two possible
eigenvalues of the operator in \eqref{Sasj} as $\pm (s+1/2)$,
which implies the inversion \eqref{localinversion}.

Let us define a new family of transfer matrices, which are obtained
simply by space reflection:
\begin{equation}
  \label{highertdef2}
\bar t_m(u) = \Tr_a \left[  \mathcal{L}^m_{a,L}(u)  \mathcal{L}^m_{a,L-1}(u)\dots   \mathcal{L}^m_{a,1}(u)\right].
\end{equation}
Using partial transpose in auxiliary space they can be expressed as
\begin{equation}
  \label{highertdef3}
  \bar t_m(u) = \Tr_a\left[
  \left( \mathcal{L}^m_{a,1}(u) \right)^{t_a}
  \left( \mathcal{L}^m_{a,2}(u) \right)^{t_a}
  \dots\relax
  \left( \mathcal{L}^m_{a,L}(u) \right)^{t_a}
\right].
\end{equation}
Furthermore, they can be related to the standard transfer matrices by a simple
crossing transformation. All representations of $SU(2)$ are
self-conjugate, therefore there exists a charge conjugation operator
$C$  acting on the auxiliary space, such that it satisfies $C^2=1$ and
performs conjugation as
\begin{equation}
  \boldsymbol S^t= \boldsymbol S^*=- C\boldsymbol S  C.
\end{equation}
Therefore
\begin{equation}
    \left( \mathcal{L}^m_{a,1}(u)\right)^{t_a}=C_a
\frac{u+\ii\frac{1}{2}-\ii\boldsymbol S_a \cdot \boldsymbol \sigma_j}{u+\ii\frac{m+1}{2}}
C_a=
\frac{u-\ii\frac{m-1}{2}}{u+\ii\frac{m+1}{2}}C_a
 \mathcal{L}^m_{a,1}(-u-\ii)C_a.
\end{equation}
The $C$-operators drop out when we compute the transfer matrices, we
thus obtain the global crossing relation
\begin{equation}
  \label{tmcrossing}
 \bar t_m(u) =  \left(\frac{u-\ii\frac{m-1}{2}}{u+\ii\frac{m+1}{2}}\right)^L
t_m(-u-\ii).
\end{equation}
In the following we will use
the direct definition
\eqref{highertdef2}, because  it is more advantageous for our purposes.

\begin{thm}
  The following asymptotic inversion identity holds \cite{prosen-xxx-quasi}:
\begin{equation}
  \label{su2inv}
  \bar t_m(-u)t_m(u)\approx 1,\qquad u\in \mathcal{P}.
\end{equation}
Here and in the following an asymptotic identity $A\approx B$ means
that
\begin{equation}
||A-B||=\ordo(e^{-\alpha L}),\quad \alpha\in\valos^+.
\end{equation}
\end{thm}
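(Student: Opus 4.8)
The plan is to collapse the product $\bar t_m(-u)\,t_m(u)$ into a single transfer matrix over a doubled auxiliary space, and then to isolate one distinguished ``singlet'' channel on which the local inversion \eqref{localinversion} acts exactly as the identity; everything beyond that channel will be shown to be exponentially suppressed. First I would introduce two independent copies $V_a,V_b\cong\complex^{m+1}$ of the auxiliary space and write the product as one trace. Using the partial-transpose form \eqref{highertdef3} for the reflected transfer matrix, so that $\bar t_m(-u)=\Tr_b\big[\prod_{j=1}^L(\mathcal{L}^m_{b,j}(-u))^{t_b}\big]$, together with $t_m(u)=\Tr_a\big[\prod_{j=1}^L\mathcal{L}^m_{a,j}(u)\big]$, both strings are ordered from site $1$ to $L$. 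Since factors carrying different quantum-space indices and different auxiliary labels commute, I can interleave the two strings site by site and obtain
\[
\bar t_m(-u)\,t_m(u)=\Tr_{ab}\!\left[\mathbb{L}_1(u)\,\mathbb{L}_2(u)\cdots\mathbb{L}_L(u)\right],\qquad \mathbb{L}_j(u)=\big(\mathcal{L}^m_{b,j}(-u)\big)^{t_b}\,\mathcal{L}^m_{a,j}(u),
\]
a genuine transfer matrix with auxiliary space $V_a\otimes V_b$ and local Lax operator $\mathbb{L}_j(u)$ acting on $V_a\otimes V_b\otimes h_j$.

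Next I would introduce the maximally entangled covector $\langle\Omega|=\sum_i\langle i|_a\langle i|_b$ on $V_a\otimes V_b$. Moving the partial transpose onto the $a$-space through $\langle\Omega|(1_a\otimes N_b)=\langle\Omega|(N^{t}_a\otimes 1_b)$ converts $(\mathcal{L}^m_{b,j}(-u))^{t_b}$ into $\mathcal{L}^m_{a,j}(-u)$, so that the local inversion \eqref{localinversion} yields the exact, site-by-site identity
\[
\langle\Omega|\,\mathbb{L}_j(u)=\langle\Omega|\,\mathcal{L}^m_{a,j}(-u)\,\mathcal{L}^m_{a,j}(u)=\langle\Omega|\cdot 1_{h_j}.
\]
Decomposing $V_a\otimes V_b=\complex|\Omega\rangle\oplus W$ with $W=|\Omega\rangle^{\perp}$, the singlet channel therefore contributes precisely the identity,
\[
\frac{\langle\Omega|\,\mathbb{L}_1(u)\cdots\mathbb{L}_L(u)\,|\Omega\rangle}{\langle\Omega|\Omega\rangle}=1_{\mathcal{H}_L},
\]
and the asymptotic inversion reduces to the single claim that the complementary trace $R_L=\Tr_W\!\big[\mathbb{L}_1(u)\cdots\mathbb{L}_L(u)\big]$ obeys $\|R_L\|_{\text{HS}}=\ordo(e^{-\alpha L})$.

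The remaining and hardest step is this bound on $R_L$, which I would establish by a dominant-eigenvalue (spectral-gap) mechanism. Concretely I would express $\|R_L\|_{\text{HS}}^2$ as a trace of the $L$-th power of a single-site ``super-transfer'' operator on $\text{End}(V_a\otimes V_b)$, obtained by contracting one physical site in $R_L^\dagger R_L$, with the singlet sectors projected out. The assertion is that for $u$ inside the physical strip $\mathcal{P}$ the leading eigenvalue of this super-operator restricted to $W$ has modulus strictly below that of the singlet channel, so the mismatch decays exponentially with a rate $\alpha>0$ equal to the logarithm of the spectral gap. This is exactly where $u\in\mathcal{P}$ is essential: outside the strip $\mathcal{L}^m(u)$ develops its pole at $u=-\ii(m+1)/2$ and the non-singlet channels cease to be subdominant, so the gap—and with it the inversion—closes. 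Verifying this gap uniformly on compact subsets of $\mathcal{P}$, and thereby the exponential estimate, is the technical core of the argument; by contrast, all of the preceding algebraic reductions are exact.
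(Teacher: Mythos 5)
Your proposal is correct and follows essentially the same route as the paper: the exact singlet channel you isolate via the local inversion \eqref{localinversion} is precisely the delta-state eigenvector of the crossed-channel Quantum Transfer Matrices used in the text, and your ``super-transfer operator'' on $\text{End}(V_a\otimes V_b)$ is the paper's four-site QTM. Like the paper, you reduce everything to the dominance of that eigenvalue for $u\in\physstr$ and leave its verification (carried out in \cite{prosen-xxx-quasi} analytically up to $s=3/2$ and numerically beyond) as the technical core.
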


The above Theorem was treated in detail in \cite{prosen-xxx-quasi}. We
also sketch the proof using our conventions. First we shed some light on why the
inversion relation holds.

Let us think about a different situation, and consider the monodromy matrices, i.e. the
expressions \eqref{highertdef}-\eqref{highertdef2} without the
trace in auxiliary space:
\begin{equation}
  \begin{split}
M^m(u)&= \mathcal{L}^m_{a,1}(u)  \mathcal{L}^m_{a,2}(u)\dots   \mathcal{L}^m_{a,L}(u)\\
    \bar M^m(u)&= \mathcal{L}^m_{a,L}(u)  \mathcal{L}^m_{a,L-1}(u)\dots   \mathcal{L}^m_{a,1}(u).
  \end{split}
\end{equation}
In this case, the local inversion identity \eqref{localinversion} immediately gives a
global and exact inversion:
\begin{equation}
   \bar M^m(-u) M^m(u)=1.
\end{equation}
The local steps leading to this global inversion are depicted in Fig. \ref{fig:localinversion}.

For the transfer matrices \eqref{su2inv} the difficulty lies in the fact that
the trace has been taken
in auxiliary space. In this case we can not expect an exact
inversion. Nevertheless, for certain values of $u$ and for large
enough volumes the ``boundary effect'' of taking the trace does not
propagate into the bulk of the chain. More precisely, it only
causes an exponentially small effect. 

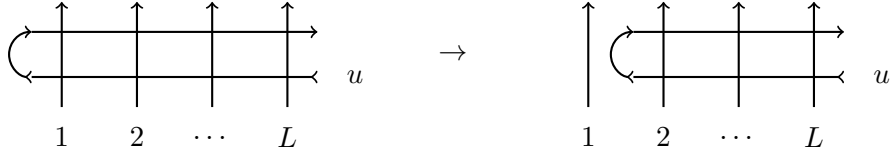
\begin{figure}
  \centering
  \begin{tikzpicture}
     \node at (4.3,0) {$u$};
    \draw [thick,-<] (0,0) to (3.8,0);
    \draw [thick,<-] (3.8,0.6) to (0,0.6);
    \draw [thick,>->] (0,0) [out=180,in=270] to (-0.3,0.3) [out=90,in=180]
    to (0,0.6);
    \foreach \i in {0,1,2,3}
    {
      \begin{scope}[xshift=\i cm]
      \draw [thick,->] (0.4,-0.4) to (0.4,1);      
      \end{scope}  
    }
    \node at (0.4,-0.8) {$1$};
    \node at (1.4,-0.8) {$2$};
        \node at (2.4,-0.8) {$\dots$};
    \node at (3.4,-0.8) {$L$};

\node at (5.6,0.3) {$\to$};
    
    \begin{scope}[xshift=7cm]
       \node at (4.3,0) {$u$};
         \draw [thick,-<] (1,0) to (3.8,0);
    \draw [thick,<-] (3.8,0.6) to (1,0.6);
    \draw [thick,>->] (1,0) [out=180,in=270] to (0.7,0.3) [out=90,in=180]
    to (1,0.6);
    \foreach \i in {0,1,2,3}
    {
      \begin{scope}[xshift=\i cm]
      \draw [thick,->] (0.4,-0.4) to (0.4,1);      
      \end{scope}  
    }
     \node at (0.4,-0.8) {$1$};
    \node at (1.4,-0.8) {$2$};
        \node at (2.4,-0.8) {$\dots$};
    \node at (3.4,-0.8) {$L$};
    \end{scope}
  \end{tikzpicture}
  \caption{Graphical representation of the global inversion, which
    follows from consecutive local inversion steps. The
    product of monodromy matrices is equal to the identity after
    disentangling the local Lax operations.}
  \label{fig:localinversion}
\end{figure}

A rigorous proof can be given by computing the norm of the difference
\begin{equation}
  \label{ezkell}
  \begin{split}
     || \bar t_m(-u)t_m(u)-1||^2=&
2^{-L}    \text{Tr}\left(( \bar t_m(-u)t_m(u))^\dagger \bar t_m(-u)t_m(u)\right)-\\
&-2^{-(L-1)}\Re\left[    \text{Tr}\left( \bar t_m(-u)t_m(u)\right)\right]
      +1.
        \end{split}
\end{equation}
The adjoints of the transfer matrices can be computed as
\begin{equation}
  \left(t_m(u)\right)^\dagger=\bar t_m(-u^*).
\end{equation}
Our aim is to show that
\begin{equation}
  \label{su224}
  \begin{split}
2^{-L}    \text{Tr}\left( \bar t_m(-u)t_m(u)\right)&\approx 1\\
    2^{-L}    \text{Tr}\left(t_m(u^*) \bar t_m(-u^*) \bar t_m(-u)t_m(u)\right)&\approx 1,
  \end{split}
\end{equation}
which will imply the asymptotic inversion. 
These two traces can be evaluated conveniently by building a
corresponding 2D vertex model, see Figs. \ref{fig:localinversion}-
\ref{fig:localinversion2}. Here the action of the transfer matrices
corresponds to adding a new row to the lattice, therefore
we get two lattices of size $2\times L$ and $4\times L$. These
partition functions can be evaluated in the ``crossed channel'' by
building column-to-column transfer matrices. These are conventionally
called Quantum Transfer Matrices (QTM's). The traces are evaluated using the eigenvalues of these
QTM's. It follows from the local inversion relations, that the local
delta-states given by $\ket{\delta}=\sum_{j} \ket{j}\otimes\ket{j}$ and
$\ket{\delta}\otimes\ket{\delta}$ are eigenstates of the two-site and
four-site QTM's, respectively. Their eigenvalues are simply 2, due to
the local inversion and the trace over the physical space,  see again
Figs. \ref{fig:localinversion}-\ref{fig:localinversion2}. The
identities in \eqref{su224} are rigorously proven by showing that the
delta-states are the dominant eigenstates. This requires a
diagonalization of the QTM's in question. In the case of the two-site
QTM this was performed in
\cite{prosen-xxx-quasi} using the $SU(2)$ algebra, whereas for
the four-site case it was done analytically up to $s=3/2$ and
numerically for larger values of $s$. 

We put forward that essentially the same steps are needed to prove the
quasi-locality of the resulting charges. The reason for this is that
based on the inversion we can write the asymptotic identity
\begin{equation}
  \label{xmdef2}
   X_m (u) \approx (-\ii)\bar t_m(-u)\partial_u t_m (u),
\end{equation}
and here the derivation $\partial_u$ acts only locally: the resulting
operator will be translationally invariant and formally extensive. It
remains to be shown that the HS norm of the traceless part scales
linearly with the volume; this does not follow from \eqref{xmdef2},
and only holds for $u\in\physstr$.
For the computation of the HS norm one needs the same lattices
of size $2\times L$ and $4\times L$ which were constructed above. 
The proof of quasi-locality follows relatively easily once the leading
eigenvectors of the QTM's are found to be the delta states. This
procedure is described in
\cite{prosen-xxx-quasi,prosen-enej-quasi-local-review}, and we also
explain it in Appendix \ref{sec:proofs} with the technical details in the case of the
$SU(3)$-symmetric model.

We note that using the crossing relation \eqref{tmcrossing}  the
asymptotic inversion \eqref{su2inv} can be written in a form
equivalent to the l.h.s. of the fusion relation \eqref{su2t} with a
shifted rapidity. In this form the asymptotic inversion states that on
the r.h.s. the scalar (state independent) part will be dominant in the thermodynamic limit.

\begin{figure}
  \centering
  \begin{tikzpicture}
    \foreach \j in {0,1.2}
    {
      \begin{scope}[yshift=\j cm]
            \draw [thick,-<] (0,0) to (3.8,0);
    \draw [thick,<-] (3.8,0.6) to (0,0.6);
    \draw [thick,>->] (0,0) [out=180,in=270] to (-0.3,0.3) [out=90,in=180]
    to (0,0.6);
  \end{scope}
}
\node at (4.3,0) {$u$};
\node at (4.3,1.2) {$-u^*$};
    \foreach \i in {0,1,2,3}
    {
      \begin{scope}[xshift=\i cm]
      \draw [thick,->] (0.4,-0.4) to (0.4,2.2);      
      \end{scope}  
    }
    \node at (0.4,-0.8) {$1$};
    \node at (1.4,-0.8) {$2$};
        \node at (2.4,-0.8) {$\dots$};
    \node at (3.4,-0.8) {$L$};
    
    \begin{scope}[xshift=6cm]
      \node at (4.3,0) {$u$};
\node at (4.3,1.2) {$-u^*$};
   \foreach \j in {0,1.2}
    {
      \begin{scope}[yshift=\j cm]     
         \draw [thick,-<] (1,0) to (3.8,0);
    \draw [thick,<-] (3.8,0.6) to (1,0.6);
    \draw [thick,>->] (1,0) [out=180,in=270] to (0.7,0.3) [out=90,in=180]
    to (1,0.6);
      \end{scope}
}
    \foreach \i in {0,1,2,3}
    {
      \begin{scope}[xshift=\i cm]
      \draw [thick,->] (0.4,-0.4) to (0.4,2.2);      
      \end{scope}  
    }
     \node at (0.4,-0.8) {$1$};
    \node at (1.4,-0.8) {$2$};
        \node at (2.4,-0.8) {$\dots$};
    \node at (3.4,-0.8) {$L$};
    \end{scope}
  \end{tikzpicture}
  \caption{Graphical interpretation of the proof of the second equation in
    \eqref{su224}. The trace of the four transfer matrices in question
    is evaluated in the crossed channel by building a four-site Quantum
    Transfer Matrix, which adds one column to the diagram. A specific
    eigenvector of this QTM is the product of two delta-states: this
    state has eigenvalue 2, which follows from the local inversion as
    depicted in the graph, after taking the trace in the vertical direction. The desired identity in
    \eqref{su224} holds if this is the leading eigenvalue.  }
  \label{fig:localinversion2}
\end{figure}
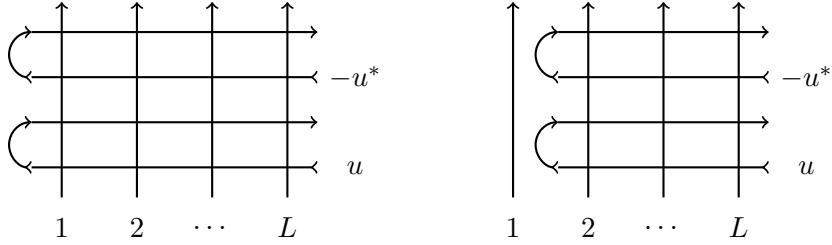

The inversion relation has important consequences for the transfer
matrix eigenvalues. Eq. \eqref{su2inv}
has to hold on the level of eigenvalues for {\it almost all}
states. We now analyze the implications of this, by computing the
products $\bar t_m(-u)t_m(u)$.

The eigenvalues of the space reflected TM could be
computed from the crossing relation \eqref{tmcrossing}, but there is a
more direct way. It is known that in Bethe Ansatz space reflection can be represented
by the change $\{\lambda\}_N\to\{-\lambda\}_N$. The TM eigenvalues
involve ratios of $Q$-functions with certain shifts.
Changing the sign of both $u$ and all the rapidities is equivalent to
changing the signs of all the shift parameters.  Thus the eigenvalues
of $\bar t_m(-u)$ are immediately found from \eqref{tmueig}:
\begin{equation}
  \label{tmueig2}
  \bar  t_m(-u)=
  \frac{Q_1\left(u+\ii\frac m2\right)Q_1\left(u-\ii\frac {m+2}2\right)}{Q_0\left(u-\ii\frac {m+1}2\right)}
\sum_{k=0}^m \frac{Q_0\left(u-\ii\frac {m+1}2+\ii k \right)}{Q_1\left(u-\ii\frac {m}2 +\ii k\right)Q_1\left(u-\ii\frac {m+2}2 +\ii k\right)}.
\end{equation}
Let us now consider the product of the eigenvalues $\bar
t_m(-u)t_m(u)$. Expanding the product we obtain a sum of $(m+1)^2$
terms, each of which involves ratios of $Q$-functions. Inspection
shows that among these $(m+1)^2$ terms there will be a single one
which gives identically 1; this will come from the summands with index
$k=0$. All the remaining terms are ratios of $Q$-functions, which are
 exponentially increasing or decreasing with $L$, depending on
 $u$. It follows from the inversion identity that if $u$ is within the
 physical strip, then all of these terms have to be exponentially
 decreasing. This also implies, that from the $(m+1)$ terms in the
 eigenvalues of $t_m(u)$
 the $k=0$ term has to be the leading one for almost all states if $u\in\physstr$.
We can thus write the explicit formula
 \begin{equation}
   t_m(u)\approx\frac{Q_1\left(u+\ii\frac m2\right)}{Q_1\left(u-\ii\frac
       m2\right)},\qquad
   u\in\physstr,\quad m=1,2,\dots
 \end{equation}
These relations play an essential role in establishing the
string-charge identities \eqref{su2sc0}, see the original papers \cite{JS-CGGE,jacopo-massless-1}.

We will show that similar steps are needed also in the
$SU(3)$-symmetric model. That model has a more complicated Bethe Ansatz solution
and corresponding fusion hierarchy of transfer matrices, nevertheless
the inversion relations take an identical form, and are equally important for the derivations of the
string-charge relations.

\section{The $SU(3)$-symmetric Lai-Sutherland model}

\label{sec:SU3}

In the $SU(3)$ symmetric Lai-Sutherland model the eigenstates are
constructed as excitations over the reference state, which can be
chosen for example as
$\ket{\emptyset}=\ket{1}\otimes\dots\otimes\ket{1}$, where $\{\ket{1},\ket{2},\ket{3}\}$ is an orthonormal basis of $\mathbb{C}^3$. One-particle excitations
are spin waves which carry an internal degree of freedom
corresponding to the polarization, thus carrying the defining
representation of $SU(2)$. General multi-particle excited states can be
considered as interacting spin waves, and each state is described by
the momenta of the particles and also by an auxiliary wave function
describing the orientation in the resulting internal space.

Corresponding to this physical picture, the Bethe states of this model
can be characterized by two sets of rapidity parameters
$\{\lambda_j\}_{j=1}^N$ and $\{\mu_j\}_{j=1}^M$. Here $N$ is the
number of physical particles and the $\lambda_j$ describe their
quasi-momenta. Further, the secondary (or magnonic) rapidities
$\{\mu_j\}_{j=1}^M$ describe the orientation in the internal space;  
 they can be understood as the Bethe rapidities of an auxiliary spin
 chain problem.
The $GL(3)$ global quantum numbers are $(L-N,N-M,M)$, and it is
required that $N\le 2L/3$ and $M\le N/2$.

The un-normalized real space wave functions can be written as
\begin{equation}
  \begin{split}
| \boldsymbol \lambda_N, \boldsymbol \mu_M\rangle = \sum_{1\leq n_1<\ldots <n_N\leq L}\ \sum_{1\leq m_1<\ldots <m_M\leq N}\ 
\sum_{\mathcal P \in \mathcal S^N} \left(\prod_{1\leq r < l \leq N}\frac{\lambda_{\mathcal P (l)}-\lambda_{\mathcal P (r)}-\ii}{\lambda_{\mathcal P (l)}-\lambda_{\mathcal P (r)}}\right)\nonumber\\
\times \braket{\boldsymbol m|\boldsymbol \lambda_{\mathcal P},  \boldsymbol \mu}
\prod_{r=1}^N\left(\frac{\lambda_{\mathcal{P}(r)}+\frac\ii2}{\lambda_{\mathcal{P}(r)}-\frac\ii2}\right)^{n_r}
\prod_{r=1}^{M}(E_{32})_{m_r}\prod_{s=1}^{N}(E_{21})_{n_{s}}\ket{\emptyset}\,,
\label{eq:bethe_state}    
  \end{split}
\end{equation}
where we used the elementary matrices $E_{ji}=\ket{j}\bra{i}$. The wave
function amplitudes are given by
\begin{equation}
  \begin{split}
\braket{\boldsymbol m|\boldsymbol \lambda_{\mathcal P}, \boldsymbol  \mu} &=
\sum_{\mathcal R \in \mathcal S^M} A(\boldsymbol \lambda_{\mathcal R})
\prod_{\ell=1}^M F_{\boldsymbol \lambda_{\mathcal P}}(\mu_{\mathcal R(\ell)}; m_\ell)\,,\\
F_{\boldsymbol \lambda}(\mu,s)&=\frac{-\ii }{\mu - \lambda_s - \frac\ii2}
\prod_{n=1}^{s-1} \frac{\mu -  \lambda_n + \frac\ii2}{\mu - \lambda_n - \frac\ii2}\,,\\
A(\lambda) &=\prod_{1\leq r < l \leq M} \frac{\mu_l-\mu_r- \ii}{\mu_l-\mu_r}\,.    
  \end{split}
\end{equation}
It follows from the periodicity of the wave function that the
two sets of rapidities are solutions to the following Bethe equations:
  \begin{equation}
    \begin{split}
 \left(\frac{\lambda_j+\frac{\ii}{2}}{\lambda_j-\frac{\ii}{2}}\right)^L
 &=
 \prod_{k=1,k\ne j}^N \frac{\lambda_j - \lambda_k +\ii}{\lambda_j -  \lambda_k -\ii}
 \prod_{k=1}^M \frac{\lambda_j - \mu_k -\frac{\ii}2}{\lambda_j - \mu_k +\frac{\ii}2} , \qquad j=1,\dots,N \label{eq:BE} \\
 1 &= \prod_{k=1}^M \frac{ \mu_j -\lambda_k-\frac{\ii}2}{\mu_j-\lambda_k +\frac{\ii}2} \prod_{k=1,k\ne j}^N \frac{\mu_j - \mu_k +\ii}{\mu_j - \mu_k -\ii}, \qquad j=1,\dots, M.
    \end{split}
  \end{equation}
The lattice momentum and the energy of a
Bethe state $\ket{\{\lambda_j\}_{j=1}^N,\{\mu_j\}_{j=1}^M}$ is given
by the sum of one particle momentum and energy, respectively: 
\begin{align}
  \label{PE}
 P &=\sum_{j=1}^N p(\lambda_j), & p(\lambda)& = \ii \log \left(\frac{\lambda_j+\frac{\ii}{2}}{\lambda_j-\frac{\ii}{2}}\right), \\
 E &=\sum_{j=1}^N \varepsilon(\lambda_j), & \varepsilon (\lambda)& =
            -   \frac{1}{\lambda^2+\frac{1}{4}}.
    \label{Edef}
\end{align}
Note that both the momentum and the energy depend only on the Bethe
roots of the first type. 

The eigenvalues of the fundamental transfer matrices \eqref{tdef} on Bethe states can be
computed as \cite{kulish-resh-glN}
\begin{align}
  \label{tu1}
 t (u)&= \frac{Q_1\left(u-\frac\ii2\right)}{Q_1\left(u+\frac\ii2\right)}+ \frac{Q_0(u) Q_1\left(u+\frac{3\ii}2\right) Q_2(u)}{Q_0(u+\ii) Q_1\left(u+\frac\ii2\right) Q_2(u+\ii)} + \frac{Q_0(u) Q_2\left(u+2\ii\right)}{Q_0(u+\ii) Q_2\left(u+\ii\right)},
\end{align}
where
\begin{equation}
 Q_0 (u) = u^L   
\end{equation}
and
\begin{equation}
 Q_1 (u) = \prod_{j=1}^N (u-\lambda_j) ,\qquad
 Q_2 (u) = \prod_{j=1}^M (u-\mu_j)  
\end{equation}
are the  $Q$-functions associated with the first and second level
Bethe roots.

The eigenvalues of the local charges $\mathcal{Q}_k$ can be computed from
\eqref{tu1} using the definition \eqref{Qdef}. It follows from the
presence of the $Q_0(u)$ factor in the second and third terms of
\eqref{tu1} that only the first term will contribute to the eigenvalue
of $\mathcal{Q}_k$ as long as $k<L$. This implies that the local
charges only depend on the first level Bethe roots
$\{\lambda_j\}_{j=1}^N$. This already indicates that the local charges
cannot give a full description of the states. Note that \eqref{Edef} can be derived 
immediately from \eqref{tu1}.

\subsection{String hypothesis and Thermodynamic Bethe Ansatz}

In the Lai-Sutherland model both the first and second type of rapidities can
form strings, and they form the same patterns in the complex plain.
A Bethe state with $M_n^{(1)}$ and $M_n^{(2)}$ number of $n$-strings
for the first and second type of particles thus consists of the rapidities
\begin{equation}
\begin{split}
\label{eq:SU3string}
 \lambda_\alpha^{n,\ell} &=\lambda_\alpha^n +\ii\left( \frac{n+1}2 -   \ell \right) +\delta_{1,\alpha}^{n,\ell}
                           , \quad \ell=1,\dots,n,\ \quad \alpha = 1, \dots, M_n^{(1)} \\
 \mu_\alpha^{n,\ell} &=\mu_\alpha^n +\ii\left( \frac{n+1}2 - \ell \right) +\delta_{2,\alpha}^{n,\ell}, \quad \ell=1,\dots,n,\ \quad \alpha = 1, \dots, M_n^{(2)}.
\end{split}
\end{equation}
Here, $\lambda_\alpha^n,\ \mu_\alpha^n\in\mathbb{R}$ are the string centers, and $\delta_{1,\alpha}^{n,\ell},\ \delta_{2,\alpha}^{n,\ell}$ are the string deviations, exponentially small in large volumes.
The total number of Bethe roots is computed simply as
\begin{equation}
 \sum_{n=1}^\infty n M^{(1)}_n = N, \qquad \sum_{n=1}^\infty n M^{(2)}_n = M.
\end{equation}

We will be interested in the thermodynamic limit, when
$N\rightarrow\infty$, $M\rightarrow\infty$, $L\rightarrow\infty$
while we keep the ratios $N/L$ and $M/L$ fixed. The string hypothesis
states that only Bethe roots in the form of \eqref{eq:SU3string}
contribute to thermodynamic behaviour.  
For the string centers we introduce the densities
$\rho_n^{(1)}(u),\ \rho_n^{(2)}(u)$, which are normalized such that in a large
volume $L$ the total number of $n$-strings of the first/second type
between rapidities $u$ and $u+\Delta u$ is
$L \rho_n^{(1/2)}(u)\Delta u$. Similarly we introduce the
hole densities $\rho_{h,n}^{(1)},\rho_{h,m}^{(2)}$.

In the thermodynamic limit the Bethe
equations  \eqref{eq:BE} can be transformed into a set of coupled linear integral equations: 
\begin{equation}
\begin{split}
  \label{root0}
 \rho_{t,n}^{(1)}(\lambda) &= a_n(\lambda)-\sum_{m=1}^\infty a_{n,m}\star\rho^{(1)}_m (\lambda) +\sum_{m=1}^\infty b_{n,m}\star\rho_m^{(2)}(\lambda)\\
 \rho_{t,n}^{(2)}(\lambda) &= -\sum_{m=1}^\infty a_{n,m}\star\rho^{(2)}_m (\lambda) +\sum_{m=1}^\infty b_{n,m}\star\rho_m^{(1)}(\lambda),
\end{split}
\end{equation}
where the total root densities are
\begin{equation}
 \rho_{t,n}^{(j)}(u) = \rho_{n}^{(j)}(u)+\rho_{h,n}^{(j)}(u),\qquad j=1,2.
\end{equation}
The kernels are
\begin{equation}
\begin{split}
 a_{n,m} (\lambda) &= (1-\delta_{nm})a_{|n-m|}(\lambda) +2a_{|n-m|+2}(\lambda)+\dots 2a_{n+m-2}(\lambda)+a_{n+m}(\lambda),\\
 b_{n,m} (\lambda) &= a_{|n-m|+1}(\lambda) +2a_{|n-m|+3}(\lambda)+\dots 2a_{n+m-1}(\lambda),
\end{split}
\end{equation}
and $a_n(\lambda)$ is defined in \eqref{andef}.

Similarly to the spin-$1/2$ case, the aforementioned equations
can be cast in a partially decoupled form (for derivation, see
e.g. \cite{nested-quench-1}): 
\begin{equation}
\begin{split}
  \label{eq:root1}
 \rho_{t,n}^{(1)}(\lambda) &= \delta_{n,1}s(\lambda) + s\star\left(\rho_{h,n-1}^{(1)} + \rho_{h,n+1}^{(1)} \right)(\lambda) + s\star\rho_n^{(2)}(\lambda) \\
 \rho_{t,n}^{(2)}(\lambda) &= s\star\left(\rho_{h,n-1}^{(2)} + \rho_{h,n+1}^{(2)} \right)(\lambda) + s\star\rho_n^{(1)}(\lambda),
\end{split}
\end{equation}
where the following definitions and conventions are understood:
\begin{align}
 \rho_{h,0}^{(r)}(u) &= 0, \quad r=1,2,
\end{align}
and $s(u)$ is defined in \eqref{sdef}.
The structure  of these integral equations is depicted on the ``TBA diagram''
 \ref{fig:su3tba}.

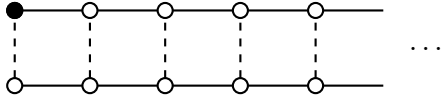
\begin{figure}
  \centering
  \begin{tikzpicture}
    \draw[thick,fill] (0,1) circle (0.1);
    \foreach \i in {0,1,2,3,4}
          {
      \begin{scope}[xshift=\i cm]
      \draw[thick] (0,0) circle (0.1);       
      \draw[thick] (0.1,0) to (0.9,0);
        \draw[thick] (0,1) circle (0.1);       
        \draw[thick] (0.1,1) to (0.9,1);
        \draw[thick,dashed] (0,0.1) to (0,0.9);
    \end{scope}
  }
  \node at (5.5,0.5) {$\dots$};
  \end{tikzpicture}
  \caption{TBA diagram of the $SU(3)$-symmetric model. The node in row
    $a$ and column $m$ from the top left correspond to the $m$-string
    of particle type $a=1,2$.
The links denote the two different convolutions in
    Eq. \eqref{eq:root1}. The filling of the top left node signals the
    source term for $\rho^{(1)}_{t,1}(u)$.}
  \label{fig:su3tba}
\end{figure}

\subsection{Strategy towards the string-charge relations}

It is our goal to find a set of quasi-local charges, which will
uniquely determine all root densities $\rho^{(1)}_m(u)$ and
$\rho^{(2)}_m(u)$. Based on the XXX chain it is a natural
idea to consider the fusion hierarchy of the transfer matrices.

In the $SU(3)$-symmetric model there is a set of fused transfer matrices
$t_m^{(a)}(u)$ with $a=1,2$, and $m=1,2,\dots,\infty$. They correspond
to representations of $SU(3)$ described by the rectangular Young diagram with $a$
rows and $m$ columns.
Precise definitions of $t_m^{(a)}(u)$ using local Lax operators will
be given in Section 
\ref{sec:SCrelations}.
We put forward that these TM's satisfy the Hirota equation or $T$-system
 \cite{zabrodin-hirota-1,suzuki-kuniba-tomoki-t-system-y-system-review}
\begin{equation}
\begin{split}
  \label{su3t}
  t^{(a)}_m\left(u+\frac\ii2\right)t^{(a)}_m\left(u-\frac\ii2\right)&=t^{(a)}_{m+1}(u)t^{(a)}_{m-1}(u)+
t^{(a-1)}_{m}(u)t^{(a+1)}_{m}(u), \\ \qquad a=1,2,&\quad m=1,2,\dots
\end{split}
\end{equation}
Here $t^{(a)}_0(u)=1$ by definition and $t^{(0)}_m(u)$ and $
t^{(3)}_m(u)$ are scalar functions that will be specified later.

The structure of this $T$-system is very closely related to the ``TBA diagram''
\ref{fig:su3tba}. The top and bottom rows correspond to the transfer
matrices with $a=1$ and $a=2$, respectively. 
Based on previous experience we
expect to find a relation between the eigenvalues of these fused transfer
matrices and the Bethe root densities. We put forward the formal definition
\begin{equation}
  X^{(a)}_m(u)=(-\ii)\frac{\partial}{\partial u}\log t^{(a)}_m(u).
\end{equation}
We will show that the $X^{(a)}_m(u)$  can be constructed locally, and
their mean values are connected to the
root densities by relations analogous to \eqref{su2sc}. Furthermore,
we will prove in two specific cases that they are quasi-local within the physical
strip.

\section{Quasi-local charges: the defining and conjugate
  representations}
\label{sec:DefAndConjRep}

In this section we investigate two special transfer matrices in the
$SU(3)$-case: the operators $t^{(1)}_1(u)$ and $t^{(2)}_1(u)$ that
correspond to the defining and conjugate representations of $SU(3)$, respectively.

The first one is the fundamental TM, corresponding
to the defining representation with highest weight
$(1,0)$ and the Young diagram with one box. For this TM we will
use the notation $t^3(u)$, and the definition is the same as in
\eqref{tdef}:
\begin{equation}
  \label{tdefsu3}
 t^3(u) = \Tr_0 R_{10}(u)R_{20}(u)\dots R_{L0}(u),
\end{equation}
where $R(u)$ is given by \eqref{Rdef} with $N=3$.

The second special TM corresponds to the conjugate (or anti-symmetric
tensor) representation of
$SU(3)$, given by highest weight $(0,1)$ and the Young diagram with
two rows and one column. We use the notation $t^{\bar 3}(u)$ and the
definition 
\begin{equation}
   t^{\bar 3}(u) = \Tr_0 R^{\bar 3}_{10}(u)R^{\bar 3}_{20}(u)\dots R^{\bar 3}_{L0}(u),
\end{equation}
where $R^{\bar 3}(u)$ is the $R$-matrix acting on the tensor product
of a fundamental and conjugate representation. It is given by
\begin{align}
\label{eq:R3barDef}
  R^{3,\bar{3}}(u) &\equiv R^{\bar{3}} (u)
                     = \frac{u+\frac{3\ii}{2}-\ii K} {u+\frac{3\ii}{2}}.
\end{align}
Here $K$ is the trace operator (or Temperley-Lieb operator) with
matrix elements $K_{ab}^{cd} =\delta_{ab}\delta_{cd}$. Note that $K$
is the partial transpose of the permutation operator: $P^{t_1} =
P^{t_2} = K$. Therefore we have the simple relation
\begin{equation}
\label{eq:SpaceReflectionR}
 R^{\bar 3} (u)= \frac{u+\frac\ii2}{u+\frac{3\ii}2} R^{t_1}\left(-u-\tfrac{3\ii}2\right).
\end{equation}
This $R$-matrix also satisfies the unitarity relation:
\begin{equation}
\label{eq:bar3uni}
 R^{\bar 3}(u) R^{\bar 3}(-u) = 1,
\end{equation}
which is easily checked using the Temperley-Lieb property $K^2 = 3 K$. 

For this special conjugate $R$-matrix the compatibility conditions are
easily derived from the original Yang-Baxter relation
\eqref{YB}. Making use of \eqref{eq:SpaceReflectionR} and
\eqref{eq:bar3uni} it is easy to show that
\begin{equation}
  \begin{split}
  \label{YB2}
 R_{23}(v-w)\bar R_{13}(u-w)\bar R_{12}(u-v) &=
 \bar R_{12}(u-v)\bar R_{13}(u-w)R_{23}(v-w).
  \end{split}
\end{equation}
It follows that the TM's $\{t^3_m(u),t^{\bar 3}_m(u)\}$ are all
commuting with each other.

The group invariance property of the conjugate transfer matrix follows
easily from \eqref{groupinvariance}. Taking a partial transpose in the
second space we get 
\begin{equation}
  \label{group2}
  G_1(G_2^{t})^{-1} R^{t_2}(u)=R^{t_2}(u) G_1(G^t_2)^{-1},\qquad G\in GL(3).
\end{equation}
For any $G\in SU(3)$ we have $(G^t)^{-1}=G^*$, thus we see by the
relation \eqref{eq:SpaceReflectionR}
that the matrix $R^{\bar 3}(u)$ is compatible with representations
$3\otimes \bar 3$.

Similar to the case of the XXX chain it is very useful to define the
space reflected transfer matrices corresponding to these two representations:
\begin{align}
  \bar{t}^3 (u) &= \Tr_0 R_{L0}^3(u)\dots R_{10}^3(u) \\
  \bar{t}^{\bar 3} (u) &= \Tr_0 R_{L0}^{\bar3}(u)\dots R_{10}^{\bar3}(u).
\end{align}
The local crossing relation \ref{eq:SpaceReflectionR} implies the
following connections:
\begin{equation}
  \label{su3crossing}
  \begin{split}
    \bar{t}^{\bar3} (u) &= \left(\frac{u+\tfrac{\ii}2}{u+\tfrac{3\ii}2}\right)^L t^{3}\left(-u-\tfrac{3\ii}2\right), \\
 \bar{t}^3(u) &= \left(\frac{u}{u+\ii}\right)^L  t^{\bar3}\left(-u-\tfrac{3\ii}2\right).
  \end{split}
\end{equation}
These are simple generalizations of the crossing relation \eqref{tmcrossing} in
the $SU(2)$ case. Even though the space reflected operators are not
independent,  we keep their definition and this special notation,
because they are very useful to study the
inversion relations and quasi-locality properties of the charges.

For example, the adjoints of the transfer matrices can be expressed
simply as
\begin{equation}
  \label{su3adjoint}
  (t^3)^\dagger(u)=   \bar t^3(-u^*)\qquad
  (t^{\bar 3}(u))^\dagger=\bar t^{\bar 3}(-u^*).
\end{equation}
For the details see Lemma \ref{adjointlem} in the Appendix.

\begin{thm}
For $t^3 (u)$ and $t^{\bar 3} (u)$ the following asymptotic inversion
relations hold when $u\in\physstr$ 
\begin{equation}
    \label{eq:inv1}
    \begin{split}
      \bar t^3(-u)  t^3(u)&\approx 1
      \\
  \bar t^{\bar 3}(-u) t^{\bar 3}(u)&\approx 1.
    \end{split}
\end{equation}
\end{thm}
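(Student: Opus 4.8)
The plan is to mirror the $SU(2)$ argument reviewed above, replacing the higher-spin Lax operators by the fundamental and conjugate $R$-matrices and using the unitarity relations \eqref{eq:uni1} and \eqref{eq:bar3uni} as the local inversions. First I would record the monodromy-level statement: writing $M^3(u)=R^3_{10}(u)\cdots R^3_{L0}(u)$ and $\bar M^3(u)=R^3_{L0}(u)\cdots R^3_{10}(u)$, repeated application of $R^3(u)R^3(-u)=1$ collapses the product from the outside inward, giving the \emph{exact} identity $\bar M^3(-u)M^3(u)=1$ for all $u$; the same holds verbatim for the conjugate chain using \eqref{eq:bar3uni}. The only obstruction to promoting this to the transfer-matrix statement \eqref{eq:inv1} is the trace over the auxiliary space, which introduces a boundary defect, and the claim is precisely that for $u\in\physstr$ this defect is exponentially small in $L$.

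To make this quantitative I would, exactly as in \eqref{ezkell}, expand the Hilbert–Schmidt norm of the difference and use the adjoint relations \eqref{su3adjoint} (with $N=3$):
\begin{equation*}
\| \bar t^3(-u) t^3(u) - 1\|_{\text{HS}}^2 = 3^{-L}\Tr\big(t^3(u^*)\bar t^3(-u^*)\,\bar t^3(-u) t^3(u)\big) - 2\cdot 3^{-L}\,\Re\,\Tr\big(\bar t^3(-u)t^3(u)\big) + 1 .
\end{equation*}
Since all these transfer matrices commute (by \eqref{YB2}), the ordering under the trace is immaterial. Thus the theorem reduces to the two asymptotic scalar identities
\begin{equation*}
3^{-L}\Tr\big(\bar t^3(-u)t^3(u)\big)\approx 1, \qquad 3^{-L}\Tr\big(t^3(u^*)\bar t^3(-u^*)\,\bar t^3(-u)t^3(u)\big)\approx 1 ,
\end{equation*}
together with the analogous pair for $t^{\bar 3}$, obtained by the same manipulation using \eqref{eq:bar3uni}; granting both, the right-hand side of the norm formula tends to $1-2+1=0$.

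Next I would evaluate these traces in the crossed channel exactly as in Figs. \ref{fig:localinversion}–\ref{fig:localinversion2}: the first trace is the partition function of a $2\times L$ vertex lattice and the second that of a $4\times L$ lattice, each row being an $R$-matrix (or its reflected/conjugate partner). Building the column-to-column Quantum Transfer Matrix, the local inversion relations \eqref{eq:uni1}, \eqref{eq:bar3uni} together with the trace over each physical space show that the delta state $\ket{\delta}=\sum_j \ket{j}\otimes\ket{j}$ (respectively $\ket{\delta}\otimes\ket{\delta}$) is an eigenstate of the two-site (respectively four-site) QTM with eigenvalue $N=3$. Since the lattice has $L$ columns, the corresponding contribution to the partition function is $3^{L}$, so after normalization by $3^{-L}$ it yields exactly the value $1$ claimed. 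The theorem then follows provided the delta states are the \emph{dominant} eigenvectors, i.e. the remainder of the QTM spectrum lies strictly below them in modulus as $L\to\infty$.

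Establishing this dominance is the main obstacle, and it is exactly where the restriction $u\in\physstr$ enters: outside the physical strip one of the $Q$-function ratios in the eigenvalue formula \eqref{tu1} overtakes the $k=0$ term and the delta state ceases to be leading. I would therefore diagonalize, or at least bound the leading part of the spectrum of, the two- and four-site QTMs built from $R^3$ and $R^{\bar 3}$, exploiting the $GL(3)$ invariance \eqref{groupinvariance}, \eqref{group2} and the Temperley–Lieb structure $K^2=3K$ of the conjugate $R$-matrix. As in the $SU(2)$ treatment of \cite{prosen-xxx-quasi}, I expect the two-site case to be fully tractable by symmetry, while the four-site case may require a partially analytic, partially numerical diagonalization. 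The detailed computation, including the spectral-gap estimate that produces the rate $\alpha\in\valos^+$ in $\|A-B\|=\ordo(e^{-\alpha L})$, I would relegate to Appendix \ref{sec:proofs}.
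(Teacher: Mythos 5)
Your proposal follows essentially the same route as the paper's Appendix~\ref{sec:proofs}: expanding the Hilbert--Schmidt norm of $\bar t(-u)t(u)-1$, reducing to two trace identities evaluated in the crossed channel via two-site and four-site Quantum Transfer Matrices, identifying the delta states as eigenvectors with eigenvalue $3$ from the local unitarity relations, and reducing the claim to dominance of that eigenvalue for $u\in\physstr$. The only difference is that where you anticipate a possibly partially numerical treatment of the four-site QTM, the paper carries out a fully analytic diagonalization by decomposing the $81$-dimensional space into the $SU(3)$ Clebsch--Gordan series and checking each eigenvalue branch explicitly.
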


A rigorous proof is presented in Appendix \ref{sec:proofs}. The
proof is based on the same ideas as explained in the case of the XXX
model. The core relations are the local inversions
\eqref{eq:uni1}-\eqref{eq:bar3uni}, which guarantee the exact
inversion of the {\it monodromy matrices}. Considering the transfer
matrices, the ``boundary effect'' of taking the trace does not
propagate into the bulk of the chain, if $u$ is chosen from the
physical strip.

Based on the results of the $SU(2)$-symmetric chain we define the
following two
generating functions for the charges:
\begin{equation}
  \begin{split}
  \label{XY1}
  X(u)&= (-\ii)\partial_u \log t(u)\\
  Y(u)&= (-\ii)\partial_u \log t^{\bar 3}(u).
  \end{split}
\end{equation}
Due to the asymptotic inversion they can be written in large enough
volumes as
\begin{equation}
  \label{XY2}
  \begin{split}
  X(u)&\approx(-\ii) \bar t(-u)\partial t(u)\\
  Y(u)&\approx(-\ii) \bar t^{\bar 3}(-u)  \partial t^{\bar 3}(u).
  \end{split}
\end{equation}
The formula \eqref{XY1} is convenient for the treatment of the
eigenvalues, whereas the advantage of \eqref{XY2} is its local
construction, enabling the evaluation of mean values in initial
states and the proof of quasi-locality.

It follows from the asymptotic inversion and the adjoint property
\eqref{su3adjoint} that for $u\in\valos$ these operators are Hermitian.

\begin{thm}
  The traceless operators $\{X(u)\}$ and $\{Y(u)\}$ are quasi-local within
  the physical strip.
\end{thm}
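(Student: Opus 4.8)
The plan is to verify the three defining properties of quasi-locality for $\{X(u)\}$ and $\{Y(u)\}$ one at a time, for $u\in\physstr$. Since the two generating functions have identical structure, differing only in the representation carried by the auxiliary space, I would carry out the argument in detail for $\{X(u)\}$ and observe that the $\{Y(u)\}$ case goes through with $t\to t^{\bar 3}$, $\bar t^3\to\bar t^{\bar 3}$, using the conjugate local inversion \eqref{eq:bar3uni} in place of \eqref{eq:uni1} and with the physical dimension $N=3$ unchanged. Everything rests on the asymptotic inversion \eqref{eq:inv1} and the resulting local representation \eqref{XY2}: within $\physstr$ we may replace $X(u)$ by $(-\ii)\bar t^3(-u)\partial_u t^3(u)$ up to an error of order $\ordo(e^{-\alpha L})$ in Hilbert--Schmidt norm, and it is this local form that makes all three properties tractable.

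Translational invariance (the first property) is immediate: $t^3(u)$ and $\bar t^3(-u)$ are translationally invariant by construction, $\partial_u t^3(u)$ is a translationally invariant sum of single-site insertions, and passing to the traceless part preserves the symmetry. The third property --- that $\langle b,\{X(u)\}\rangle_{\text{HS}}$ with any fixed $k$-site operator $b$ saturates to an $L$-independent value --- I would read off from \eqref{XY2}. The inversion \eqref{eq:inv1} disentangles the Lax operators of $\bar t^3(-u)$ against those of $t^3(u)$ everywhere except near the single derivative insertion, so that the operator density of $X(u)$ is effectively supported on a bounded number of sites with tails decaying exponentially in the distance to the boundary of $\physstr$. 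Once $L$ exceeds the support of $b$ plus these tails, the overlap is stationary in $L$ up to $\ordo(e^{-\alpha L})$.

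The substance of the proof is the second property, the linear growth $\|\{X(u)\}\|_{\text{HS}}^2\sim L$. I would start from
\begin{equation}
  \|\{X(u)\}\|_{\text{HS}}^2 = \|X(u)\|_{\text{HS}}^2 - \left|N^{-L}\,\Tr X(u)\right|^2,
\end{equation}
and evaluate $\|X(u)\|_{\text{HS}}^2=N^{-L}\Tr\!\big(X(u)^\dagger X(u)\big)$ using the adjoint relations \eqref{su3adjoint} together with \eqref{XY2}. This expresses the trace as the partition function of a two-dimensional vertex model on a lattice of width $L$ and height four, assembled from the four transfer-matrix rows contained in $X(u)^\dagger X(u)$, carrying two derivative insertions, one from $X(u)$ and one from $X(u)^\dagger$. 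Following the XXX strategy, I would evaluate this partition function in the crossed channel by constructing the four-site Quantum Transfer Matrix, whose $L$-th power builds the lattice column by column. The local inversion \eqref{eq:uni1} guarantees that the product of delta-states $\ket{\delta}\otimes\ket{\delta}$, with $\ket{\delta}=\sum_j\ket{j}\otimes\ket{j}$, is an eigenvector of this QTM with eigenvalue $N=3$, arising from the trace over the physical space. If this eigenvalue is strictly dominant, the bulk product $\bar t^3(-u)t^3(u)$ collapses to the identity and the two insertions act as a localized defect on this trivialized background; the connected insertion--insertion correlator then clusters exponentially, so summing over the $L$ inequivalent positions yields the factor of $L$, while the prefactor $N^{-L}$ cancels the $N^L$ from the dominant QTM eigenvalue. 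Subtracting the disconnected piece $|N^{-L}\Tr X(u)|^2=\ordo(1)$ leaves $\|\{X(u)\}\|_{\text{HS}}^2\sim L$ with a computable coefficient.

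The hard part is establishing that $\ket{\delta}\otimes\ket{\delta}$ is genuinely the leading eigenvector of the four-site QTM for $u\in\physstr$; everything else is the bookkeeping of a gapped defect. This is exactly the step that fails outside the physical strip, where a competing eigenvalue overtakes $N$, destroying the clustering, and it is the only step requiring input beyond the local inversion, namely a (partial) diagonalization of the QTM. I would therefore relegate this spectral analysis to Appendix \ref{sec:proofs}, where --- as in the $SU(2)$ treatment of \cite{prosen-xxx-quasi} --- the dominance of the delta-states is controlled by exploiting the symmetry of the QTM, supplemented by numerical diagonalization in the cases not reachable analytically, and where the conjugate case relevant for $\{Y(u)\}$ is handled by the same mechanism built on \eqref{eq:bar3uni}.
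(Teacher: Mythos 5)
Your overall strategy coincides with the paper's: pass to the local form \eqref{XY2} via the asymptotic inversion, write the Hilbert--Schmidt norm as a $4\times L$ partition function with two derivative insertions, evaluate it in the crossed channel with the two-site and four-site QTMs, and reduce everything to the dominance of the delta-state eigenvalue $3$ inside $\physstr$ (which the paper then establishes fully analytically for these two $m=1$ cases via the Clebsch--Gordan decomposition of $3\otimes\bar3\otimes3\otimes\bar3$; no numerics are needed here, unlike the higher-spin $SU(2)$ cases you allude to). However, your order counting at the central step is wrong. Since $X(u)$ is extensive, $N^{-L}\Tr X(u)=\ordo(L)$: in the crossed channel it is governed by $L\,\lambda_{A,1}^{L-1}\partial_{u_1}\lambda_{A,1}$, and $\partial_{u_1}\lambda_{A,1}\neq 0$ by Hellmann--Feynman even though $\lambda_{A,1}=3$ at coincident arguments. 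Hence the disconnected piece $|N^{-L}\Tr X(u)|^2$ is $\ordo(L^2)$, not $\ordo(1)$ as you assert, and correspondingly $N^{-L}\Tr\big(X(u)^\dagger X(u)\big)$ is itself $\ordo(L^2)$, not $\ordo(L)$. The entire content of property 2 is that these two $\ordo(L^2)$ contributions cancel against each other; as written, your accounting makes the subtraction of the trace part look like an $\ordo(1)$ correction, which would render the traceless projection irrelevant.

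The cancellation is exactly what you compress into the phrase ``the connected insertion--insertion correlator clusters exponentially,'' and a spectral gap of the four-site QTM alone does not deliver it. One additionally needs the identity
\begin{equation*}
\partial_{u_1}\lambda_{AB,1}\,\partial_{v_1}\lambda_{AB,1}
=\partial_{u_1}\lambda_{A,1}\,\partial_{v_1}\lambda_{A,1}
\end{equation*}
at $u_1=u_2$, $v_1=v_2$, which the paper derives from the fact that the leading eigenvector of the four-site QTM \emph{factorizes} as $\ket{\delta}\otimes\ket{\delta}$ into the leading eigenvectors of the two two-site QTMs, combined with the Hellmann--Feynman theorem. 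This factorizability is the precise mechanism by which the disconnected $\ordo(L^2)$ terms from $\Tr(X^\dagger X)$ and $|\,N^{-L}\Tr X\,|^2$ match, leaving the $\ordo(L)$ remainder $L\big(\tfrac13\partial_{u_1}\partial_{v_1}\lambda_{AB,1}-\tfrac19\partial_{u_1}\lambda_{AB,1}\partial_{v_1}\lambda_{AB,1}\big)$. Your proof needs to state and use this ingredient explicitly; with it supplied, and the order counting corrected, the rest of your argument goes through as in the paper.
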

The detailed proof of this Theorem is presented in appendix
\ref{sec:proofs}. The proof uses the same techniques as in the
original works \cite{prosen-xxx-quasi,prosen-enej-quasi-local-review},
and the starting point is the existence of local inversion relations
for the $R$-matrices.

Let us now investigate the consequences of the inversion relation for
the eigenvalues of these transfer matrices.

For the fundamental TM and its space reflected counterpart we have 
\begin{equation}
\begin{split}
 t^3 (u)&= \frac{Q_1\left(u-\frac\ii2\right)}{Q_1\left(u+\frac\ii2\right)}+ \frac{Q_0(u) Q_1\left(u+\frac{3\ii}2\right) Q_2(u)}{Q_0(u+\ii) Q_1\left(u+\frac\ii2\right) Q_2(u+\ii)} + \frac{Q_0(u) Q_2\left(u+2\ii\right)}{Q_0(u+\ii) Q_2\left(u+\ii\right)}   \\
 \bar t^3 (-u)&= \frac{Q_1\left(u+\frac\ii2\right)}{Q_1\left(u-\frac\ii2\right)}+ \frac{Q_0(u) Q_1\left(u-\frac{3\ii}2\right) Q_2(u)}{Q_0(u-\ii) Q_1\left(u-\frac\ii2\right) Q_2(u-\ii)} + \frac{Q_0(u) Q_2\left(u-2\ii\right)}{Q_0(u-\ii) Q_2\left(u-\ii\right)},   \\
\end{split}
\end{equation}
where the second equality follows simply from the fact that in 
Bethe Ansatz the space reflection is described by negating all Bethe rapidities.

Using the relations \eqref{su3crossing} we obtain further
\begin{equation}
\begin{split}
 t^{\bar 3}(u) &=
  \frac{Q_0\left(u+\frac\ii2\right) Q_1\left(u+2\ii\right) }{Q_0\left(u+\frac{3\ii}2\right) Q_1\left(u+\ii\right) } + \frac{Q_1(u) Q_2\left(u+\frac{3\ii}2\right)}{Q_1(u+\ii) Q_2\left(u+\frac\ii2\right) } + \frac{Q_2\left(u-\frac\ii2\right) }{Q_2\left(u+\frac\ii2\right) }  \\
 \bar t^{\bar 3}(-u) &=
  \frac{Q_0\left(u-\frac\ii2\right) Q_1\left(u-2\ii\right) }{Q_0\left(u-\frac{3\ii}2\right) Q_1\left(u-\ii\right) } + \frac{Q_1(u) Q_2\left(u-\frac{3\ii}2\right)}{Q_1(u-\ii) Q_2\left(u-\frac\ii2\right) } + \frac{Q_2\left(u+\frac\ii2\right) }{Q_2\left(u-\frac\ii2\right) }.  \\
\end{split}
\end{equation}

Let us now investigate the inversion relations \eqref{eq:inv1} on the
level of these eigenvalues. Multiplying the sums we observe that in
both cases there will be 9 terms out which only one is identically
equal to 1. The remaining 8 terms will include various ratios of
$Q$-functions. In the large volume limit these ratios will be
exponentially increasing or decreasing, depending on $u$. It follows
from the inversion relation, that within the individual sums those
terms have to be dominant for $u\in\physstr$ which produce the
required identity. We thus have the relations for $u\in\physstr$
\begin{equation}
  \label{su3tdl1}
  \begin{split}
     t^3 (u)&\approx
     \frac{Q_1\left(u-\frac\ii2\right)}{Q_1\left(u+\frac\ii2\right)}\\
       t^{\bar 3}(u) &\approx
       \frac{Q_2\left(u-\frac\ii2\right) }{Q_2\left(u+\frac\ii2\right)  }.\\
  \end{split}
\end{equation}
These are crucial in establishing the thermodynamic limit of the
charges and eventually the string-charge relations.

Notice the symmetry of these relations: exchanging the defining and conjugate
representations simply corresponds to exchanging the two types of
Bethe rapidities. This is simply the conjugation symmetry of the
Dynkin diagram of $SU(3)$, which is nicely reflected by the Bethe
Ansatz solution (compare with Fig \ref{fig:su3tba}).

It is important that \eqref{su3tdl1} does not necessarily hold for
{\it all} Bethe states. For example, if we choose $u\in\valos$ and keep the number of
rapidities finite while performing the $L\to\infty$ limit then there
will be two remaining finite terms in $t^{\bar 3}(u)$. Our proof using
the inversion relation only tells us that \eqref{su3tdl1} will hold
for almost all states, where the probability measure is derived simply
from the HS scalar product. This corresponds to the infinite
temperature thermal ensemble. Therefore, in the first instance our
statement only concerns the infinite temperature Bethe states.
Nevertheless, it can be argued based on continuity that \eqref{su3tdl1} still holds for Bethe root densities
``close'' to the infinite temperature state. We give further comments
on this issue at the end of the next Section.

\section{Arbitrary representations and string-charge relations}

\label{sec:SCrelations}

In this section we treat all the representations of $SU(3)$ that
correspond to rectangular Young diagrams. To this order we define the
families $t^3_m(u)\equiv t^{(1)}_m(u)$ and $t^{\bar 3}_m(u)\equiv
t^{(2)}_m(u)$ with $m=1,2,\dots$. For $m=1$ they coincide with the two 
transfer matrices $t^3(u)$ and $t^{\bar 3}(u)$ of the previous Section.

For a representation of $\Lambda$ of $GL(N)$ with rectangular Young diagrams the $R$-matrix
acting on the tensor product of the defining representation and
$\Lambda$ can be expressed as \cite{kulish-resh-sklyanin--fusion,drinfeld-hopf,yangian-hep-intro}
\begin{equation}
  \label{egylg}
  R^\Lambda(u)\quad\sim\quad u+\ii \sum_{i,j}E_{ij}\Lambda_{ji},
\end{equation}
where $E_{ij}$ are the elementary matrices acting on the defining
representation, $\Lambda_{ij}$ are their
representations, and there is an arbitrary normalization factor not
included in \eqref{egylg}. If $\Lambda$ is the defining
representation, we get back the usual $R$-matrix, as
$P=\sum_{i,j}E_{ij}E_{ji}$. 

Let $\Lambda^{(m,0)}_{ij}$ and $\Lambda^{(0,m)}_{ij}$ be the
representation matrices of $GL(3)$ corresponding to the $1\times m$
Young diagram with $GL(3)$ highest weight $(m,0,0)$ and to the $2\times m$ Young
diagram with $GL(3)$ highest weight $(m,m,0)$, respectively. For these two
families of representations we introduce the normalized $R$-matrices
\begin{equation}
  \begin{split}
R^{(m,0)}(u)&=  \frac{ u-\ii\frac{m-1}{2}+\ii \sum_{i,j} E_{ij}\Lambda^{(m,0)}_{ji}}{u+\ii\frac{m+1}{2}}  \\
  R^{(0,m)}(u)&=  \frac{ u-\ii\frac{m-2}{2}+\ii \sum_{i,j} E_{ij}\Lambda^{(0,m)}_{ji}}{u+\ii\frac{m+2}{2}}.  \\  
  \end{split}
  \label{su3R}
\end{equation}
Compared to \eqref{egylg} they involve a scalar factor and a shift in
the rapidity, which is equivalent to adding a constant to all
components of the weight vector.

These $R$-matrices satisfy the local inversion relations
\begin{equation}
  \label{su3teljeslocalinv}
  R^{(m,0)}(u)R^{(m,0)}(-u)=1, \qquad
   R^{(0,m)}(u) R^{(0,m)}(-u)=1.
 \end{equation}
This is a general property of $R$-matrices
\cite{drinfeld-hopf,yangian-hep-intro}, but we also prove it
explicitly in Appendix \ref{sec:Rinv}.

 Now we define the transfer matrices
 \begin{equation}
   \label{su3rect}
   \begin{split}
   t^3_m(u)&= \text{Tr}_a   R^{(m,0)}_{1,a}(u)  R^{(m,0)}_{2,a}(u)\dots R^{(m,0)}_{L,a}(u)\\
t^{\bar 3}_m(u)&= \text{Tr}_a   R^{(0,m)}_{1,a}(u) R^{(0,m)}_{2,a}(u)\dots R^{(0,m)}_{L,a}(u). \\
 \end{split}
   \end{equation}

In analogy with the previous Sections we also define the space
reflected transfer matrices
 \begin{equation}
   \begin{split}
   \bar t^3_m(u)&= \text{Tr}_a   R^{(m,0)}_{L,a}(u)  R^{(m,0)}_{L-1,a}(u)\dots R^{(m,0)}_{1,a}(u)\\
  \bar t^{\bar 3}_m(u)&= \text{Tr}_a   R^{(0,m)}_{L,a}(u)  R^{(0,m)}_{L-1,a}(u)\dots R^{(0,m)}_{1,a}(u).\\
 \end{split}
   \end{equation}
They are not independent from the TM's in \eqref{su3rect}, they are
related by crossing transformations of the type
\eqref{su3crossing}. However, these relations are not relevant for our purposes.

Based on our earlier results we formulate the following:

\begin{conj}
   For $u\in\physstr$ the
following asymptotic inversion relations hold:
\begin{equation}
  \label{tm3i}
\begin{split}
  \bar{t}_m^3 (-u) t_m^3 (u) & \approx 1 \\
  \bar{t}_m^{\bar 3} (-u) t_m^{\bar 3} (u) & \approx 1.
\end{split} 
\end{equation}
\end{conj}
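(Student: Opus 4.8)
The plan is to follow verbatim the strategy that establishes the $m=1$ inversions \eqref{eq:inv1}, whose rigorous proof is given in Appendix \ref{sec:proofs}. The starting point is the local inversion \eqref{su3teljeslocalinv} for the fused $R$-matrices $R^{(m,0)}$ and $R^{(0,m)}$, proven in Appendix \ref{sec:Rinv}. Exactly as in the monodromy-matrix argument (Fig.~\ref{fig:localinversion}), these local inversions force the corresponding monodromy matrices to invert \emph{exactly}, so that all the difficulty in \eqref{tm3i} again resides in the trace over the auxiliary space. Accordingly I would reduce the claim to a Hilbert--Schmidt estimate: compute $\|\bar t_m^3(-u) t_m^3(u) - 1\|^2_{\mathrm{HS}}$ (and its $\bar 3$ counterpart) using adjoint relations of the type \eqref{su3adjoint}, and show that this norm is $\ordo(e^{-\alpha L})$ for $u\in\physstr$.

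Concretely, the HS-norm is controlled by two trace identities strictly analogous to \eqref{su224}, namely $N^{-L}\Tr\big(\bar t_m^3(-u) t_m^3(u)\big)\approx 1$ and the four-factor version $N^{-L}\Tr\big(t_m^3(u^*)\,\bar t_m^3(-u^*)\,\bar t_m^3(-u)\, t_m^3(u)\big)\approx 1$. Each trace is evaluated by interpreting it as the partition function of a two-dimensional vertex model of size $2\times L$ and $4\times L$ respectively (Figs.~\ref{fig:localinversion}--\ref{fig:localinversion2}), and then passing to the crossed channel: one builds the associated column-to-column Quantum Transfer Matrix (QTM) and reads off the trace from its dominant eigenvalue. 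The local inversion \eqref{su3teljeslocalinv} guarantees that the delta-states $\ket{\delta}=\sum_j \ket{j}\otimes\ket{j}$ and $\ket{\delta}\otimes\ket{\delta}$ are eigenvectors of the two- and four-site QTMs, with an eigenvalue dictated by the local inversion together with the trace over the physical space; after the $N^{-L}$ normalization each trace reduces to $1$, provided these eigenvalues are the leading ones.

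The crux — and the reason the statement is posed as a conjecture rather than a theorem — is precisely the \emph{dominance} of the delta-states in the QTM spectrum. For the fused families the auxiliary space carries the rectangular representation $(m,0)$ or $(0,m)$, whose dimension grows with $m$, so the QTM to be diagonalized acts on an increasingly large space and the algebraic shortcuts available at $m=1$ no longer apply. This is the same gap that already appeared for higher-spin XXX, where delta-state dominance could be verified only up to $s=3/2$ analytically and numerically beyond; here the extra complication of two distinct representation families makes a uniform argument even harder. I would therefore also attempt an inductive route through the $T$-system \eqref{su3t}: using the crossing relations \eqref{su3crossing}, the inversion \eqref{tm3i} is equivalent to the statement that on the right-hand side of \eqref{su3t} the scalar, state-independent term dominates in the thermodynamic limit, which one might hope to bootstrap from the $m=1$ case \eqref{eq:inv1}. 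Once \eqref{tm3i} is granted, the eigenvalue-level consequences follow routinely as in the $m=1$ analysis, collapsing the multi-term transfer-matrix eigenvalues to a single ratio of $Q$-functions in the physical strip — the analogue of \eqref{su3tdl1} — which is exactly the input needed for the string-charge relations.
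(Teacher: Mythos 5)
Your proposal coincides with the paper's own treatment: the statement is posed as a conjecture precisely because the authors have no proof beyond the $m=1$ cases, and the supporting argument they give --- local inversion of the fused $R$-matrices forcing exact inversion of the monodromy matrices, reduction to Hilbert--Schmidt trace identities evaluated via two- and four-site Quantum Transfer Matrices, with the sole unresolved step being the dominance of the delta-state eigenvalue on the enlarged auxiliary spaces --- is exactly the strategy you outline. You have correctly identified both the intended route and the genuine remaining gap (including the analogy with the higher-spin XXX case, where dominance is verified only up to $s=3/2$), so your account matches the paper's.
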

The main idea behind this is the same as earlier: the global inversion of {\it
  monodromy} matrices follows from the local inversion relations
\eqref{su3teljeslocalinv}. Then the asymptotic inversion of transfer
matrices will hold for $u\in\physstr$ if the ``boundary effect'' of
taking the trace does not propagate into the bulk of the chain. This
could be checked by constructing the appropriate 2-site and 4-site
Quantum Transfer Matrices, in analogy with the computations of the
previous Sections. At present we do not have a general
proof, except for the two special cases with $m=1$, presented above
and in Appendix \ref{sec:proofs}.

We define the following two families of generating functions for
quasi-local charges:
\begin{equation}
\begin{split}
\label{eq:XmYmDef}
 X_m (u) &= (-\ii)\partial_u \log t^3_m (u)\\
 Y_m (u) &= (-\ii)\partial_u \log t^{\bar 3}_m (u).
\end{split}
\end{equation}
Under the assumption of the above conjecture, they are asymptotically
equal to the following locally constructed quantities: 
\begin{equation}
\begin{split}
\label{eq:XmYmDef2}
 X_m (u) &\approx (-\ii) t^3_m (-u) (\partial t^3_m) (u)\\
 Y_m (u) &\approx (-\ii) t^{\bar3}_m (-u) (\partial t^{\bar 3}_m) (u).
\end{split}
\end{equation}

\begin{conj}
For $u\in\physstr$ the traceless operators $\{X_m(u)\}$ and
$\{Y_m(u)\}$ are quasi-local.
\end{conj}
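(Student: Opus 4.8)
The plan is to establish quasi-locality by the same route already used for the $m=1$ transfer matrices $t^3(u)$ and $t^{\bar3}(u)$ in Section \ref{sec:DefAndConjRep} and Appendix \ref{sec:proofs}, now with the rectangular-representation $R$-matrices $R^{(m,0)}(u)$ and $R^{(0,m)}(u)$ replacing the fundamental and conjugate ones. I treat $X_m(u)$ explicitly; the argument for $Y_m(u)$ is obtained by exchanging the two families of representations, exactly as the Dynkin symmetry of $SU(3)$ suggests. Property (1) of quasi-locality, translational invariance of the traceless part $\{X_m(u)\}$, is immediate from the homogeneous trace definition \eqref{su3rect}. The substance of the statement lies in properties (2) and (3): the linear growth $\|\{X_m\}\|_{\mathrm{HS}}^2 \sim L$ of the Hilbert--Schmidt norm and the $L$-independence of the overlaps $\langle b, \{X_m\}\rangle_{\mathrm{HS}}$ with fixed local operators $b$.

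First I would use the asymptotic inversion relations \eqref{tm3i} to replace the formal logarithmic derivative by the locally constructed operator \eqref{eq:XmYmDef2}, $X_m(u) \approx (-\ii)\,\bar t^3_m(-u)\,(\partial t^3_m)(u)$. Because $\partial_u$ acts by the Leibniz rule on one Lax operator at a time, this expression is formally extensive with a translationally invariant density; what must still be shown---and what fails outside the strip---is that the norm genuinely scales like $L$ rather than growing exponentially. This is exactly where the restriction $u \in \physstr$ enters.

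Next I would reduce the quantitative estimates to partition functions of two-dimensional vertex models, as in the XXX computation. Using the adjoint relations generalizing \eqref{su3adjoint} together with the inversion, the norm $\|\{X_m\}\|_{\mathrm{HS}}^2 = N^{-L}\Tr\!\big(X_m^\dagger X_m\big) - |N^{-L}\Tr X_m|^2$ is expressed through traces of products of at most four fused transfer matrices, i.e.\ lattices of height $2$ and $4$ and width $L$. Each is evaluated in the crossed channel via the associated column-to-column Quantum Transfer Matrix (QTM). The local inversions \eqref{su3teljeslocalinv} guarantee that the delta-states $\ket{\delta}=\sum_j \ket{j}\otimes\ket{j}$ and $\ket{\delta}\otimes\ket{\delta}$ are exact eigenvectors of the two- and four-site QTMs with eigenvalue $N=3$, a single column collapsing to the identity. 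If these delta-states are the \emph{dominant} eigenvectors, the partition functions are governed by $N^L$, the leading contributions cancel in the difference defining the traceless norm, and the finite spectral gap above the delta-state produces both the linear-in-$L$ norm growth and, after inserting $b$ into the lattice, the $L$-independent overlap---yielding properties (2) and (3).

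The hard part is the spectral dominance of the delta-states in these QTMs for arbitrary $m$. For $m=1$ it can be verified directly (by the underlying algebra in the two-site case, and analytically or numerically in the four-site case), but for higher rectangular representations the auxiliary space grows and no closed-form diagonalization is at hand. A complete proof would require a uniform bound separating the second-largest QTM eigenvalue from $N$ for every $m$---equivalently, a proof that the boundary effect of the auxiliary trace decays exponentially in $L$ throughout $\physstr$ for all $m$. Securing this gap is the central obstacle, and is the reason the statement remains a conjecture rather than a theorem.
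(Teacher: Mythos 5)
Your outline follows exactly the route the paper takes: translational invariance from the homogeneous trace definition, reduction via the asymptotic inversion to the locally constructed form \eqref{eq:XmYmDef2}, evaluation of the Hilbert--Schmidt norm through $2\times L$ and $4\times L$ partition functions in the crossed channel, and reliance on the delta-states being dominant eigenvectors of the two- and four-site QTMs with eigenvalue $3$. The paper itself leaves this statement as a conjecture for precisely the reason you name --- the spectral dominance of the delta-states is established only for $m=1$ (in Appendix \ref{sec:proofs}) and remains open for general rectangular representations --- so your identification of that gap as the sole missing ingredient is accurate.
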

We do not have a proof for this conjecture, but the cases of
$\{X_1(u)\}$ and $\{Y_1(u)\}$ we treated in the previous section,
 together with the known results for the $SU(2)$ case 
already give strong motivation for its validity.

Let us now treat the eigenvalues of the transfer matrices and the charges.
The eigenvalues of $t^3_m(u)$ and $t^{\bar 3}_m(u)$ can be expressed
using the $Q$-functions as
\begin{equation}
\begin{split}
\label{eq:ttbarQ}
 t^{3}_m (u) &= \frac{Q_1\left(u-\ii\frac m2\right) Q_2\left(u+\ii\frac{m+3}2\right)}{Q_0\left(u+\ii\frac{m+1}2\right)} \sum_{k=0}^m \frac{Q_0\left(u+\ii\frac{m+1}2-\ii k\right) Q_2\left(u+\ii\frac{m+1}2-\ii k\right)}{Q_1\left(u+\ii\frac{m}2-\ii k\right) Q_1\left(u+\ii\frac{m+2}2-\ii k\right)} \times\\
  & \qquad\qquad\qquad\qquad\qquad\qquad \times\sum_{\ell=0}^k \frac{Q_1\left(u+\ii\frac{m+2}{2}-\ii\ell\right)}{Q_2\left(u+\ii\frac{m+1}{2}-\ii\ell\right) Q_2\left(u+\ii\frac{m+3}{2}-\ii\ell\right)} \\
 t^{\bar 3}_m (u) &= \frac{Q_1\left(u+\ii\frac{m+3}2\right) Q_2\left(u-\ii\frac{m}2\right)}{Q_0\left(u+\ii\frac{m+2}2\right)} \sum_{k=0}^m \frac{Q_0\left(u-\ii\frac{m-2}2+\ii k\right) Q_2\left(u-\ii\frac{m-2}2+\ii k\right)}{Q_1\left(u-\ii\frac{m-3}2+\ii k\right) Q_1\left(u-\ii\frac{m-1}2+\ii k\right)}\times \\
 & \qquad\qquad\qquad\qquad\qquad\qquad \times\sum_{\ell=0}^k \frac{Q_1\left(u-\ii\frac{m-1}2+\ii\ell\right)}{Q_2\left(u-\ii\frac{m}2+\ii\ell\right) Q_2\left(u-\ii\frac{m-2}2+\ii\ell\right)}.
\end{split}
\end{equation}
These explicit formulas can be derived from the more general ``tableaux
sum'' valid in the $SU(N)$-symmetric model
\cite{bazhanov-reshetikhin-rsos-fusion}.
The concrete formula for the general ``tableaux sum'' will be given in the next Section.

It can be checked by direct substitution that
these eigenvalue functions solve the Hirota equation \eqref{su3t} with
boundary conditions $t^{(a)}_0=1$ and
\begin{equation}
  \label{su3fura}
  t^{(0)}_m(u)=\frac{Q_0(u-\ii \frac{m}{2})}{Q_0(u+\ii \frac{m}{2})},
  \qquad t^{(3)}_m(u)=1.
\end{equation}
These boundary conditions might seem somewhat unnatural: they differ
from the most often used conventions, see for example the comparison
on page 18. of \cite{zabrodin-hirota-1}. We apply this normalization
so that the inversion relations hold without additional
``kinematical'' $Q_0$ factors. Also, this normalization is most
convenient for the string-charge relations.

The matching of the formulas \eqref{eq:ttbarQ} with the
normalization of the Lax matrices \eqref{su3R} is checked easily by
computing the eigenvalues on the reference state
$\ket{\emptyset}=\ket{111\dots 1}$ . This amounts to
setting $Q_1(u)=Q_2(u)=1$, and comparing the remaining ratios of $Q_0(u)$
functions to the direct application of \eqref{su3R} using the
eigenvalues of the representation matrices $\Lambda_{11}^{(m,0)}$ and $\Lambda_{11}^{(0,m)}$.

In order to derive the string-charge relations, we are interested in
the thermodynamic limit of the above eigenvalues. We
will argue that in the thermodynamic limit for $u\in\physstr$ the leading terms in the above sums are: 
\begin{equation}
\begin{split}
\label{eq:t3tbar3TDL}
 t^{3}_m (u) &\approx \lim_{\text{TDL}} \frac{Q_1 \left(u-\ii\frac{m}{2}\right)}{Q_1 \left(u+\ii\frac{m}{2}\right)} \\
  t^{\bar 3}_m (u) &\approx \lim_{\text{TDL}} \frac{Q_2 \left(u-\ii\frac m2\right)}{Q_2 \left(u+\ii\frac m2\right)},
\end{split}
\end{equation}
which respectively correspond to the $k=0,\ \ell=0$ and the $k=m,\ \ell=m$ terms. 

Our reasoning is the same as in the previous Sections: we are
selecting those terms from the eigenvalues which automatically produce
the asymptotic inversions \eqref{tm3i}. The eigenvalues of the space
reflected TM's evaluated at $-u$ are given by formally
the same expressions involving the ratios of $Q$-functions, with the
signs of the various shifts reversed. After some inspection it can be
seen that from the various terms in \eqref{eq:ttbarQ} only the ratios 
given in \eqref{eq:t3tbar3TDL} lead to the
inversion relations.
If they would not be the leading terms, then the
asymptotic inversion could not hold, and this proves their dominance.

Let us now evaluate the mean values of the charges $X_m(u)$,
$Y_m(u)$.  We start from a finite volume Bethe state with rapidities
$\{\lambda\}_{N_1},\{\mu\}_{N_2}$. 

Assuming that the dominant terms are given by \eqref{eq:t3tbar3TDL} we
have for the eigenvalues
\begin{equation}
\begin{split}
  X_m (u) &=
   -\ii \lim_{\text{TDL}} \partial_u \log \frac{Q_1  \left(u-\ii\frac{m}{2}\right)}{Q_1 \left(u+\ii\frac{m}{2}\right)}
  =
   -\ii\lim_{\text{TDL}} \sum_{j=1}^{N_1}\left( \frac{1}{u-\lambda_j-\ii\frac{m}{2}} - \frac{1}{u-\lambda_j+\ii\frac{m}{2}}\right)\\
  Y_m (u) &=
  -\ii\lim_{\text{TDL}}\partial_u \log \frac{Q_2  \left(u-\ii\frac{m}{2}\right)}{Q_2 \left(u+\ii\frac m2\right)}
  = -\ii\lim_{\text{TDL}}\sum_{j=1}^{N_2} \left(\frac{1}{u-\mu_j-\ii\frac m2} - \frac{1}{u-\mu_j + \ii\frac m2} \right).
\end{split}
\end{equation}
These formulas refer to the exact Bethe roots. Making use of the
string hypothesis in the TDL we get the expressions
\begin{equation}
\begin{split}
  \label{XYkoztes1}
 X_m(u) &= 2\pi L\sum_{n=1}^\infty \int_{-\infty}^\infty d\lambda\, \rho_n^{(1)}(\lambda) \sum_{j=1}^{\min(m,n)} a_{|n-m|-1+2j}(u-\lambda) = \\ &\qquad = 2\pi\sum_{n=1}^\infty \sum_{j=1}^{\text{min}(m,n)} (a_{|n-m|-1+2j} \star \rho_n^{(1)})(u) \\
  Y_m(u) &= 2\pi L\sum_{n=1}^\infty \int_{-\infty}^\infty d\lambda\, \rho_n^{(2)}(\lambda) \sum_{j=1}^{\min(m,n)} a_{|n-m|-1+2j}(u-\lambda) = \\ &\qquad = 2\pi\sum_{n=1}^\infty \sum_{j=1}^{\text{min}(m,n)} (a_{|n-m|-1+2j} \star \rho_n^{(2)})(u).
\end{split}
\end{equation}
Here the summation runs over the possible $n$ strings. We made use
of the identity
\begin{equation}
\begin{split}
\sum_{\ell=1}^n &\frac{1}{u-\left(\lambda+\ii\left(\frac{n+1}{2}-\ell\right)\right)-\ii\frac{m}{2}} - \frac{1}{u-\left(\lambda+\ii\left(\frac{n+1}{2}-\ell\right)\right)+\ii\frac{m}{2}} = \\ =& 2\pi\ii\sum_{j=1}^{\min(n,m)} a_{|n-m|-1+2j} (u-\lambda).
\end{split}
\end{equation}
The formulas \eqref{XYkoztes1} can be transformed into more compact
forms using standard tricks. This lengthy, but straightforward
computation is delegated to appendix \ref{sec:tbasc}, and the results are the following.

The hole densities can be expressed using the charges as 
\begin{equation}
  \label{eq:SCsu3toki}
\begin{split}
 \rho_{h,m}^{(1)}&= a_m - \frac{1}{2\pi L}\left( X_m^{[+]} + X_m^{[-]} - Y_m \right) \\
 \rho_{h,m}^{(2)}&= - \frac{1}{2\pi L}\left( Y_m^{[+]} + Y_m^{[-]} - X_m \right).
\end{split}
\end{equation}
Inverting this relation we find
\begin{equation}
  \begin{split}
  \label{eq:SCdualityV2itt}
  \frac{1}{2\pi L}   X_m&=-\left(G_1\star  (\rho_{h,m}^{(1)}+a_m)+G_2 \star \rho_{h,m}^{(2)}\right)\\
  \frac{1}{2\pi L}   Y_m&=-\left(G_1\star \rho_{h,m}^{(2)}+G_2 \star (\rho_{h,m}^{(1)}+a_m)\right)
  \end{split}
\end{equation}
with the kernels
\begin{equation}
  \begin{split}
    G_1(x)&=\int \frac{dk}{2\pi}e^{-\ii kx}\hat G_1(k)
=\frac{1}{\sqrt{3}}
\frac{\cosh(\pi/3 x)}{\cosh(\pi x)}=\frac{1}{\sqrt{3}}
   \frac{1}{2\cosh(2\pi x/3)-1} \\
    G_2(x)&=\int \frac{dk}{2\pi}e^{-\ii kx}\hat G_2(k)=\frac{1}{\sqrt{3}}
    \frac{\sinh(\pi/3 x)}{\sinh(\pi x)}=\frac{1}{\sqrt{3}}
    \frac{1}{2\cosh(2\pi x/3)+1}
\end{split}
\end{equation}
Finally, the root densities can be expressed as 
\begin{equation}
  \begin{split}
  \label{eq:SU3SCdualityV3}
     \rho_{m}^{(1)}&=\frac{1}{2\pi L}\left(X_m^{[+]} + X_m^{[-]}-X_{m-1}-X_{m+1} \right)
     \\
     \rho_{m}^{(2)}&=\frac{1}{2\pi L}\left(Y_m^{[+]}+Y_m^{[-]}-Y_{m-1}-Y_{m+1}\right).
 \end{split}
\end{equation}
This is an immediate generalization of the string-charge
relation \eqref{su2sc} of the $SU(2)$-chain. Once again we can observe
the symmetry of the Dynkin diagram of $SU(3)$: exchanging the defining
and conjugate representations is mirrored by the exchange of the two
Bethe rapidity types.

The crucial point in our derivation was selecting the dominant term in
the transfer matrix eigenvalues. The argument based on the asymptotic
inversion only holds for the infinite temperature states, and some
neighborhood of these root distributions. 
At present we can not exclude the existence of Bethe root densities,
which would select a different term, thus violating
\eqref{eq:SU3SCdualityV3}. Nevertheless we performed an independent check in a
particular case, namely for the quench problem with initial state
\begin{equation}
   \ket{\Psi_\delta}=\prod_{j=1}^{L/2} \frac{\ket{11}+\ket{22}+\ket{33}}{\sqrt{3}}
\end{equation}
This quench was studied in \cite{sajat-su3-1,sajat-su3-2} where the
exact root densities were determined using Boundary Quantum Transfer
Matrix methods. Now we computed the mean values of the
first two members 
$X_1(u)$ and $Y_1(u)$ in this initial state and we checked that the relations 
\eqref{eq:SU3SCdualityV3} indeed hold.  This computation is presented
in Appendix \ref{sec:deltastate}.

\section{Generalization to $SU(N)$}
\label{sec:SUNGGE}
In this section, we consider the generalization of the previous
results for $SU(N)$ spin chains. The construction laid out here is a
direct generalization of the case of $SU(3)$. However, most of our
statements here are conjectures, motivated by the earlier
results.

In the $SU(N)$ case the nested Bethe Ansatz
involves $(N-1)$ sets of Bethe rapidities, which will be denoted as
$\{\{\lambda^{(a)}_j\}_{j=1,\dots,N_a}\}_{a=1,\dots,N-1}$. Correspondingly,
the $Q$-functions of the model are
\begin{equation}
Q_0(u)=u^L,\qquad  Q_a(u)=\prod_{j=1}^{N_a} (u-\lambda^{(a)}_j),\qquad
Q_N(u)=1.
\end{equation}
The eigenvalue of the fundamental transfer matrix is
\begin{equation}
  t(u)=\sum_{j=1}^{N} z^{(j)}(u),
\end{equation}
where
\begin{equation}
  \label{zdef}
 z^{(\ell)} (u) = \frac{Q_0(u)}{Q_0(u+\ii)} \frac{Q_{\ell-1}(u+\ii\frac{\ell+1}2)Q_{\ell}(u+\ii\frac{\ell-2}2)}{Q_{\ell-1}(u+\ii\frac{\ell-1}2)Q_{\ell}(u+\ii\frac{\ell}{2})}, \quad \ell =1, \dots ,N.
\end{equation}

The Bethe equations are
\begin{equation}
  \begin{split}
\frac{Q_{\ell-1}(\lambda_j^{(\ell)}+\ii\frac{1}{2})}{Q_{\ell-1}(\lambda_j^{(\ell)}-\ii\frac{1}{2})}
\frac{Q_{\ell}(\lambda_j^{(\ell)}-\ii)}{Q_{\ell}(\lambda_j^{(\ell)}+\ii)}
&\frac{Q_{\ell+1}(\lambda_j^{(\ell)}+\ii\frac{1}{2})}{Q_{\ell+1}(\lambda_j^{(\ell)}-\ii\frac{1}{2})}=-1,\\
&\qquad j=1\dots N_\ell,\quad
\ell=2,\dots,N-1.
  \end{split}
\end{equation}
With our conventions the 1-string solutions to each nesting level are
real rapidities.

Fused transfer matrices $t^{\Lambda}(u)$ can be constructed for every
irreducible representation $\Lambda$ of $SU(N)$
\cite{kulish-resh-sklyanin--fusion,bazhanov-reshetikhin-rsos-fusion}, but a
special role is 
played by those representations that correspond to rectangular Young
diagrams. For the diagram with $a$ rows and $m$ columns the
corresponding transfer matrix is denoted by $t^{(a)}_m(u)$. These
objects satisfy the Hirota equation
\begin{equation}
  \label{suNt}
  \begin{split}
  t^{(a)}_m\left(u+\frac\ii2\right)t^{(a)}_m\left(u-\frac\ii2\right)&=t^{(a)}_{m+1}(u)t^{(a)}_{m-1}(u)+
  t^{(a-1)}_{m}(u)t^{(a+1)}_{m}(u),\\
 &\hspace{2cm}  a=1,\dots,N-1\qquad m=1,2,\dots.
  \end{split}
 \end{equation}
We specify the boundary conditions to this system 
motivated by the $SU(3)$ case: we require
\begin{equation}
  \label{suNfura}
t^{(a)}_0=1,\qquad  t^{(0)}_m(u)=\frac{Q_0(u-\ii \frac{m}{2})}{Q_0(u+\ii \frac{m}{2})},
  \qquad t^{(N)}_m(u)=1.
\end{equation}
Together with \eqref{suNt} this completely determines the
normalization of the local Lax operators and the fused transfer
matrices. 

For each $t^{(a)}_m(u)$ we define its space reflected variant $\bar
t^{(a)}_m(u)$. By using local crossing relations and the conjugation
properties of $SU(N)$ representations we can express each $\bar t_m^{(a)}(u)$
 using $t^{(N-a)}_m(u)$; the resulting relations are generalizations
 of \eqref{tmcrossing} and \eqref{su3crossing}. The space reflected
 TM's are thus not independent, but the precise relation is irrelevant
 for our purposes.

Based on the previous results we formulate:
\begin{conj}
   For $u\in\physstr$ the following asymptotic inversion relations hold:
\begin{equation}
  \label{tmNi}
\begin{split}
  \bar{t}_m^{(a)} (-u) t_m^{(a)} (u)  \approx 1.
\end{split} 
\end{equation}
\end{conj}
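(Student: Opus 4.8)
The plan is to follow verbatim the strategy that already works for the $SU(2)$ chain (the Theorem around \eqref{su2inv}) and for the $m=1$ representations of $SU(3)$ (the Theorem around \eqref{eq:inv1}), since the conjecture asserts precisely that these arguments survive for every rectangular representation. The starting point is the local inversion relation for the $R$-matrix $R^{(a,m)}(u)$ acting on the tensor product of the defining representation and the rectangular $(a,m)$ representation,
\begin{equation}
  R^{(a,m)}(u)\,R^{(a,m)}(-u)=1,
\end{equation}
which is a general property of Yangian $R$-matrices \cite{drinfeld-hopf,yangian-hep-intro} and which is verified explicitly for $SU(3)$ in Appendix \ref{sec:Rinv}. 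From this single local identity the \emph{exact} global inversion of the monodromy matrices, $\bar M_m^{(a)}(-u)\,M_m^{(a)}(u)=1$, follows by disentangling the Lax operators one physical site at a time, exactly as in Fig.~\ref{fig:localinversion}.

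The next step is to reduce the statement to a pair of trace identities. Using the generalization $(t_m^{(a)}(u))^\dagger=\bar t_m^{(a)}(-u^*)$ of the adjoint relation \eqref{su3adjoint}, the Hilbert--Schmidt norm of the difference expands as in \eqref{ezkell} (with $2^{-L}\to N^{-L}$), so that $\bar t_m^{(a)}(-u)\,t_m^{(a)}(u)\approx 1$ is equivalent to
\begin{equation}
\begin{split}
N^{-L}\,\Tr\!\left(\bar t_m^{(a)}(-u)\,t_m^{(a)}(u)\right)&\approx 1,\\
N^{-L}\,\Tr\!\left(t_m^{(a)}(u^*)\,\bar t_m^{(a)}(-u^*)\,\bar t_m^{(a)}(-u)\,t_m^{(a)}(u)\right)&\approx 1.
\end{split}
\end{equation}
Both traces are then evaluated in the crossed channel: stacking the transfer matrices builds $2\times L$ and $4\times L$ vertex lattices, and on a torus their partition functions are governed by the leading eigenvalue of the column-to-column Quantum Transfer Matrices (QTMs) acting on two, respectively four, copies of the auxiliary space $\Lambda^{(a,m)}$. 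The local inversion guarantees that the delta-state $\ket\delta=\sum_j\ket j\otimes\ket j$ (with $j$ running over a basis of $\Lambda^{(a,m)}$) and its doubling are eigenstates of these QTMs, with eigenvalue equal to the dimension $N$ of the physical space $\mathbb{C}^N$ coming from the trace over that space. After the $N^{-L}$ normalization each identity above therefore holds \emph{precisely when} the corresponding delta-state is the dominant eigenvector of its QTM.

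The hard part will be this spectral problem: one must prove that, for every $u\in\physstr$ and every rectangular representation $(a,m)$, the delta-states are the leading eigenvectors, so that the rest of the spectrum is strictly subleading and the subleading corrections are suppressed as $\ordo(e^{-\alpha L})$ with $\alpha\in\valos^+$. Already in $SU(2)$ this is the bottleneck: the two-site QTM could be diagonalized in closed form using the $SU(2)$ algebra, but the four-site QTM was only controlled analytically up to $s=3/2$ and otherwise numerically, and there is no uniform-in-$m$ bound on its spectral gap. In the $SU(N)$ setting the auxiliary dimension $\dim\Lambda^{(a,m)}$ grows rapidly with $a$ and $m$, the QTMs become correspondingly large, and no closed-form diagonalization is available. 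A genuine proof would require either an algebraic argument that identifies the delta-states as the Perron--Frobenius-type dominant vectors uniformly across the whole fusion hierarchy, or a uniform lower bound on the QTM spectral gap valid throughout the physical strip; supplying such a uniform estimate for all $(a,m)$ is exactly the gap that keeps this statement a conjecture rather than a theorem.
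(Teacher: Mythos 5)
The statement you are addressing is stated in the paper as a conjecture, with no proof supplied: the authors only remark that the global inversion of the monodromy matrices follows from the local inversion relations, and that the asymptotic inversion of the transfer matrices ``could be checked'' by constructing the appropriate two-site and four-site Quantum Transfer Matrices, explicitly admitting that they have no general proof beyond the $m=1$ cases treated in Appendix A. Your outline reproduces exactly that intended strategy --- local inversion of $R^{(a,m)}$, exact inversion of the monodromy matrices, reduction of the Hilbert--Schmidt norm to two trace identities, crossed-channel evaluation via QTMs whose delta-states have eigenvalue $N$ --- and you correctly identify the one genuinely missing ingredient, namely a proof that the delta-states are the non-degenerate dominant eigenvectors of the two- and four-site QTMs uniformly in $(a,m)$ throughout the physical strip. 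This is precisely the obstruction the authors themselves point to, so your proposal is as complete as the paper's own treatment; just be aware that what you have written is a correct reduction of the conjecture to an open spectral problem, not a proof of it.
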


In analogy with the earlier results we introduce the operators
\begin{equation}
\begin{split}
\label{eq:tmkDef}
X^{(a)}_m (u) = (-\ii)\partial_u \log t^{(a)}_m(u)\approx
 (-\ii)\bar t^{(a)}_m(-u)\partial_u t^{(a)}_m(u).
\end{split}
\end{equation}

\begin{conj}
For $u\in\physstr$ the traceless operators $\{X_m^{(a)}(u)\}$ are quasi-local.
\end{conj}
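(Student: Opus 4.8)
The plan is to follow the strategy already used for the $SU(2)$ charges $\{X_m(u)\}$ and for the $SU(3)$ cases $\{X_1(u)\},\{Y_1(u)\}$ of Appendix \ref{sec:proofs}, now letting the auxiliary space carry the rectangular representation with $a$ rows and $m$ columns. Recall that quasi-locality demands three properties. The first, translational invariance, is immediate since $X_m^{(a)}(u)$ is built from translationally invariant transfer matrices. Granting the asymptotic inversion \eqref{tmNi}, we may use the local form \eqref{eq:tmkDef}, $X_m^{(a)}(u)\approx(-\ii)\bar t_m^{(a)}(-u)\,\partial_u t_m^{(a)}(u)$, in which $\partial_u$ acts on a single Lax operator at a time. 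This exhibits $X_m^{(a)}(u)$ as a formally extensive sum $\sum_x a(x)$ of a translationally invariant density, so only the norm growth (property~2) and the $L$-independence of local overlaps (property~3) remain to be shown.

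First I would compute the Hilbert--Schmidt norm of the traceless part. By translational invariance $\|\{X_m^{(a)}\}\|_{\text{HS}}^2 = L\sum_{d}\langle\{a(0)\},\{a(d)\}\rangle_{\text{HS}}$, so the linear scaling $\sim L$ is equivalent to summability of the connected density--density correlator in the separation $d$ (the traceless projection is precisely what subtracts the disconnected piece). To evaluate the correlator one inserts the local form together with the adjoint (the $SU(N)$ analog of \eqref{su3adjoint}) into the Hilbert--Schmidt product, producing a trace of four transfer matrices, i.e.\ the same $4\times L$ lattice that appears in \eqref{su224}. This partition function is computed in the crossed channel by a four-site Quantum Transfer Matrix that adds one column per physical site. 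The local inversion relations (the $SU(N)$ analog of \eqref{su3teljeslocalinv}) guarantee that the product of delta-states $\ket{\delta}\otimes\ket{\delta}$, with $\ket{\delta}=\sum_j\ket{j}\otimes\ket{j}$, is an eigenvector whose eigenvalue equals $N$ after the trace in the physical direction. If this is the leading eigenvector, the bulk factorizes, the two derivative insertions decouple exponentially in $d$ at a rate fixed by the spectral gap of the QTM, and the correlator is summable; this yields the $\sim L$ norm. Running the same argument on the $2\times L$ lattice, with a local defect from the finitely supported operator $b$, gives the asymptotic $L$-independence of $\langle b,\{X_m^{(a)}\}\rangle_{\text{HS}}$, which is property~3.

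The decisive and hardest step is proving that the delta-states are genuinely the dominant eigenvectors of the two-site and four-site QTMs for $u\in\physstr$, uniformly in the representation label $(a,m)$ and in $N$. For $N=2$ this was established algebraically for the two-site QTM but only analytically up to $s=3/2$ (and numerically beyond) for the four-site QTM; the higher-rank rectangular representations enlarge both the auxiliary space and the QTM, so a case-by-case diagonalization is hopeless. A satisfactory proof would require a representation-independent control of the QTM spectrum, for instance an argument organizing the crossed-channel transfer matrix around the local inversion and the Yangian/fusion structure of the $R$-matrices, or a Perron--Frobenius-type estimate showing that the singlet delta-state is separated by a finite gap throughout the physical strip. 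Establishing such a uniform gap is exactly the missing ingredient, which is why the statement is formulated as a conjecture; the proven $m=1$ cases of Appendix \ref{sec:proofs} and the complete $SU(2)$ results provide only strong supporting evidence.
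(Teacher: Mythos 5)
Your proposal follows exactly the strategy the paper itself adopts: the statement is left as a conjecture precisely because, beyond the $m=1$ cases treated in Appendix \ref{sec:proofs} (and the $SU(2)$ results, analytic only up to $s=3/2$), the dominance of the delta-state eigenvalue of the two- and four-site crossed-channel QTMs has not been established for general rectangular representations. Your reduction — asymptotic inversion, the local form of $X_m^{(a)}(u)$, the $2\times L$ and $4\times L$ partition functions, and the cancellation of the $\ordo(L^2)$ terms once the leading eigenvector factorizes — matches the paper's argument for the proven special cases, and you correctly identify the uniform spectral gap of the QTMs on $\physstr$ as the single missing ingredient.
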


The string-charge relations can be established if the eigenvalues of
the fused transfer matrices are known. General expressions using Young
tableaux were derived in \cite{bazhanov-reshetikhin-rsos-fusion}, see also
Section 7 of the review
\cite{suzuki-kuniba-tomoki-t-system-y-system-review}.
The rule to compute the eigenvalues is the following. Let us consider the
    $(a\times m)$ Young diagram, and all possible semi-standard Young
    tableaux, i.e. the filling of the diagram with numbers $1,\dots,N$
    such that they are increasing from top to bottom and
    non-decreasing from left to right. For example, for the Young
    diagram
    \begin{equation}
      \yng(2,2)  
    \end{equation}
    the possible semi-standard tableaux for $N=3$ are
\begin{equation}
   \young(11,22),\
   \young(11,23),\
   \young(11,33),\
   \young(12,23),\
   \young(12,33),\
   \young(22,33)
\end{equation}
In the following let $\tau_{kl}$ denote the element  of a tableau $\tau$ in
row $k=1\dots a$ and column $l=1\dots m$ from the top left. Then the
formula for the eigenvalues is \cite{bazhanov-reshetikhin-rsos-fusion}
\begin{equation}
  \label{tam}
    t^{(a)}_m(u)=\prod_{j=1}^{a-1} \frac{Q_0(u+\ii\frac{m-a+2j}{2})}{Q_0(u-\ii\frac{m-a+2j}{2})}
\times    \sum_{\tau }
\left[ \mathop{\prod_{k=1\dots a}}_{l=1\dots m} z^{(\tau_{kl})}\left(u+\ii \frac{a-m-2k+2l}{2}\right)\right].
  \end{equation}
Here the sum runs over all allowed semi-standard tableaux of size
$(a\times m)$ for the
given $N$, and the $z$-functions are defined in \eqref{zdef}. The presence of the
pre-factor before the sum is a consequence of our
normalization.

In a perhaps more direct way the transfer matrices can be expressed
as \cite{bazhanov-reshetikhin-rsos-fusion} 
\begin{align}
 t^{(a)}_m (u) &= \det \left( t^{(1)}_{m-i+j}\big( u+\ii\frac{i+j-1-a}{2} \big) \right)_{1\le i,j\le a} \\
 \label{eq:Tsystemdetsol}
 &= \det \left( t^{(a-i+j)}_1 \big( u+\ii\frac{m-i-j+1}{2} \big) \right)_{1\le i,j \le m},
\end{align}
using the two series $t^{(1)}_m$ or $t_{1}^{(a)}$, for which we have
the formulas
  \begin{equation}
    \begin{split}
  \label{ta1}
  t_m^{(1)} (u) &=
      \sum_{1\le i_1 \le i_2 \le \dots \le i_m \le N}
\left[\prod_{\ell=1}^m z^{(i_\ell)}\left(u+\ii \frac{-1-m+2\ell}{2}\right)\right]\\
  t_1^{(a)} (u) &=
\frac{Q_0(u+\frac{a-1}{2}\ii)}{Q_0(u-\frac{a-1}{2}\ii)}\times
         \sum_{1\le i_1 < i_2 < \dots < i_a \le N}\left[
\prod_{\ell=1}^a z^{(i_\ell)}\left(u+\ii \frac{a+1-2\ell}{2}\right)\right].
\end{split}
  \end{equation}
  The latter formulas are special cases of the general
  tableaux sum.

  \begin{thm}
    For each $a=1\dots N-1$ and $m=1,2,\dots$ there is a single term
    in the expansion \eqref{tam} of $t^{(a)}_m(u)$ which automatically produces
    the asymptotic inversion \eqref{tmNi}. This term corresponds to
    the Young tableau where all elements of row $k$ are equal to $k$
    for $k=1,\dots,a$. The explicit form of this term is
    \begin{equation}
      \label{Qam}
     \frac{Q_a\left(u-\ii\frac m2\right)}{Q_a\left(u+\ii\frac m2\right)}.
    \end{equation}
  \end{thm}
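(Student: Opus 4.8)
The plan is to isolate the single surviving term of \eqref{tam}, evaluate it in closed form by a telescoping argument, and then read off the inversion from reflection symmetry; the word ``single'' (uniqueness) is the delicate point. Denote by $\tau_0$ the tableau with all entries of row $k$ equal to $k$. It is semi-standard, since its columns read $1<2<\dots<a$ and its rows are constant, and its contribution to the tableaux sum in \eqref{tam} is
\begin{equation*}
\prod_{k=1}^{a}\prod_{l=1}^{m} z^{(k)}\!\left(u+\ii\tfrac{a-m-2k+2l}{2}\right),
\end{equation*}
dressed by the overall $Q_0$ pre-factor. A useful preliminary is the identity $z^{(1)}(v)=Q_1(v-\tfrac{\ii}{2})/Q_1(v+\tfrac{\ii}{2})$, which holds because the explicit $Q_0$-ratio in \eqref{zdef} cancels against the $Q_0$-factors produced by $Q_{\ell-1}$ at $\ell=1$.

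The computation proceeds in two telescoping stages. First, for each fixed row $k$ the product over the columns $l=1,\dots,m$ collapses independently in each of the three $Q$-blocks of \eqref{zdef}, because successive column shifts differ by $\ii$. This gives the single-row factor
\begin{equation*}
R_k=\frac{Q_0\!\left(u+\ii\tfrac{a-m-2k+2}{2}\right)}{Q_0\!\left(u+\ii\tfrac{a+m-2k+2}{2}\right)}\,
\frac{Q_{k-1}\!\left(u+\ii\tfrac{a+m-k+1}{2}\right)}{Q_{k-1}\!\left(u+\ii\tfrac{a-m-k+1}{2}\right)}\,
\frac{Q_{k}\!\left(u+\ii\tfrac{a-m-k}{2}\right)}{Q_{k}\!\left(u+\ii\tfrac{a+m-k}{2}\right)}.
\end{equation*}
Second, I would form $\prod_{k=1}^{a}R_k$: the $Q_k$-factor of $R_k$ is exactly the reciprocal of the $Q_k$-factor carried by $R_{k+1}$ as its middle ($Q_{(k+1)-1}$) term, so the intermediate levels $Q_1,\dots,Q_{a-1}$ cancel pairwise and only the $Q_a$-factor of $R_a$ remains, namely $Q_a(u-\ii\tfrac m2)/Q_a(u+\ii\tfrac m2)$. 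What is left is to confirm that every $Q_0$ disappears: the $Q_0$-blocks of all $R_k$, the stray $Q_0$ sitting in $R_1$ (where $Q_{k-1}=Q_0$), and the $Q_0$ pre-factor of \eqref{tam} cancel completely, which is a finite matching of shift multisets. The outcome is precisely \eqref{Qam}.

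With the closed form in hand, the inversion is immediate. Space reflection acts on Bethe eigenvalues by negating all rapidities, hence by reversing every $\ii$-shift; the same tableau $\tau_0$ therefore contributes $Q_a(u+\ii\tfrac m2)/Q_a(u-\ii\tfrac m2)$ to $\bar t^{(a)}_m(-u)$, and the product of the two canonical contributions equals $1$ with no input from the Bethe equations. This is the very mechanism already exhibited for $SU(2)$ and $SU(3)$: one term of the expansion reproduces the right-hand side of \eqref{tmNi} automatically, while all competing terms are $Q$-ratios that grow or decay exponentially in $L$ and are thus suppressed for $u\in\physstr$.

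The remaining, and I expect hardest, step is uniqueness. One must show $\tau_0$ is the only tableau whose contribution, multiplied by its shift-reversed image, is identically $1$; equivalently, the only one for which reversing all $\ii$-shifts coincides with swapping numerator and denominator. For $\tau_0$ this holds because the contribution has telescoped to a single clean $Q_a$-ratio, whereas any other semi-standard tableau retains uncancelled $Q_0$-factors and/or $Q_i$-factors whose numerator and denominator shifts are not negatives of one another, so the associated product is a genuinely $u$- and state-dependent ratio rather than a constant. I would make this rigorous using the extremality of $\tau_0$ as the unique entrywise-minimal element in the natural order on semi-standard tableaux, and arguing that any admissible modification of a single box strictly unbalances the shift multiset at some level $Q_i$. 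Upgrading this from the case inspection that sufficed for $m=1$ to a uniform statement over all tableaux is the part that requires the most care; by contrast the telescoping is routine once the bookkeeping is set up.
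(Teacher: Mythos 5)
Your evaluation of the distinguished term is correct and in fact more explicit than what the paper records: the single-row telescoping giving $R_k$, the pairwise cancellation of the $Q_k$-block of $R_k$ against the $Q_{(k+1)-1}$-block of $R_{k+1}$, and the matching of the residual $Q_0$ shifts against the pre-factor of \eqref{tam} all check out, and the reflection argument ($\bar t^{(a)}_m(-u)$ carrying the same $Q$-structure with all shifts negated, so that the two canonical terms multiply to $1$ identically) is exactly the mechanism the paper uses. Where you diverge from the paper is the uniqueness step, which you correctly identify as the delicate point but leave as a plan (``extremality of $\tau_0$'', ``unbalances the shift multiset at some level''). The paper handles this by a concrete recursion that you should adopt, since it turns your vague extremality idea into a finite induction: the entry $\tau_{am}$ in the bottom-right corner is the maximal entry of the tableau and, by strictness down the columns, the value $\tau_{am}$ can occur only in row $a$, where it fills a suffix of that row, say columns $l_0,\dots,m$; since $Q_{\tau_{am}}$ receives contributions only from those boxes, their product telescopes to a single ratio $Q_{\tau_{am}}(u+\ii c)/Q_{\tau_{am}}(u+\ii d)$ whose shifts are reflection-compatible only if $l_0=a-\tau_{am}+1$, and combined with $\tau_{am}\ge a$ (again column strictness) this forces $\tau_{am}=a$ with the value $a$ filling the whole bottom row; one then strips that row and repeats for row $a-1$. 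This is more robust than a single-box perturbation argument, because distinct tableaux can differ in many boxes and one must rule out accidental cross-cancellations between different $Q_i$-blocks; the row-by-row recursion avoids that by always isolating the unique level $Q_v$ fed by the current maximal value. To be fair, the paper's own write-up of this recursion is also informal (``a simple consequence of the various shifts''), so your proposal is at a comparable level of completeness overall: stronger on the explicit evaluation of \eqref{Qam}, weaker on the uniqueness, and the recursion above is the missing ingredient that would close your flagged gap.
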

  \begin{proof}
This can be proven recursively: starting from the bottom right element
of the Young diagram, and afterwards considering the elements in the
upper rows. If $\tau_{am}$ is the element in the bottom right corner,
then this number can only be present in the bottom row of the
diagram. The corresponding $z$-factors will involve the $Q$-functions
$Q_0$, $Q_{\tau_{am}-1}$ and $Q_{\tau_{am}}$. The factors of
$Q_{\tau_{am}}$ in the product can only come from the boxes filled
with $\tau_{am}$, which can occupy a number of cells to the left of
the bottom right corner. Collecting these factors we see that the resulting
combination of the $Q_{\tau_{am}}$ functions can yield a form
satisfying the inversion relation only if $\tau_{am}=a$ and this
number occupies the full bottom row; this is a simple consequence of the
various shifts present in the $z$-functions and the $Q$-functions. The
proof continues by considering the element $\tau_{a-1,m}$ and the
corresponding $Q$-functions, showing that the only possibility is that
$\tau_{a-1,m}=a-1$ and this number has to fills the row $a-1$. This is then
repeated for all rows upwards. Collecting all factors of the $Q_0$
functions we see that they just cancel each other, and we obtain \eqref{Qam}.
  \end{proof}

Using our arguments of the previous Sections it follows that in the
thermodynamic limit this term has to be dominant for $u\in\physstr$:
\begin{equation}
  \label{tmalead}
 t_m^{(a)} (u) \approx \frac{Q_a\left(u-\ii\frac m2\right)}{Q_a\left(u+\ii\frac m2\right)},
\qquad a=1,\dots,N-1,\quad m=1,2,\dots
\end{equation}
A special case of this statement already appeared in a closely related
problem in \cite{bazhanov-reshetikhin-rsos-fusion}.  Note that due to our
boundary conditions \eqref{suNfura} the above relation holds even for
$a=0$ and $a=N$, using the convention $Q_N(u)=1$.

Based on \eqref{tmalead} and the similarities in the derivation of the TBA
equations for all $N$, we propose the following general pattern:
\begin{conj}
In the $SU(N)$-symmetric model the string-charge relations are
\begin{equation}
\begin{split}
 \rho_m^{(a)} &= \frac{1}{2\pi L}\left(X_m^{(a)[+]} + X_m^{(a)[-]} - X^{(a)}_{m+1} - X^{(a)}_{m-1}\right), \\
&\hspace{2cm}  a=1\dots N-1, \qquad m=1\dots\infty,
\end{split}
\end{equation}
where $\rho^{(a)}_m(u)$ is the density of $m$-strings of rapidity type $a$.
\end{conj}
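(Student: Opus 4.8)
The plan is to reduce the $SU(N)$ string-charge relations to $N-1$ decoupled copies of the $SU(2)$ computation that produced \eqref{su2sc}. The decisive input is the asymptotic dominance \eqref{tmalead}, which states that for $u\in\physstr$ the fused eigenvalue collapses to a ratio built solely from the level-$a$ $Q$-function, $t_m^{(a)}(u)\approx Q_a(u-\ii\frac m2)/Q_a(u+\ii\frac m2)$. Because this expression involves only $Q_a$, the eigenvalue of the charge $X_m^{(a)}$ defined in \eqref{eq:tmkDef} depends on the Bethe roots of a single nesting level, exactly as $X_m$ depended only on $\{\lambda\}$ in the XXX case. This decoupling is precisely what makes the conjectured relation carry no cross-level couplings, and it reduces the whole problem to repeating the $SU(2)$ argument once per level $a$.

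First I would evaluate $X_m^{(a)}(u)$ on a finite-volume Bethe state. Substituting $Q_a(u)=\prod_{j=1}^{N_a}(u-\lambda_j^{(a)})$ into \eqref{tmalead} gives $X_m^{(a)}(u)\approx(-\ii)\sum_j[(u-\lambda_j^{(a)}-\ii\frac m2)^{-1}-(u-\lambda_j^{(a)}+\ii\frac m2)^{-1}]$. Invoking the string hypothesis for the level-$a$ roots and using the same string-summation identity as in the $SU(3)$ derivation, this becomes
\begin{equation}
X_m^{(a)}(u)=2\pi L\sum_{n=1}^\infty\sum_{j=1}^{\min(m,n)}\left(a_{|n-m|-1+2j}\star\rho_n^{(a)}\right)(u),
\end{equation}
which is identical in form to the $SU(3)$ expression \eqref{XYkoztes1} with $\rho_n^{(1)}$ replaced by $\rho_n^{(a)}$. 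Up to this point nothing beyond the $SU(3)$ algebra has been used; only the level label $a$ changes.

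The final step is the linear-algebraic kernel identity underlying \eqref{su2sc}, which I would carry out in Fourier space with $\hat a_p(k)=e^{-p|k|/2}$ and $\hat s(k)=1/(2\cosh(|k|/2))$. Defining
\begin{equation}
A_{n,m}(k)=\sum_{j=1}^{\min(m,n)}e^{-(|n-m|-1+2j)|k|/2}=\frac{\sinh\!\left(\min(n,m)\tfrac{|k|}{2}\right)}{\sinh\!\left(\tfrac{|k|}{2}\right)}\,e^{-\max(n,m)|k|/2},
\end{equation}
one has $\widehat{X_m^{(a)}}(k)=2\pi L\sum_n A_{n,m}(k)\,\widehat{\rho_n^{(a)}}(k)$. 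Since $f\mapsto f^{[+]}+f^{[-]}$ multiplies by $2\cosh(|k|/2)=\hat s(k)^{-1}$, the combination in the statement equals $2\pi L\sum_n[2\cosh(\tfrac{|k|}{2})A_{n,m}-A_{n,m+1}-A_{n,m-1}]\widehat{\rho_n^{(a)}}$. A short geometric-series computation using $\sinh((m\pm1)\kappa)=\sinh(m\kappa)\cosh\kappa\pm\cosh(m\kappa)\sinh\kappa$ gives
\begin{equation}
2\cosh\!\left(\tfrac{|k|}{2}\right)A_{n,m}(k)-A_{n,m+1}(k)-A_{n,m-1}(k)=\delta_{n,m},
\end{equation}
where for $n\ne m$ the terms cancel and for $n=m$ one obtains the constant Fourier multiplier $1$, i.e.\ a rapidity delta function; the boundary value $X_0^{(a)}=0$ coming from $t_0^{(a)}=1$ makes the $m=1$ case consistent. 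This collapses the double sum to $2\pi L\,\widehat{\rho_m^{(a)}}(k)$, which is the claimed relation after inverse Fourier transform.

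The main obstacle is not any of the steps above, all of which are routine once \eqref{tmalead} is granted, but the justification of \eqref{tmalead} itself for arbitrary root densities. The identification of the diagonal tableau as the dominant term rests on the asymptotic inversion of Conjecture \eqref{tmNi}, and the dominance argument is rigorous only for the infinite-temperature ensemble and a neighbourhood of it, exactly as acknowledged in the $SU(3)$ case. Showing that no other semi-standard tableau term can overtake \eqref{Qam} for a physically admissible family $\{\rho_n^{(a)}\}$ far from infinite temperature --- or else producing a counterexample --- is the genuine gap; everything downstream of \eqref{tmalead} is merely an $a$-indexed repetition of the $SU(2)$ string-charge calculation.
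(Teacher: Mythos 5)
Your derivation is correct, and it is conditional on exactly the same inputs the paper itself relies on; but it takes a genuinely more direct route than the paper, which in fact never derives this statement at all --- it is stated there as a conjecture motivated purely by \eqref{tmalead} and by ``the similarities in the derivation of the TBA equations for all $N$'', i.e.\ by analogy with the $N=2,3$ computations. For those lower-rank cases the paper's actual route (Appendix \ref{sec:tbasc}) goes through the TBA system: it first inverts the string sums to get hole-density relations such as \eqref{eq:SCsu3toki}, which carry cross-level couplings ($Y_m$ appears in $\rho^{(1)}_{h,m}$ and vice versa), and only after a second pass through the decoupled TBA equations \eqref{eq:root1} and the kernel identities $s\star(a_{m-1}+a_{m+1})=a_m$, $G_2=G_1\star s$ does it arrive at the decoupled root-density form \eqref{eq:SU3SCdualityV3a}. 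You bypass the TBA system entirely: since \eqref{tmalead} involves only $Q_a$, the mean value of $X_m^{(a)}$ is a single-level string sum, and your Fourier-space identity
\begin{equation}
2\cosh\left(\tfrac{|k|}{2}\right)A_{n,m}(k)-A_{n,m+1}(k)-A_{n,m-1}(k)=\delta_{n,m},
\qquad A_{n,m}(k)=\frac{\sinh\left(\min(n,m)\tfrac{|k|}{2}\right)}{\sinh\left(\tfrac{|k|}{2}\right)}\,e^{-\max(n,m)\tfrac{|k|}{2}},
\end{equation}
(which I have checked case by case for $n>m$, $n<m$, $n=m$, including the $m=1$ boundary with $X_0^{(a)}=0$) inverts the string sum level by level. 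This buys two things the paper's presentation does not: it upgrades the $SU(N)$ statement from an analogy to an actual derivation modulo \eqref{tmalead} and the string hypothesis, and it explains structurally \emph{why} the final root-density relations decouple in $a$ even though the intermediate hole-density relations do not --- the cross-level information enters only through which term of the tableaux sum dominates, not through the kernel inversion. Specialized to $N=3$ your computation is also an independent check of Appendix \ref{sec:tbasc}. You are likewise right about where the genuine gap sits: the dominance \eqref{tmalead} rests on the asymptotic inversion conjecture \eqref{tmNi} and is rigorous only for the infinite-temperature ensemble and a neighbourhood of it, a caveat the paper states explicitly in the $SU(3)$ discussion; everything downstream is, as you say, routine. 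One small point of care worth recording: the shifts $X_m^{(a)[\pm]}$ must be taken as the boundary values $\lim_{\eps\to0}X_m^{(a)}(u\pm\tfrac{\ii}{2}\mp\eps)$, since the multiplier $e^{\pm k/2}A_{m,m}(k)$ tends to a constant at large $|k|$ --- this is precisely the delta-function contribution that survives in your identity, and it is consistent with the paper's definition of $f^{[\pm]}$.
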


\section{Conclusions}

\label{sec:conclusions}

We studied the GET (Generalized
Eigenstate Thermalization) for higher rank spin models with
$SU(N),\ N\ge 3$ symmetry, with the main focus being on the $N=3$ Lai-Sutherland
model. We argued that a complete set of charges is obtained from the
known fusion hierarchy of transfer matrices.
These fused transfer matrices correspond to the representations of
$SU(3)$ with rectangular diagrams of size $1\times m$ and
$2\times m$, $m=1,\dots,\infty$, or equivalently, to symmetrically
fused defining and conjugate representations, respectively.
We computed the
thermodynamic limit of these charges: the resulting string-charge
relations take essentially the same form as in the $SU(2)$-invariant
XXX chain, with the simple extension of having two particle types and
two series of fused charges.

These results in the $SU(3)$ chain possess a conjugation symmetry:
exchanging the two families of fused transfer matrices corresponds to
exchanging the defining and conjugate representations of $SU(3)$. On
the level of the string-charge relations this is reflected by an
exchange of the two particle types. The final relations
\eqref{eq:SU3SCdualityV3} are completely symmetric with respect to
this conjugation.

The strictly local charges of the model are
only sensitive to the particles of the first type, and their finite
volume mean values are computed as
\begin{equation}
  \mathcal{Q}_\alpha=\sum_{j=1}^{N_1} q_\alpha(\lambda^{(1)}_j),
\end{equation}
with $q_\alpha(u)$ being the one-particle eigenvalues.
This raises the
question: are there local or quasi-local operators which are sensitive
only to the second type of particles? Our results show that the
$Y_m(u)$ are such operators {\it in the thermodynamic limit, for almost
all Bethe states.} There is no local operator whose
finite volume eigenvalues would take the form
\begin{equation}
  Y=\sum_{j=1}^{N_2}  f(\lambda^{(2)}_j).
\end{equation}
The string-charge relations are only found in the thermodynamic limit,
and hold only for {\it almost all Bethe states}. The crucial step is 
 the dominance of a prescribed term in the expressions of the transfer
 matrix eigenvalues. 

Based on the derivation for $N=2,3$
we conjectured generic results for arbitrary $N$. In particular, we
conjectured that the complete GGE is formed by the 
charges built on the transfer matrices corresponding to the $(a\times
m)$ Young diagrams with $a=1\dots N-1$ and $m=1\dots\infty$.

In some sense these results are not surprising. It is known that in an
integrable model with symmetry group $G$ the structure of the
nested Bethe Ansatz (for example, the set of the Bethe equations)
mirrors  the Dynkin diagram of $G$. The fusion rules for the transfer
matrices also closely follow the
Dynkin diagram. It is thus not surprising that the $T$-system and the
set of TBA equations determining the Bethe root densities are so
closely related. This correspondence was noted and used in a large
number of works already at the end of the 80's and beginning of the
90's; for concrete references see the thorough review
\cite{suzuki-kuniba-tomoki-t-system-y-system-review}. 
The new addition to the theory was the discovery that in the XXX model
the formal expressions of the type $\partial_u \log(t_m(u))$ yield quasi-local
operators for $u\in\physstr$
\cite{prosen-xxx-quasi,prosen-enej-quasi-local-review}, and that in
the TDL they contain just enough  information to fix all root
densities. What we have performed in this work is to extend this
observation to the $SU(N)$-symmetric fundamental models.

In accordance, the crucial points of our work are the proofs of the
inversion relations and the quasi-locality. We argued that the
quasi-locality property follows once the inversion relation is established on
the  level of the operators. We computed a detailed proof
in two cases, namely for the operators $t^{(1)}_1(u)$ and
$t^{(2)}_1(u)$ for $SU(3)$, which correspond to the defining and
conjugate representations. Although these are just two particular cases, we
believe they constitute strong justification for the remaining
conjectures. Also, we remind that even
in the $SU(2)$ case complete analytical proofs are available only up
to $s=3/2$ \cite{prosen-xxx-quasi,prosen-enej-quasi-local-review}.

Naturally, it would be desirable to have explicit proofs in
more cases, possibly for the whole fusion hierarchy. On
the technical level, the task to be performed is the diagonalization
of a 4-site transfer matrix, where these 4 sites carry some
fused representations of the symmetry group. Such transfer matrices
can be diagonalized by the Bethe Ansatz, and a quite general and completely analytical
approach is detailed for example in \cite{fast-bethe-solver}. It
remains to be seen whether this or any alternative techniques
prove to be useful for the problem at hand.

Also, it would be interesting to find a more general prescription for the
GGE. Based on our results it seems plausible
 that the string-charge relations are always encoded in the
known fusion hierarchy of the theory
\cite{junji-suzuki-kuniba-tomoki-fusion,suzuki-kuniba-tomoki-t-system-y-system-review}.

\vspace{1cm}
{\bf Acknowledgments} 
\bigskip

The authors would like to thank Tam\'as Gombor, Enej Ilievski, M\'arton Mesty\'an,
Junji Suzuki, and G\'abor Tak\'acs for useful discussions.
This
research was supported by the BME-Nanotechnology FIKP grant
of EMMI (BME FIKP-NAT), by the National Research Development
and Innovation Office (NKFIH) (K-2016 grant no. 119204, the OTKA
grant no. SNN118028, and the KH-17 grant no. 125567), and by the ``Premium'' Postdoctoral
Program of the Hungarian Academy of Sciences.

\bigskip

\appendix

\section{Derivation of the asymptotic inversion and the quasi-locality
property}

\label{sec:proofs}

We present here the derivation of the asymptotic inversion and
quasi-locality property for $t(u)$ and $t^{\bar 3}(u)$. The two
computations are quite similar, using same techniques and
the same objects.

We need the adjoints of transfer matrices, hence we start with expressing them.\\
\begin{lemma}
  \label{adjointlem}
 The adjoints of the $t,t^{\bar 3}$ transfer matrices, and their respective space reflected pairs are the following:
 \begin{align}
  t^\dagger(u) &= \bar t(-u^*) & \big(t^{\bar 3}\big)^\dagger(u)&= \bar{t}^{\bar 3} (-u^*)\\
  \bar{t}^\dagger(u) &= t(-u^*) & \big(\bar{t}^{\bar 3}\big)^\dagger(u)&= t^{\bar 3} (-u^*).
 \end{align}
\end{lemma}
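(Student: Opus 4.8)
The plan is to reduce the global statement about transfer matrices to a purely local statement about the two $R$-matrices and then propagate it through the monodromy product. The crucial observation is that Hermitian conjugation reverses the order of a product of operators, which is exactly the operation relating $t$ to its space-reflected partner $\bar t$; hence the adjoint of $t(u)$ should naturally produce a $\bar t$ evaluated at some reflected argument.

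First I would establish the local adjoint identities
\begin{equation}
  R(u)^\dagger = R(-u^*), \qquad R^{\bar 3}(u)^\dagger = R^{\bar 3}(-u^*).
\end{equation}
For the fundamental $R$-matrix this follows from the Hermiticity of the permutation operator, $P^\dagger=P$: writing $R(u)=(u+\ii P)/(u+\ii)$ as in \eqref{Rdef} one computes $R(u)^\dagger = (u^*-\ii P)/(u^*-\ii)$, and a direct comparison shows this equals $R(-u^*)$. For the conjugate $R$-matrix the same mechanism works with $K$ in place of $\ii P$: since $K=\ket{\delta}\bra{\delta}$ with $\ket{\delta}=\sum_a\ket{aa}$ is manifestly Hermitian, $K^\dagger=K$, and the elementary algebra on \eqref{eq:R3barDef} again yields $R^{\bar 3}(u)^\dagger=R^{\bar 3}(-u^*)$.

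Next I would lift these to the transfer matrices. The key auxiliary fact is that Hermitian conjugation in the physical space commutes with the auxiliary trace, namely $(\Tr_0 M)^\dagger = \Tr_0(M^\dagger)$ for any $M\in\text{End}(\mathcal H_L\otimes h_0)$, where on the right $\dagger$ denotes conjugation on the full space; this is immediate by inserting a basis of $h_0$ into the matrix elements. Applying it to \eqref{tdef} and using that the adjoint reverses the order of the monodromy product gives
\begin{equation}
  t(u)^\dagger = \Tr_0\left[R_{L0}(u)^\dagger\cdots R_{10}(u)^\dagger\right]
  = \Tr_0\left[R_{L0}(-u^*)\cdots R_{10}(-u^*)\right] = \bar t(-u^*),
\end{equation}
which is the first claimed identity. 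The remaining three relations follow identically: the reversed-order definition of $\bar t$ in \eqref{highertdef2}-type form gives $\bar t(u)^\dagger = t(-u^*)$, and substituting $R^{\bar 3}$ for $R$ throughout produces the two conjugate-representation statements.

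I do not expect a genuine obstacle here; the entire content sits in the two local adjoint identities, and the only point demanding care is bookkeeping of where the conjugation acts. Specifically, one must track that $\dagger$ on a single $R_{j0}$ acts jointly on the quantum space $h_j$ and the auxiliary space $h_0$, so that it is the Hermiticity of $P$ (resp.\ $K$) on $h_j\otimes h_0$ that is being invoked, and that the order reversal under $\dagger$ is precisely what converts the product defining $t$ into the one defining $\bar t$. Once these conventions are fixed, each of the four identities reduces to a one-line computation.
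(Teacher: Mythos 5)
Your proposal is correct and follows essentially the same route as the paper: a local identity for the $R$-matrices at $-u^*$, combined with the fact that conjugation reverses the order of the monodromy product and commutes with the auxiliary trace, which converts $t$ into $\bar t$. The paper merely phrases the local step via element-wise complex conjugation, $\left(R(u)\right)^*=R(-u^*)$, followed by transposition, which coincides with your direct adjoint computation because $P$ and $K$ are real symmetric.
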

\begin{proof}
Denote by $A^*$ the complex conjugate of $A$, for any
matrix or complex number. If $A$ is matrix, the complex conjugation is
considered element-wise.

Direct computation shows that the $R$-matrices satisfy:
\begin{equation}
 \left(R(u)\right)^* = R(-u^*),\qquad \left(R^{\bar 3}(u)\right)^* = R^{\bar 3}(-u^*).
\end{equation}
Taking transposition before partial trace reverses the order of $R$-matrices, from where the statement follows.   
\end{proof}

\textbf{Proof of asymptotic inversion.} To prove the \eqref{eq:inv1} asymptotic inversion, we prove the following two statements regarding the norms of the operators:
\begin{equation}\label{eq:AsympInvNorm}
\begin{split}
 \|t(u)\bar t(-u) -1\|^2_{\text{HS}} &\approx 0 \\
 \|t^{\bar 3}(u)\bar t^{\bar 3}(-u) -1 \|^2_{\text{HS}} &\approx 0.
\end{split}
\end{equation}
Expanding the l.h.s. of these expressions leads to the following:
\begin{equation}
\begin{split}
 \|t(u)\bar t(-u) -1\|^2_{\text{HS}} &= 3^{-L} \Tr\, t(u^*)\bar t(-u^*) t(u)\bar t(-u) - 3^{-L} 2\, \Re\, \Tr\, t(u)\bar t(-u) +1\\
 \|t^{\bar 3}(u)\bar t^{\bar 3}(-u) -1 \|^2_{\text{HS}} &= 3^{-L} \Tr\, t^{\bar 3}(u^*)\bar t^{\bar 3}(-u^*) t^{\bar 3}(u)\bar t^{\bar 3}(-u) - \\ & \qquad\qquad\qquad\qquad\qquad-3^{-L} 2\, \Re\, \Tr\, t^{\bar 3}(u)\bar t^{\bar 3}(-u) +1.
\end{split}
\end{equation}
As explained in Section \ref{sec:invqloc} the traces can be expressed
as a partition function of a 2D lattice model, which can be
alternatively evaluated by the Quantum Transfer Matrices acting in the
``crossed channel''. See also \cite{prosen-enej-quasi-local-review}.

We thus get
\begin{equation}
  \label{ujabbtraces}
\begin{split}
 \|t(u)\bar t(-u) -1\|^2_{\text{HS}} &= 3^{-L} \Tr\, t_{AB} (u,u,u^*,u^*)^L -3^{-L}2\, \Re\, \Tr\, t_A(u,u)^L +1, \\
 \|t^{\bar 3}(u)\bar t^{\bar 3}(-u) -1 \|^2_{\text{HS}} &= 3^{-L} \Tr\, \bar{t}_{AB} (u,u,u^*,u^*)^L -3^{-L}2\, \Re\, \Tr\, \bar{t}_A(u,u)^L +1,
\end{split}
\end{equation}
where 
\begin{equation}
\begin{split}
 t_{AB}(u_1,u_2,v_1,v_2)&= \Tr_a R_{a,1}(v_2) R_{a,2}^t(-v_1) R_{a,1}(u_1) R_{a,2}^t(-u_2) \\
 t_A(u_1,u_2)&=\Tr_a R_{a,2}^t(-u_2)R_{a,1}(u_1)\\
 t_{AB}^{\bar 3}(u_1,u_2,v_1,v_2)&= \Tr_a R_{a,1}^{\bar 3}(v_2) (R_{a,2}^{\bar 3})^t(-v_1) R_{a,1}^{\bar 3}(u_1) (R_{a,2}^{\bar 3})^t(-u_2) \\
 t_A^{\bar 3}(u_1,u_2)&=\Tr_a (R_{a,2}^{\bar 3})^t(-u_2)R_{a,1}^{\bar 3}(u_1).
\end{split}
\end{equation}
The traces in \eqref{ujabbtraces} can be expressed using the
eigenvalues of the above matrices. Let us denote by
$\lambda_{AB,j}(u_1,u_2,v_1,v_2),\ j =1\dots 3^4$ and
$\lambda_{A,j}(u_1,u_2),\ j =1\dots 3^2$ the eigenvalues of
$t_{AB}(u_1,u_2,v_1,v_2)$ and $t_A(u_1,u_2)$, respectively. Similarly,
we denote by $\bar\lambda_{AB,j}(u_1,u_2,v_1,v_2),\ j =1\dots 3^4$ and
$\bar\lambda_{A,j}(u_1,u_2),\ j =1\dots 3^2$ the eigenvalues for
$\bar{t}_{AB}(u_1,u_2,v_1,v_2)$ and $\bar{t}_A(u_1,u_2)$,
respectively.

Then the squared norms are the following:
\begin{equation}
  \label{qwe}
\begin{split}
 \|t(u)\bar t(-u) -1\|^2_{\text{HS}} &= 3^{-L} \sum_{j=1}^{3^4} \lambda_{AB,j}^L(u,u,u^*,u^*) -3^{-L}2\, \sum_{j=1}^{3^2} \lambda_{A,j}^L(u,u) +1 \\
 \|t^{\bar 3}(u)\bar t^{\bar 3}(-u) -1 \|^2_{\text{HS}} &= 3^{-L} \sum_{j=1}^{3^4} \bar\lambda_{AB,j}^L(u,u,u^*,u^*) -3^{-L}2\, \sum_{j=1}^{3^2} \lambda_{A,j}^L(u,u) +1.
\end{split}
\end{equation}
As explained in Sec. \ref{sec:invqloc}, the local inversion relations
imply that the delta-states are eigenvectors of the QTM's with trivial
eigenvalues equal to 3. If the eigenvalue 3  is non-degenerate and
dominant for all of these matrices, then \eqref{qwe} implies
\eqref{eq:AsympInvNorm}.
Hence, what remains is to prove that $3$ is indeed a leading,
non-degenerate eigenvalue for all of these matrices, as long as $u\in\physstr$. 

This will be proven somewhat later in this Section. First we consider
the quasi-locality property, because its proof also involves the same
QTM construction.
\bigskip

\textbf{Proof of quasi-locality.}
Starting from the definitions
\begin{equation}
\begin{split}
 X(u)&=(-\ii) \bar t(-u)\partial t(u)\\
 Y(u)&=(-\ii) \bar t^{\bar 3}(-u)  \partial t^{\bar 3}(u)
\end{split}
\end{equation}
we compute the adjoints of the generator functions:
\begin{equation}
\begin{split}
 X^\dagger (u) &= (-\ii) \left.\left(\partial\bar t(v)t(-v)\right)\right|_{v=-u^*} = (-\ii)\left.\left(t(-v)\partial \bar t(v) \right)\right|_{v=-u^*} \\
 Y^\dagger (u) &= (-\ii) \left.\left(\partial\bar{t}^{\bar 3}(v)t^{\bar 3}(-v)\right)\right|_{v=-u^*} = (-\ii)\left.\left(t^{\bar 3}(-v)\partial \bar{t}^{\bar 3}(v) \right)\right|_{v=-u^*}.
\end{split}
\end{equation}
Here we used the commutativity of the transfer matrix with it derivative. Using
\begin{equation}
 \|\{A\}\|_{\text{HS}}^2 = 3^{-L}\Tr\left(A^\dagger A\right)-3^{-2L}\Tr\left(A^\dagger \right)\Tr\big(A\big)
\end{equation}
we get
\begin{equation}
  \begin{split}
    \|X(u)\|_{\text{HS}}^2=-3^{-L} \Tr\,{ t(u^*)\partial \bar t(-u^*)  \bar  t(-u)\partial t(u)}
    +3^{-2L} \Tr\,{t(u^*)\partial \bar t(-u^*)}
     \Tr\,{\bar t(-u)\partial t(u)} \\
    \|Y(u)\|_{\text{HS}}^2=-3^{-L} \Tr\,{ t(u^*)\partial \bar t(-u^*)  \bar  t(-u)\partial t(u)}
    +3^{-2L} \Tr\,{t(u^*)\partial \bar t(-u^*)}
     \Tr\,{\bar t(-u)\partial t(u)}.
  \end{split}
\end{equation}
We consider these traces once more in the rotated channel. In fact, we
consider them as two special points of the more general expressions
\begin{equation}
\begin{split}
 \mathcal{K}(u_1,u_2,v_1,v_2) &= \partial_{v_1}\partial_{u_1}\left[ 3^{-L}\Tr\,{ t(v_2)\bar t(-v_1)  \bar  t(-u_2) t(u_1)} \right. - \\ &\qquad\qquad\qquad\qquad\qquad\left.- 3^{-2L} \Tr\,{t(v_2) \bar t(-v_1)} \Tr\,{\bar t(-u_2) t(u_1)}\right], \\
 \bar{\mathcal{K}}(u_1,u_2,v_1,v_2) &= \partial_{v_1}\partial_{u_1}\left[
  3^{-L} \Tr\,{ t^{\bar 3}(v_2)\bar t^{\bar 3}(-v_1)  \bar  t^{\bar 3}(-u_2) t^{\bar 3}(u_1)} -\right. \\ &\qquad\qquad\qquad\qquad\qquad\left. -3^{-2L}\Tr\,{t^{\bar 3}(v_2) \bar t^{\bar 3}(-v_1)}
     \Tr\,{\bar t^{\bar 3}(-u_2) t^{\bar 3}(u_1)}\right], 
 \end{split}
\end{equation}
such that
\begin{equation}
      \| X(u) \|_{\text{HS}}^2 = \mathcal K (u,u,u^*,u^*), \qquad\qquad\qquad
 \|Y(u)\|_{\text{HS}}^2 = \bar{\mathcal{K}}(u,u,u^*,u^*). 
\end{equation}
We compute $\mathcal K$ and $\bar{\mathcal K}$ in the rotated channel,
similarly as before in the proof of the asymptotic inversion
relations. Using the QTM's the quantities $\mathcal K$ and $\bar{\mathcal K}$ can
be expressed as
\begin{equation}
\begin{split}
   \mathcal{K}(u_1,u_2,v_1,v_2)&=\partial_{v_1}\partial_{u_1}\left[ 3^{-L} \Tr\,\left( t_{AB}(u_1,u_2,v_1,v_2) \right)^L -\right. \\ &\qquad\qquad\qquad\qquad\qquad\left.- 3^{-2L} \Tr\,\left( t_A(u_1,u_2) \right)^L  \Tr\,\left( t_A(v_1,v_2)\right)^L \right] \\
   \bar{\mathcal{K}}(u_1,u_2,v_1,v_2)&=\partial_{v_1}\partial_{u_1}\left[ 3^{-L} \Tr\,\left( t_{AB}^{\bar 3}(u_1,u_2,v_1,v_2) \right)^L -\right. \\ &\qquad\qquad\qquad\qquad\qquad\left. 3^{-2L} \Tr\,\left( t_A^{\bar 3}(u_1,u_2) \right)^L  \Tr\,\left( t_A^{\bar 3}(v_1,v_2)\right)^L \right].
\end{split}
\end{equation}
As before, we consider these expressions in the eigenbases of the respective matrices:
\begin{equation}
\begin{split}
 &\mathcal{K}(u_1,u_2,v_1,v_2) = 3^{-L}\sum_{j=1}^{3^4} L \lambda_{AB,j}^{L-1}  (\partial_{v_1} \partial_{u_1} \lambda_{AB,j}) + L(L-1)\lambda_{AB,j}^{L-2}(\partial_{v_1}\lambda_{AB,j})\times\\ &\qquad\qquad\qquad\qquad\times(\partial_{u_1}\lambda_{AB,j})
 - 3^{-2L} \sum_{j=1}^{3^2} L \lambda_{A,j}^{L-1}\partial_{u_1} \lambda_{A,j} \sum_{j=1}^{3^2} L \lambda_{A,j}^{L-1}\partial_{v_1} \lambda_{A,j} \\
 &\bar{\mathcal{K}}(u_1,u_2,v_1,v_2) = 3^{-L}\sum_{j=1}^{3^4} L \bar{\lambda}_{AB,j}^{L-1}  (\partial_{v_1} \partial_{u_1} \bar{\lambda}_{AB,j}) + L(L-1)\bar{\lambda}_{AB,j}^{L-2}(\partial_{v_1}\bar{\lambda}_{AB,j})\times\\ &\qquad\qquad\qquad\qquad\times(\partial_{u_1}\bar{\lambda}_{AB,j}) -  3^{-2L} \sum_{j=1}^{3^2} L \bar{\lambda}_{A,j}^{L-1}\partial_{u_1}  \bar{\lambda}_{A,j} \sum_{j=1}^{3^2} L \bar{\lambda}_{A,j}^{L-1}\partial_{v_1} \bar{\lambda}_{A,j}.
\end{split}
\end{equation}
In the notations above we suppressed the dependence of the eigenvalues on the spectral
parameters. For the partial derivatives of $\lambda_{A,j},\ \bar\lambda_{A,j}$, the following arguments are understood: 
\begin{align}
\partial_{u_1}\lambda_{A,j} &\equiv \partial_{u_1}\lambda_{A,j}(u_1,u_2) & \partial_{v_1}\lambda_{A,j} &\equiv \partial_{v_1}\lambda_{A,j}(v_1,v_2) \\
\partial_{u_1}\bar\lambda_{A,j} &\equiv \partial_{u_1}\bar\lambda_{A,j}(u_1,u_2) & \partial_{v_1}\bar\lambda_{A,j} &\equiv \partial_{v_1}\bar\lambda_{A,j}(v_1,v_2)
\end{align}
Let us assume that for each matrix above there is a non-degenerate dominant eigenvalue with
index $j=1$. Then the leading terms in the above sums are
\begin{equation}
\begin{split}
 &\mathcal{K}(u_1,u_2,v_1,v_2) \approx L^2 \left(3^{-L}\lambda_{AB,1}^{L-2} \partial_{u_1}\lambda_{AB,1}\partial_{v_1}\lambda_{AB,1} - 3^{-2L} \lambda_{A,1}^{L-1}\partial_{u_1}\lambda_{A,1} \lambda_{A,1}^{L-1}\partial_{v_1}\lambda_{A,1} \right) + \\& +L \left(3^{-L} \lambda_{AB,1}^{L-1}\partial_{v_1}\partial_{u_1}\lambda_{AB,1} - 3^{-L}\lambda_{AB,1}^{L-2}\partial_{v_1}\lambda_{AB,1}\partial_{u_1}\lambda_{AB,1}\right) \\
  &\bar{\mathcal{K}}(u_1,u_2,v_1,v_2) \approx L^2 \left(3^{-L}\bar{\lambda}_{AB,1}^{L-2} \partial_{u_1}\bar{\lambda}_{AB,1}\partial_{v_1}\bar{\lambda}_{AB,1} - 3^{-2L} \bar{\lambda}_{A,1}^{L-1}\partial_{u_1}\bar{\lambda}_{A,1} \bar{\lambda}_{A,1}^{L-1}\partial_{v_1}\bar{\lambda}_{A,1} \right) + \\& +L \left(3^{-L} \bar{\lambda}_{AB,1}^{L-1}\partial_{v_1}\partial_{u_1}\bar{\lambda}_{AB,1} - 3^{-L}\bar{\lambda}_{AB,1}^{L-2}\partial_{v_1}\bar{\lambda}_{AB,1}\partial_{u_1}\bar{\lambda}_{AB,1}\right).
\end{split}
\end{equation}
The operators $\{X(u)\}$ and $\{Y(u)\}$ are quasi-local if the norm is
of $\ordo(L)$, i.e. the $\ordo(L^2)$ terms cancel. This will be
investigated at the point $u_1=u_2$, $v_1=v_2$, which has to be
substituted after taking the partial derivatives. As we will see, the
eigenvalue $3$ is leading and non-degenerate at  $u_1=u_2=u$,
$v_1=v_2=u^*$, $u\in\physstr$, therefore we need to prove that the relations
\begin{equation}
  \begin{split}
  \partial_{u_1}\bar\lambda_{AB,1}\partial_{v_1}\bar\lambda_{AB,1}
  - \partial_{u_1}\bar\lambda_{A,1} \partial_{v_1}\bar\lambda_{A,1}  =0    \\
  \partial_{u_1}\bar\lambda_{AB,1}\partial_{v_1}\bar\lambda_{AB,1}
  - \partial_{u_1}\bar\lambda_{A,1} \partial_{v_1}\bar\lambda_{A,1}  =0
\end{split}
\end{equation}
hold at  $u_1=u_2$, $v_1=v_2$. As explained in
\cite{prosen-enej-quasi-local-review}, this follows from the
factorizability of the leading eigenvector and the 
Hellmann-Feynman theorem.

We thus obtain the final result 
\begin{equation}
  \begin{split} 
  \mathcal{K}(u_1,u_2,v_1,v_2) &\approx L
\left( \frac{1}{3}\partial_{u_1}\partial_{v_1}\lambda_{AB,1}
-
\frac{1}{9}\partial_{u_1}\lambda_{AB,1}  \partial_{v_1}\lambda_{AB,1}
\right)\\
  \bar{\mathcal{K}}(u_1,u_2,v_1,v_2) &\approx L
\left( \frac{1}{3}\partial_{u_1}\partial_{v_1}\bar{\lambda}_{AB,1}
-
\frac{1}{9}\partial_{u_1}\bar{\lambda}_{AB,1}  \partial_{v_1}\bar{\lambda}_{AB,1}
\right).
  \end{split}
\end{equation}

What remains to be proven is that the delta-states with eigenvalues
$3$ are indeed dominant for $t_{AB}$, $\bar t_{AB}$, $t_A$ and $\bar t_{A}$ and
$u\in\physstr$.
\subsection{The eigensystem of $t_A$ and $\bar{t}_{A}$}
Here we determine the eigenvalues of $t_A$ and $\bar{t}_{A}$ by a
direct computation.

Let us denote the auxiliary spaces of the original physical transfer
matrices $t(u)$ and $\bar t(u)$ by $0$ and $\bar 0$, respectively.
The monodromy matrix in the crossed channel is 
\begin{equation}
  M_A(u)= R_{A,\bar 0}^{t_{\bar 0}}(-u)R_{A,0}(u).
\end{equation}
Here $A$ stands for the auxiliary space of the rotated transfer
matrix, which is identical to the original physical quantum space. 

It follows simply from the unitarity relation \eqref{eq:uni1} that the
delta-state
is an eigenvector of $t_A(u)=\text{Tr}_A M_A(u)$ with
eigenvalue 3. Now we compute the full spectrum directly. Writing out the operators
and taking the trace in $A$ we get
\begin{equation}
  t_A(u)=\frac{-u^2 3-K_{0\bar 0}}{-1-u^2}.
\end{equation}
The eigenvalues of $K$ are $3$ for the trace vector and 0 otherwise, so
the eigenvalues are thus
\begin{equation}
  3,\qquad 3\frac{u^2}{1+u^2}.
\end{equation}
The trace vector is the leading eigenvector whenever
$|u^2/(1+u^2)|<1$. This is true until $\Re(u^2)>-1/2$. Writing
$u=a+\ii b$ the condition is
\begin{equation}
  \label{c1}
  a^2-b^2>-1/2.
\end{equation}

In the second case, we need to construct the monodromy matrix
\begin{equation}
  \bar M_A(u)= \bar R_{A,0}^{t_0}(-u)\bar R_{A,\bar 0}(u)=
  \frac{\left(-u+\frac\ii2\right)\left(u+\frac\ii2\right)}{\left(-u+\frac{3\ii}2\right)\left(u+\frac{3\ii}2\right)}
   R_{A,0}(u-\sigma) R^t_{A,\bar 0}(-u-\sigma).
\end{equation}
More explicitly it reads
\begin{equation}
  \bar M_A(u)=
\frac{\left(u-\frac\ii2 + P_{A,0}\right)\left(-u-\frac{3\ii}2+\ii K_{A,\bar 0}\right)}
{\left(-u+\frac{3\ii}2\right)\left(u+\frac{3\ii}2\right)}.
\end{equation}
The trace becomes
\begin{equation}
  t_B(u)=\frac{3\left(\left(\frac{3\ii}2\right)^2-u^2\right)+3-K_{0\bar 0} }
{\left(\frac{3\ii}2\right)^2-u^2}.
\end{equation}
Hence the eigenvalues are
\begin{equation}
  3,\qquad  \qquad 3\frac{\frac{3^2}{4}+u^2-1}{\frac{3^2}{4}+u^2}.
\end{equation}
The first one, corresponding to the delta state, is the leading eigenvalue if
\begin{equation}
  \label{c2}
  \Re(u^2) > 1/2-3^2/4=-\frac{7}{4}.
\end{equation}
Out of the two conditions \eqref{c1} and \eqref{c2}, the first one is
more restricting, and it holds in the physical strip.
\subsection{The eigensystem of $t_{AB}$ and $\bar{t}_{AB}$}
Here we compute the spectrum of the 4-site QTM's $t_{AB}(u_1,u_2,v_1,v_2)$ and
$\bar{t}_{AB}(u_1,u_2,v_1,v_2)$ at $u_1=u_2=u,\ v_1=v_2=v$. Both matrices correspond to inhomogeneous, integrable
4-site spin chains, therefore their spectrum can be determined by
Bethe Ansatz techniques. One possibility would be the application of
the $T-Q$ relations, in the spirit of
\cite{fast-bethe-solver}. However, these are still relatively small
matrices of size $81\times 81$, therefore a direct method is perhaps faster. An exact
diagonalization of the matrices is not possible, because the entries
are functions of two parameters $u,v$, and the diagonalization
lies beyond the capabilities of the symbolic manipulation programs
such as \texttt{Mathematica}. Nevertheless, the direct diagonalization is
possible after simplifying the matrix using some group theory arguments.

Both transfer matrices are $SU(3)$-invariant, therefore their spectrum
can be analyzed by constructing the Clebsch-Gordan series. 
The transfer matrix $t_{AB}$ acts on the tensor product space
\begin{equation}
  \label{haha}
 3\otimes \bar 3\otimes 3\otimes \bar 3,
\end{equation}
whereas $\bar t_{AB}$ acts on
\begin{equation}
  \bar3\otimes3\otimes\bar3\otimes3,
\end{equation}
where we used the standard notations for the defining and conjugate
representations. In terms of Young diagrams they are denoted as
\begin{equation}
  3=\yng(1),\qquad \bar 3=\yng(1,1).
\end{equation}
In the following we perform a simple permutation for the vector spaces
of $\bar t_{AB}$ such that it also acts on the space given by \eqref{haha}.

The Clebsch-Gordan series can be computed with standard methods. We
obtain 
\begin{equation}
  \yng(1)\otimes \yng(1,1)\otimes
    \yng(1)\otimes \yng(1,1) = \yng(4,2)\ \oplus\ \yng(3)\ \oplus\ \yng(3,3)\ \oplus\ 4\cdot
  \yng(2,1)\ \oplus\ 2 \cdot 1.
\end{equation}
An easy check-back on the dimensions is 
\begin{equation}
  3 \cdot 3 \cdot 3 \cdot 3 = 27+10+10+4\cdot 8+2\cdot 1.
\end{equation}

Altogether there are 9 irreducible representations, so both  $t_{AB}$
and $\bar t_{AB}$ can have at most 9 different eigenvalues. Of course,
there can be some further degeneracies.

We compute the eigenvalues by focusing on the highest weight
states. If there is a
representation in the Clebsch-Gordan series with multiplicity one, then the
highest weight states have to be eigenstates, and the eigenvalue is
found simply by acting with the matrix on the given highest weight state.
For the representations with non-trivial multiplicities we have to
perform an explicit diagonalization on the finite
set of highest weight vectors for that given representation. This will be
detailed in the following.

The global $GL(3)$ generators are
\begin{equation}
  \Lambda_{jk}=E^{(1)}_{jk}-E^{(2)}_{kj}+E^{(3)}_{jk}-E^{(4)}_{kj}.
\end{equation}
This follows from the conjugation symmetry.
Accordingly, the highest weight vectors in the individual factors in the
tensor product \eqref{haha} can be chosen as
\begin{equation}
  \ket{1}_1,\quad \ket{3}_2,\quad \ket{1}_3,\quad \ket{3}_4.
\end{equation}

Now we consider all components in the Clebsch-Gordan series separately.

\begin{itemize}
\item $\yng(4,2)$ with dimension 27.

  This representations has multiplicity one, so the highest weight vector has to be an eigenvector
  of the transfer matrices.
  The highest weight vector is
    \begin{equation}
    \ket{1313}.
  \end{equation}

\item $\yng(3)$ with dimension 10. Once again, this representation is
  multiplicity free. The highest weight state is
  \begin{equation}
    \ket{1213}-\ket{1312}.
  \end{equation}
  \item $\yng(3,3)$ with dimension 10, multiplicity free. The highest
    weight state is:
  \begin{equation}
    \ket{1323}-\ket{2313}.
  \end{equation}
\item 4 copies of $\yng(2,1)$. Here we have 4 highest weight
  vectors, and the transfer matrices are closed in the
  subspace formed by the 4 vectors.
 A basis in this 4 dimensional space can be formed by taking one delta-state for one
  product $3\otimes\bar 3$ and taking the
  highest weight vector for the rep  $\yng(2,1)$ in the other product
  $3\otimes\bar 3$. There are indeed 4 ways to
  do this.

 The basis is thus:
  \begin{equation}
    \begin{split}
&v_1=(\ket{11}+\ket{22}+\ket{33})      \otimes \ket{13}\\
& v_2=\ket{13}\otimes  (\ket{11}+\ket{22}+\ket{33}) \\
&v_3= \ket{1311}+ \ket{2312}+ \ket{3313}\\
& v_4=\ket{1113}+\ket{1223}+\ket{1333}
\end{split}
\end{equation}

Note that this is not an orthonormal basis. The matrix elements of
$t_{AB}$ and $\bar t_{AB}$ in this space can be computed by taking into account
also the scalar products between the basis vectors. Let
$G_{ij}=\skalarszorzat{v_i}{v_j}$. Then we have explicitly
\begin{equation}
  G=
  \begin{pmatrix}
    3 & 0 & 1 & 1 \\
    0 & 3 & 1 & 1 \\
    1 & 1 & 3 & 0 \\
    1 & 1 & 0 & 3
  \end{pmatrix}.
\end{equation}
The actual matrix elements can be computed using the inverse $G^{-1}$,
so that within this space we have a 4x4 matrix $\tilde T$ such that
\begin{equation}
  \tilde T_{ij}=G^{-1}_{ik}\bra{v_k}T\ket{v_j}.
\end{equation}
We need to compute the eigenvalues of $\tilde T$, which can be done using for
example \texttt{Mathematica}.

\item Finally, there are two singlet representations. A basis is
  obtained by taking the delta-state  $\ket{\delta_{12}}\otimes
  \ket{\delta_{34}}$ and its permutation, formally written as 
$\ket{\delta_{14}}\otimes \ket{\delta_{23}}$. The first delta state is
an eigenstate, but the permuted one is not. A further simple
diagonalization is needed to find the second eigenvector as a linear
combination
\begin{equation}
  \ket{\delta_{12}}\otimes  \ket{\delta_{34}}+\alpha \ket{\delta_{14}}\otimes \ket{\delta_{23}},
\qquad \alpha\in\complex.
\end{equation}
\end{itemize}

The eigenvalues obtained with this method are listed in Table
\ref{tabtab}. We now analyze the resulting rational functions, and
show that the eigenvalue 3 is indeed the dominant one in the physical
strip, for both matrices $t_{AB}$ and $\bar t_{AB}$.

\begin{table}[h]
  \centering
  \renewcommand{\arraystretch}{2}
   \Yboxdim{10pt}
  \begin{tabular}{c|c|c}
      &  $t_{AB}$ & $\bar t_{AB}$ \\ \hline\hline
$ \yng(4,2)$   & $\frac{u v (3 u v-2)}{\left(u^2+1\right) \left(v^2+1\right)} $
& $\frac{48 u^2v^2+60(u^2 +v^2)-32 u v+99}{\left(4 u^2+9\right) \left(4 v^2+9\right)}$  \\
\hline
$  \yng(3)$   &  $\frac{3 u^2 v^2}{\left(u^2+1\right) \left(v^2+1\right)}$
 & $\frac{3 \left( 16u^2v^2+20(u^2+v^2)+4 \ii(u+v)+9\right)}
   {\left(4 u^2+9\right) \left(4 v^2+9\right)}$ \\
  \hline
$ \yng(3,3)$  &  $\frac{3 u^2 v^2}{\left(u^2+1\right) \left(v^2+1\right)}$
                  &
                  $\frac{3 \left( 16u^2v^2+20(u^2+v^2)-4 \ii(u+v)+9\right)}
        {\left(4 u^2+9\right) \left(4 v^2+9\right)} $     \\
\hline
  $4\cdot\yng(2,1)$ &  $\frac{3 v^2}{v^2+1}$ &  $3-\frac{12}{4 v^2+9}$ \\
            & $\frac{3 u^2}{u^2+1}$ & $3-\frac{12}{4 u^2+9}$ \\
            & $\frac{3 u v (u v+1)}{\left(u^2+1\right) \left(v^2+1\right)}$ &
$ \frac{3 \left(16 u^2v^2+20(u^2+v^2)+16 u v-8 \sqrt{4-5 (u-v)^2}+29\right)}{\left(4 u^2+9\right) \left(4 v^2+9\right)}$ \\ 
    & $\frac{3 u v (u v+1)}{\left(u^2+1\right)    \left(v^2+1\right)}$ &
$ \frac{3 \left(16 u^2v^2 +20(u^2+v^2)+16 u v+8 \sqrt{4-5(u-v)^2}+29\right)}
    {\left(4 u^2+9\right) \left(4 v^2+9\right)}$   \\
  \hline
  $2\cdot 1$           & 3 & 3
  \\
      &$\frac{u v (3 u v+2)}{\left(u^2+1\right) \left(v^2+1\right)} $&
       $ \frac{3 \left(16 u^2 v^2+20 (u^2+v^2)+32 u v+65\right)}{\left(4 u^2+9\right) \left(4 v^2+9\right)}$\\
  \end{tabular}
   \caption{Table of the eigenvalues of $t_{AB}$ and $\bar t_{AB}$ for
    the different representations in the Clebsch-Gordan series.}
 \label{tabtab}
\end{table}

We are interested in the eigenvalues on the physical strip
$\mathcal{P}$ with the restriction $v=u^*$. The main
idea to prove that 3 is dominant is to first consider the special point
$u=0$, for which this is seen immediately, and then to determine the
algebraic curves on which the magnitudes of the other eigenvalues reach 3. It can be
shown that all of these curves are on or outside the boundaries of the
physical strip. We will see that in both cases it is the other singlet
state which reaches the dominant eigenvalue 3 exactly on the boundary
of $\physstr$.

In the case of $t_{AB}$ six eigenvalues are of the form
$\frac{3uv(uv-C)}{(1+u^2)(1+v^2)}$, with $C=2/3, 0,-1,-2$. The
solution to the equation
\begin{equation}
 \frac{3uv(uv-C)}{(1+u^2)(1+v^2)} = 3
\end{equation}
given that $v=u^*$, $u=a+\ii b$ is 
\begin{equation}
 b=\pm \sqrt{\frac{2+C}{2-C}a^2 + \frac{1}{2-C}}.
\end{equation}
It can be seen that with these values of $C$ the minimum distance
between $u$ and the real axis is always equal to or bigger than $1/2$, thus the
eigenvalue crossings do not occur in the physical strip. The other
singlet state corresponds to $C=-2$, for which we obtain simply $b=\pm
1/2$: this state becomes degenerate with the delta state exactly at the
boundary of the physical strip. The remaining two eigenvalues of
$t_{AB}$ can be
treated in a similar way, and it can be seen that they do not become dominant
within $\physstr$.

Regarding $\bar t_{AB}$ we can see that at $u=v=0$ 3 is the dominant eigenvalue.
Once again we analyze the intersections where the magnitude of the
different eigenvalues becomes equal to 3, and we show one by one that
these curves lie outside or on the boundary of the physical strip.

\begin{itemize}
\item For the first eigenvalue in the Table \ref{tabtab} the
  substitution $u=v^*=a+\ii b$ 
   leads to the following equation for the intersection: 
 \begin{equation}
  9 + 8 a^2 - 4 b^2 = 0.
 \end{equation}
 The solution to this equation is $b=\pm \frac12\sqrt{8 a^2+9}$, which
 shows that the intersection is outside of $\physstr$. 

\item The second and third eigenvalues are a complex conjugate pair,
  hence it is sufficient to consider the absolute value of only one of
  them.  Substituting $u=a+\ii b,\ v=a-\ii b$, considering the
  magnitude of the eigenvalue and setting it equal to 3
  leads to a 6th order polynomial equation in
  $b$. However, it only contains even powers of $b$, hence leading to
  the following 3rd order equation of the intersection in $B=b^2$: 
 \begin{equation}
  64 a^6+(368+64B) a^4+(620-64B^2-160B) a^2+405-64B^3+368B^2-684B=0.
 \end{equation}
It can be seen that within the physical strip with $0\le B\le 1/4$ all
coefficients of the various powers of $a^2$ are strictly positive,
therefore there is no intersection within the physical strip.
 \item The 4th and 5th eigenvalues also form a complex conjugate pair,
   hence we only consider the first one. The direct substitution
   $u=v^*=a+\ii b$ shows that the magnitudes of these eigenvalues are
   smaller than 3 for $|b|\le 1/2$.
 \item The 6th and 7th eigenvalues on the list  are not complex
   conjugate pairs, but they have a quite similar structure, and we
   discuss them together. Substitution and simple algebraic
   manipulation leads to the following equations: 
 \begin{equation}
  4 a^2+13 =12b^2\pm4 \sqrt{5 b^2+1}.
 \end{equation}
It can be seen that the r.h.s. is always smaller than the minimum of
the l.h.s. given by 13, if $|b|\le 1/2$. Thus there is no intersection
in $\physstr$.

\item Finally, regarding the 9th eigenvalue a direct computations
  shows that this becomes degenerate with 3 if $b=1/2$, i.e. just on
  the boundary of the physical strip.
\end{itemize}
With this we have proven that on the physical strip $3$ is indeed
a non-degenerate dominant eigenvalue for both $t_{AB}$ and $\bar{t}_{AB}$.

\section{TBA derivation of the string-charge relations}

\label{sec:tbasc}

Here we compute a compact form for the mean values of the operators
$X_m(u)$ and $Y_m(u)$. We start from the expressions \eqref{XYkoztes1}.

We will use the following convention for the Fourier transform:
\begin{align}
 \hat{f}(k) &= \int_{-\infty}^\infty du\, f(u) e^{\ii ku} \\
 f(u) &= \frac{1}{2\pi} \int_{-\infty}^\infty dk\, \hat{f}(k) e^{-\ii ku}.
\end{align}
For the elementary function $a_n(\lambda)$ defined in \eqref{andef} we have
\begin{equation}
\hat{a}_n(k) = e^{-\frac{n}{2}|k|}.
\end{equation}
We need to express
\begin{align}
 \lim_{\text{TDL}} \frac{1}{L}X_m(u) &= \sum_{n=1}^\infty \int_{-\infty}^\infty dk\, \hat\rho^{(1)}_n (k) e^{-\ii ku} \sum_{j=1}^{\min(n,m)} \hat{a}_{|n-m|-1+2j}(k) \\
 \lim_{\text{TDL}} \frac{1}{L}Y_m(u) &= \sum_{n=1}^\infty \int_{-\infty}^\infty dk\, \hat\rho^{(2)}_n (k) e^{-\ii ku} \sum_{j=1}^{\min(n,m)} \hat{a}_{|n-m|-1+2j}(k).
\end{align}
We apply the summation of the geometric series
\begin{equation}
\begin{split}
 \sum_{j=1}^{\min(m,n)}& \hat{a}_{|n-m|-1+2j}(k) = \sum_{j=1}^{\min(m,n)} e^{-\frac{|k|}{2}(|n-m|-1+2j)} = \\ =& \frac{1}{2\sinh\left(\frac{|k|}{2}\right)}\left(e^{-\frac{|k|}{2}|n-m|} - e^{-\frac{|k|}{2}(n+m)} \right).
\end{split}
\end{equation}
to simplify the previous equations to
\begin{equation}
\begin{split}
 \lim_{\text{TDL}}  \frac{1}{L}X_m(u) &= \sum_{n=1}^\infty \int_{-\infty}^\infty dk\, \hat\rho^{(1)}_n (k) e^{-\ii ku} \frac{1}{\sinh\left(\frac{|k|}{2}\right)}\left( e^{-\frac{|k|}{2}|n-m|} - e^{-\frac{|k|}{2}(n+m)} \right) \\
 \lim_{\text{TDL}}  \frac{1}{L}Y_m(u) &=  \sum_{n=1}^\infty \int_{-\infty}^\infty dk\, \hat\rho^{(2)}_n (k) e^{-\ii ku} \frac{1}{\sinh\left(\frac{|k|}{2}\right)}\left( e^{-\frac{|k|}{2}|n-m|} - e^{-\frac{|k|}{2}(n+m)} \right).
\end{split}
\end{equation}
To proceed, we consider the decoupled TBA equations \eqref{eq:root1} in Fourier space:
\begin{equation}
\begin{split}
 \hat\rho^{(1)}_{n,t}(k) &= \delta_{n,1}\hat{s}(k) + \hat{s}(k)\left( \hat\rho^{(1)}_{h,n-1}(k) + \hat\rho^{(1)}_{h,n+1}(k) \right) + \hat{s}(k)\hat\rho^{(2)}_n(k) \\
 \hat\rho^{(2)}_{n,t}(k) &= \hat{s}(k)\left( \hat\rho^{(2)}_{h,n-1}(k) + \hat\rho^{(2)}_{h,n+1}(k) \right) + \hat{s}(k)\hat\rho^{(1)}_n(k),
\end{split}
\end{equation}
where
\begin{equation}
 \hat{s}(k) = \frac{1}{2\cosh\left(\frac{k}2\right)},\qquad\qquad \hat\rho^{(r)}_{h,0}(k) = 0.
\end{equation}
For simplicity we will not denote the $k$ argument in the
following.

Using the TBA equation in Fourier space, we express the
$\rho_m^{(r)},\ r=1,2$ root densities with the $\rho_{h,m}^{(r)},\ r=1,2$ hole densities:
\begin{equation}
\begin{split}
  \hat\rho^{(1)}_n &= \frac{1}{1-\hat{s}^2}\left( \hat{s}\left(\delta_{n,1}+\hat\rho^{(1)}_{h,n-1}-\frac{1}{\hat{s}}\hat\rho^{(1)}_{h,n}+\hat\rho^{(1)}_{h,n+1} \right) + \hat{s}^2 \left(\hat\rho^{(2)}_{h,n-1}-\frac{1}{\hat{s}}\hat\rho^{(2)}_{h,n}+\hat\rho^{(2)}_{h,n+1} \right) \right)\\
 \hat\rho^{(2)}_n &= \frac{1}{1-\hat{s}^2}\left( \hat{s}\left(\hat\rho^{(2)}_{h,n-1}-\frac{1}{\hat{s}}\hat\rho^{(2)}_{h,n}+\hat\rho^{(2)}_{h,n+1} \right) + \hat{s}^2\left(\delta_{n,1}+\hat\rho^{(1)}_{h,n-1}-\frac{1}{\hat{s}}\hat\rho^{(1)}_{h,n}+\hat\rho^{(1)}_{h,n+1} \right) \right).
\end{split}
\end{equation}
We make use of the following identity:
\begin{equation}
  \begin{split}
& \sum_{n=1}^\infty \left(
   \delta_{n,1}+\hat\rho^{(r)}_{n-1,h}-\frac{1}{\hat{s}}\hat\rho^{(r)}_{n,h}+\hat\rho^{(r)}_{n+1,h}
 \right) \left( e^{-\frac{|k|}2(m+n)} - e^{-\frac{|k|}2|m-n|} \right)
 =\\
 &\hspace{2cm}2\sinh\left(\frac{|k|}2\right)\left(\hat\rho^{(r)}_{m,h}-e^{-\frac{|k|}2 m}\right).
  \end{split}
\end{equation}
Substituting this and performing  some algebraic manipulations we end up with:
\begin{equation}
\begin{split}
\label{eq:SCdualityFspace}
  -\frac{1}{2\pi L}\left( 2\cosh\left(\frac k2\right) \hat{X}_m - \hat{Y}_m \right) &=  \hat\rho^{(1)}_{m,h} - e^{-\frac{|k|}{2}m} \\
  -\frac{1}{2\pi L}\left( \cosh\left(\frac k2\right) \hat{Y}_m - \hat{X}_m \right) &=  \hat\rho^{(2)}_{m,h}.
\end{split}
\end{equation}
After inverse Fourier transformation:
\begin{equation}
\begin{split}
\label{eq:SCdualityV1}
 \frac{1}{2\pi L} \left( -X_m\left(u-\frac\ii2\right) - X_m\left(u+\frac\ii2\right) +Y_m(u) \right) &= \rho^{(1)}_{m,h}(u) -a_m(u) \\
 \frac{1}{2\pi L} \left( -Y_m\left(u-\frac\ii2\right) - Y_m\left(u+\frac\ii2\right) +X_m(u) \right) &= \rho^{(2)}_{m,h}(u),
\end{split}
\end{equation}
or in a more compact form:
\begin{equation}
\begin{split}
 \rho_{h,m}^{(1)}&= a_m - \frac{1}{2\pi L}\left( X_m^{[+]} + X_m^{[-]} - Y_m \right) \\
 \rho_{h,m}^{(1)}&= - \frac{1}{2\pi L}\left( Y_m^{[+]} + Y_m^{[-]} - X_m \right).
\end{split}
\end{equation}
This is the first form of our main result, which concerns the hole densities.
It is also useful to express the root densities.

Consider \eqref{eq:SCdualityFspace} and rewrite it in matrix notation:
\begin{equation}
  \begin{pmatrix}
    2\cosh(k/2) & -1 \\
    -1 & 2\cosh(k/2)
  \end{pmatrix}
  \begin{pmatrix}
    \hat{X}_m \\ \hat{Y}_m
  \end{pmatrix}=-2\pi L
  \begin{pmatrix}
      \hat{\rho}_{h,m}^{(1)}-e^{-\frac{|k|}2 m} \\   \hat{\rho}_{h,m}^{(2)}
  \end{pmatrix}.
\end{equation}
Consider the inverse matrix, 
\begin{equation}
  \frac{1}{e^k+e^{-k}+1}
  \begin{pmatrix}
 2\cosh(k/2) &   1 \\
    1 & 2\cosh(k/2)
  \end{pmatrix}.
\end{equation}
With the help of it we can express the charges with the root densities:
\begin{equation}
  \begin{split}
  \label{eq:SCdualityV2}
    X_m&=-2\pi L\left(G_1\star  (\rho_{h,m}^{(1)}+a_m)+G_2 \star \rho_{h,m}^{(2)}\right)\\
    Y_m&=-2\pi L\left(G_1\star \rho_{h,m}^{(2)}+G_2 \star (\rho_{h,m}^{(1)}+a_m)\right),
  \end{split}
\end{equation}
where $G_1(u),\ G_2(u)$ are the inverse transformed versions of
\begin{equation}
  \begin{split}
    \hat{G}_1(k)&=\frac{e^{k/2}+e^{-k/2}}{e^k+e^{-k}+1}    \\
    \hat{G}_2(k)&=\frac{1}{e^k+e^{-k}+1},   \\
  \end{split}
\end{equation}
and their explicit form can be computed using standard techniques: 

\begin{equation}
  \begin{split}
  G_1(x)&=\int \frac{dk}{2\pi}e^{-\ii kx}\hat G_1(k)
=\frac{1}{\sqrt{3}}
\frac{\cosh(\pi/3 x)}{\cosh(\pi x)}=\frac{1}{\sqrt{3}}
   \frac{1}{2\cosh(2\pi x/3)-1} \, , \\
    G_2(x)&=\int \frac{dk}{2\pi}e^{-\ii kx}\hat G_2(k)=\frac{1}{\sqrt{3}}
    \frac{\sinh(\pi/3 x)}{\sinh(\pi x)}=\frac{1}{\sqrt{3}}
    \frac{1}{2\cosh(2\pi x/3)+1} \, .
\end{split}
\end{equation}
Consider the \eqref{eq:root1} TBA equation, and after Fourier transformation, construct a similar matrix form out of it:
\begin{equation}
 \begin{pmatrix}
  1 & -\hat{s} \\
  -\hat{s} & 1
 \end{pmatrix}
 \begin{pmatrix}
  \hat{\rho}_{t,m}^{(1)} \\ \hat{\rho}_{t,m}^{(2)}
 \end{pmatrix}
 =\begin{pmatrix}
  \delta_{m,1} \hat{s} + \hat{s}\left(\rho_{h,m-1}^{(1)}+\rho_{h,m+1}^{(1)}-\rho_{h,m}^{(2)}\right) \\
  \hat{s}\left(\rho_{h,m-1}^{(2)}+\rho_{h,m+1}^{(2)}-\rho_{h,m}^{(1)}\right)
 \end{pmatrix}.
\end{equation}
After taking the inverse matrix Fourier inverse transformation, we arrive at the following form:
\begin{equation}
  \begin{split}
     \rho_{t,m}^{(1)}&=\delta_{m,1}G_1
     +G_1\star (\rho_{h,m-1}^{(1)}+\rho_{h,m+1}^{(1)}-\rho_{h,m}^{(2)})
     +G_2\star (\rho_{h,m-1}^{(2)}+\rho_{h,m+1}^{(2)}-\rho_{h,m}^{(1)})
     \\
     \rho_{t,m}^{(2)}&=\delta_{m,1}G_2
     +G_1\star  (\rho_{h,m-1}^{(2)}+\rho_{h,m+1}^{(2)}-\rho_{h,m}^{(1)})
     +G_2\star (\rho_{h,m-1}^{(1)}+\rho_{h,m+1}^{(1)}-\rho_{h,m}^{(2)}).
 \end{split}
\end{equation}
Using  \eqref{eq:SCdualityV2} and \eqref{eq:SCdualityV1} we get
\begin{equation}
  \begin{split}
     \rho_{m}^{(1)}&=\delta_{m,1}G_1
-\frac{1}{2\pi L}\left(X_{m-1}+X_{m+1} - X_m^{[+]}-X_m^{[-]}\right)-G_1\star
(a_{m-1}+a_{m+1})+G_2\star a_m
-a_m
     \\
     \rho_{m}^{(2)}&=\delta_{m,1}G_2
     -\frac{1}{2\pi L}\left(Y_{m-1}+Y_{m+1}
   - Y_m^{[+]}-Y_m^{[-]} \right)
     -G_2\star(a_{m-1}+a_{m+1})+G_1\star a_{m}.
   \end{split}
\end{equation}
Making use of the identities:
\begin{equation}
  \begin{split}
s\star(    a_{n-1}+a_{n+1})&=a_n,\qquad n>1\\
s\star a_2&=a_1+s \\
G_2&=G_1\star s \\
-G_1\star a_2+G_2\star a_1+a_1&=-G_1
\end{split}
  \end{equation}
we express the string-charge relations in a uniform, source term free way:
\begin{equation}
  \begin{split}
  \label{eq:SU3SCdualityV3a}
     \rho_{m}^{(1)}&=\frac{1}{2\pi L}\left(X_m^{[+]} + X_m^{[-]} -X_{m-1}-X_{m+1} \right)
     \\
     \rho_{m}^{(2)}&=\frac{1}{2\pi L}\left(
  Y_m^{[+]}+Y_m^{[-]}-       Y_{m-1}-Y_{m+1}
     \right).
 \end{split}
\end{equation}

\section{Checking the string-charge relations for a particular quench}

\label{sec:deltastate}

Let us consider the quantum quench in the $SU(3)$-invariant model 
with the specific initial state
\begin{equation}
 \ket{\Psi_\delta}=\prod_{j=1}^{L/2} \frac{\ket{11}+\ket{22}+\ket{33}}{\sqrt{3}}.
\end{equation}
This quench has been treated in detail in the works
\cite{sajat-su3-1,sajat-su3-2} using Boundary Quantum Transfer Matrix
techniques. In particular, the following exact results were computed
there.

In the long time limit the system is populated by Bethe states with
root densities $\{\rho^{(1)}_m(u),\rho^{(2)}_m(u)\}$, such that the
total densities (sums of the densities of roots and holes)
$\rho^{(a)}_{t,m}(u)=\rho^{(a)}_m(u)+\rho^{(a)}_{h,m}(u)$ are
\begin{equation}
  \begin{split}
    \rho^{(1)}_{t,1}(u)&=\frac{1}{2\pi}\frac{16(80u^4+168u^2+53)}{(4u^2+1)(8u^2+3)(4u^2+9)^2}\\
    \rho^{(2)}_{t,1}(u)&=\frac{1}{2\pi}\frac{(4u^2+1)(5u^4+18u^2+8)}{(u^2+1)^2(u^2+4)^2(8u^2+3)}.
      \end{split}
\end{equation}
The ratios of root and hole densities are given by the so-called $Y$
functions defined as
\begin{equation}
  \eta^{(a)}_m(u)=\frac{\rho^{(a)}_{h,m}(u)}{\rho^{(a)}_{m}(u)}.
\end{equation}
We use this notation in order to avoid confusion with the $Y(u)$
charge operators.

The exact $Y$-functions for the 1-strings of the first and second
type were computed as
\begin{equation}
  1+\eta_1^{(1)}(u)=  1+\eta_1^{(2)}(u)=
  \frac{3(4u^2+1)}{4u^2}.
\end{equation}
From the above equations we can compute the first two hole densities as
\begin{equation}
  \label{deltarhoh}
  \begin{split}
  \rho_{h,1}^{(1)}&=\frac{ \eta_1^{(1)}(u)}{1+ \eta_1^{(1)}(u)}  \rho_{1}^{(1)}
  =\frac{1}{2\pi}\frac{16(80u^4+168u^2+53)}{3(4u^2+1)^2(4u^2+9)^2}    \\
   \rho_{h,1}^{(2)}&=\frac{ \eta_1^{(2)}(u)}{1+ \eta_1^{(2)}(u)}  \rho_{1}^{(2)}=
\frac{1}{2\pi}\frac{(5u^4+18u^2+8)}{3(u^2+1)^2(u^2+4)^2}.
 \end{split}
\end{equation}

In the following we compute the mean values of the charges
$X_1(u)$ and $Y_1(u)$ in $\ket{\Psi_\delta}$ using their definition:
\begin{equation}
  \begin{split}
    X_1(u)&=-\ii\left.\frac{\partial}{\partial \lambda}
  \bra{\Psi_\delta}\bar t(-u)t(\lambda)\ket{\Psi_\delta} \right|_{u=\lambda}\\
 Y_1(u)&=-\ii \left.\frac{\partial}{\partial \lambda}
  \bra{\Psi_\delta}(\bar t^{\bar 3}(-u)t^{\bar 3}(\lambda)\ket{\Psi_\delta} \right|_{u=\lambda}.
\end{split}
\end{equation}
From this we will compute the hole densities directly from the string-charge relations
\eqref{eq:SCsu3toki}, which will be compared to \eqref{deltarhoh}.

The above mean values can be evaluated using standard methods, by
building the corresponding 2D partition functions, and evaluating them
with double row
transfer matrices in the crossed channel
\cite{essler-xxz-gge,sajat-su3-1,sajat-su3-2}, see also Fig.  \ref{fig:bqtm0}. In the TDL we have
\begin{equation}
  \begin{split}
       \bra{\Psi_\delta}\bar t(-u)t(\lambda)\ket{\Psi_\delta} &\to 
     \left(\Lambda_{\delta}^{(1)}(\lambda,u)\right)^{L/2}\\
 \bra{\Psi_\delta}(\bar t^{\bar 3}(-u)t^{\bar 3}(\lambda)\ket{\Psi_\delta}  &\to 
\left(\Lambda_{\delta}^{(2)}(\lambda,u)\right)^{L/2}.
   \end{split}
\end{equation}
where $\Lambda_{\delta}^{(1,2)}(\lambda,u)$ are the leading
eigenvalues of the corresponding double row QTM's. 

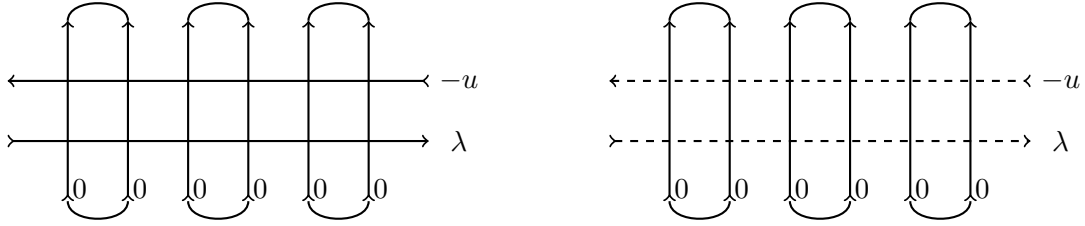
\begin{figure}
  \centering
  \begin{tikzpicture}[scale=0.8]
\foreach \x [count=\n]in {0,10}{ 
   \begin{scope}[xshift = \x cm]

     \foreach \xx [count=\n]in {0,2,4} {
       \begin{scope}[xshift = \xx cm]
   \draw [thick,>->] (0,-1) to (0,2);
\draw [thick,>->] (1,-1) to (1,2);
 \draw [thick] (0,-1) [out=-90,in=-90] to (1,-1);
 \draw [thick] (0,2) [out=90,in=90] to (1,2);        
 \node at (1.2,-0.8) {$0$};
  \node at (0.2,-0.8) {$0$};
\end{scope} 
 }
 \node at (6.5,1) {$-u$};
 \node at (6.5,0) {$\lambda$};
          \end{scope}
    }

    \node at (-2,2.5) {};
      \node at (6,2.5) {};
  \draw [thick,>->] (-1,0) to (6,0);
   \draw [thick,<-<] (-1,1) to (6,1);

    \draw [thick,dashed,>->] (9,0) to (16,0);
   \draw [thick,dashed,<-<] (9,1) to (16,1);
\end{tikzpicture}
\caption{Evaluation of the mean values of $X_1(u)$ and $Y_1(u)$ in the
  delta state. The solid and dashed lines stands for auxiliary spaces carrying the  
 defining and conjugate representations of $SU(3)$, respectively, and
 the arrows denote the direction of the action of the $R$ matrices.
 The horizontal lines with spectral parameters $\lambda$ and $u$ stem from the action of the various transfer
 matrices. 
 The vertical lines
 correspond to the physical spaces of the homogeneous chain, thus
 their spectral parameter is equal to zero.
 We have periodic boundary conditions in the horizontal direction.
  These partition functions can be evaluated in the crossed channel,
  by building the QTM's which act from the left to the right.
  }
  \label{fig:bqtm0}
\end{figure}

\begin{figure}
  \centering
  \begin{tikzpicture}[scale=0.8]
\foreach \x [count=\n]in {0,8}{ 
   \begin{scope}[xshift = \x cm]
 
   \draw [thick,>->] (0,-1) to (0,2);
\draw [thick,>->] (1,-1) to (1,2);
\draw [thick] (2,0) [out=0,in=0] to (2,1);
\draw [thick] (-1,0) [out=180,in=180] to (-1,1);
 \draw [thick] (0,-1) [out=-90,in=-90] to (1,-1);
 \draw [thick] (0,2) [out=90,in=90] to (1,2);
 \node at (2.5,1) {$-u$};
 \node at (2.5,0) {$\lambda$};
 \node at (1.2,-0.8) {$0$};
  \node at (0.2,-0.8) {$0$};
          \end{scope}
    }

    \node at (-2,2.5) {$A(\lambda,u):$};
      \node at (6,2.5) {$B(\lambda,u):$};
  \draw [thick,>->] (-1,0) to (2,0);
   \draw [thick,<-<] (-1,1) to (2,1);

    \draw [thick,dashed,>->] (7,0) to (10,0);
   \draw [thick,dashed,<-<] (7,1) to (10,1);
\end{tikzpicture}
\caption{Evaluation of the leading eigenvalues of the QTM's, which
  belong to the delta state in the crossed channel. The eigenvalue is
  computed by sandwiching the QTM between the eigenvector from the
  left and right. We thus
  obtain the partition functions above, where the boundary conditions
  are given by the delta states in all 4 directions.
  These partition functions can then be evaluated as a single trace,
 which is obtained for example in an
    anti-clockwise manner leading to the expressions \eqref{ABu0}.}
  \label{fig:bqtm1}
\end{figure}
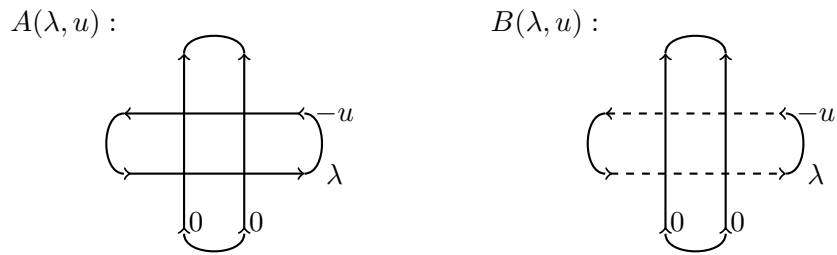

The double row QTM's can be diagonalized in a relatively simple
way. Due to the boundary conditions they are only $SO(3)$-symmetric,
and the 9 dimensional Hilbert space on which they act splits into the
$SO(3)$-representations
\begin{equation}
  3\otimes 3=5+3+1.
\end{equation}
The singlet representation corresponds to the delta state, and it
follows from the inversion relations that at $u=\lambda$ the
corresponding eigenvalue is
\begin{equation}
  \label{kkk}
  \Lambda_{\delta}^{(1)}(\lambda,\lambda)=\Lambda_{\delta}^{(2)}(\lambda,\lambda)=1.
\end{equation}
We checked that the other two eigenvalues are indeed sub-leading in
the physical strip, for both QTM's.

The mean values of the $X_1(u)$ and $Y_1(u)$ operators are thus
\begin{equation}
  \begin{split}
    X_1(u)&=-\frac{\ii L}{2}\left.\frac{\partial}{\partial \lambda}
  \Lambda_{\delta}^{(1)}(\lambda,u)
    \right|_{u=\lambda}\\
Y_1(u)&=-\frac{\ii L}{2}\left.\frac{\partial}{\partial \lambda}
 \Lambda_{\delta}^{(2)}(\lambda,u)
\right|_{u=\lambda}.
\end{split}
\end{equation}
Even though the simple result \eqref{kkk} holds at $u=\lambda$, the
leading eigenvalue is some non-trivial rational function at $u\ne
\lambda$, which we now compute.
The simplest way is perhaps to draw the partition function
corresponding to the eigenvalue, and to evaluate it as a single trace, see
Fig. \ref{fig:bqtm1}. This leads to 
\begin{equation}
  \label{ABu0}
  \begin{split}
 X_1(u)&=\left. -\frac{\ii L}{2} \frac{1}{9}\partial_\lambda 
  \text{Tr}\left(
R^t(\lambda)R^t(-u)R(-u)R(\lambda)
  \right) \right|_{u=\lambda}   \\
Y_1(u)&=\left. -\frac{\ii L}{2}\frac{1}{9} \partial_\lambda 
  \text{Tr}\left(
\bar R^t(\lambda)\bar R^t(-u)\bar R(-u)\bar R(\lambda)
  \right)    \right|_{u=\lambda}.
\end{split}
\end{equation}
Here the extra normalization factor of $1/9$ comes from the
normalization of the physical delta states.

Using  the definitions \eqref{Rdef}-\eqref{eq:R3barDef}
we have to compute the quantities
\begin{equation}
  \label{ABu}
  \begin{split}
  A(\lambda,u)&\equiv     \text{Tr}\left(
R^t(\lambda)R^t(-u)R(-u)R(\lambda)
\right)=\\
&=\frac{\text{Tr}\left(
 (\lambda+\ii K)  (-u+\ii K)
    (-u+\ii P) (\lambda+\ii P)
  \right)}
{(\lambda+\ii)^2(-u+\ii)^2}\\
  B(\lambda,u)&\equiv     \text{Tr}\left(
\bar R^t(\lambda)\bar R^t(-u)\bar R(-u)\bar R(\lambda)
\right)=\\
&=\frac{\text{Tr}\left(
    (\lambda+\ii \frac{3}{2}-\ii P)
    (-u+\ii \frac{3}{2}-\ii P)
    (-u+\ii \frac{3}{2}-\ii K)
    (\lambda+\ii \frac{3}{2}-\ii K)
  \right)}
{(\lambda+\ii \frac{3}{2})^2(-u+\ii \frac{3}{2})^2}.
  \end{split}
 \end{equation}
Direct computation of the traces gives
\begin{equation}
  A(\lambda,u)=3\frac{
3 - (4\ii)\lambda - \lambda^2  + (4\ii) u + 8 \lambda u - (2\ii) \lambda^2  u - u^2  + (2\ii) \lambda u^2  + 3 \lambda^2  u^2
  }
  {(\lambda+\ii)^2(-u+\ii)^2}
\end{equation}
and
\begin{equation}
  B(\lambda,u)=\frac{3}{16}\frac{
75 - (188\ii) \lambda - 172 \lambda^2  - (172\ii) u - 304 \lambda u +
(176\ii) \lambda^2  u - 76 u^2  + (112\ii) \lambda u^2  + 48 \lambda^2
u^2 
  }
  {(\lambda+\ii \frac{3}{2})^2(-u+\ii\frac{3}{2})^2}.
\end{equation}
For the derivatives we get
\begin{equation}
  \begin{split}
X_1(\lambda)&= \left. \frac{-\ii L}{18}\partial_\lambda A(\lambda,u)\right|_{u=\lambda}
=\frac{1}{3}\frac{1+2\lambda^2}{(\lambda^2+1)^2}   \\
Y_1(\lambda)&= \left. \frac{-\ii L}{18}\partial_\lambda B(\lambda,u)\right|_{u=\lambda}
= \frac{4}{3}\frac{5+4\lambda^2}{(4\lambda^2+9)^2}.   \\
\end{split}
 \end{equation}
And finally, we compute the one-string hole densities via
\begin{equation}
   \label{rhohnekd}
  \begin{split}
\rho_{h,1}^{(1)}&=  \frac{1}{2\pi L}\left[  X_1^{[+]}+X_1^{[-]}- Y_1\right]-a_1\\
\rho_{h,1}^{(2)}&=  \frac{1}{2\pi L}\left[  Y_1^{[+]}+ Y_1^{[-]}-X_1\right],
  \end{split}
\end{equation}
where
\begin{equation}
  a_1=\frac{1}{2\pi}\frac{1}{u^2+1/4}.
\end{equation}
After substitution we get the same results \eqref{deltarhoh} as obtained previously.

It is important that our check is independent from the derivation of
\cite{sajat-su3-1,sajat-su3-2}, which was built on the fusion
hierarchy of the Boundary QTM's. Even though the methods of \cite{sajat-su3-1,sajat-su3-2}
also involved double row, two-site transfer matrices, a close
inspection shows that the actual construction there is different, for
example the rapidities involved are chosen in a different
way. Our present check is thus
an independent confirmation of the string-charge relations in the
$SU(3)$-symmetric chain.

\section{Proof of the local inversion relations}

\label{sec:Rinv}

Here we perform an explicit computation of the product
\begin{equation}
  R^\Lambda(u)R^\Lambda(-u),
\end{equation}
where $\Lambda$ is an irreducible representation of $GL(N)$ described
by a rectangular Young diagram, and the
$R$-matrix is given generally as
\begin{equation}
  R^\Lambda(u)=\frac{u+\ii\alpha+\ii E_{ij}\Lambda_{ji}}{u+\ii\alpha'}
\end{equation}
with some shift parameters $\alpha,\alpha'\in\valos$.

The $R$-matrix acts on the tensor product of two representations. Let
us consider the Clebsch-Gordon series
\begin{equation}
  \label{CG1}
  \Lambda_1\otimes \Lambda=\oplus_k \Lambda_k,
\end{equation}
where $\Lambda_1$ is the defining 
representation. Generally the $R$-matrix can be decomposed as
\begin{equation}
  R^\Lambda(u)=\sum_k \rho_k(u)P_k.
\end{equation}
where $P_k$ are the projectors onto the invariant subspaces. It
follows that the inversion relation can be satisfied if
\begin{equation}
 \rho_k(u)\rho_k(-u)=1
\end{equation}
holds for all components. 

In order to compute $\rho_k(u)$ we first
compute the eigenvalues of $ E_{ij}\Lambda_{ji}$ which can be
expressed using quadratic Casimir operators
\begin{equation}
C_2=  \Lambda_{ij}\Lambda_{ji}
\end{equation}
as
\begin{equation}
  E_{ij}\Lambda_{ji}=\frac{1}{2}\left(
  (E_{ij}+\Lambda_{ij}) (E_{ji}+\Lambda_{ji})-E_{ij}E_{ji}-\Lambda_{ij}\Lambda_{ji}
\right).
\end{equation}

The quadratic Casimir for the representation $\Lambda$ of $GL(N)$ with
highest weight $(h_1,\dots,h_N)$ is
\begin{equation}
C_2=  \sum_{j=1}^N h^2_j+\sum_{j<k} (h_j-h_k).
\end{equation}
We consider some examples of this formula. For the defining
representation the highest weight is $(1,0,\dots,0)$ and the Casimir is 
\begin{equation}
C_2=N.
\end{equation}
For the symmetrically fused representation with highest weight
$(m,0,0,\dots)$ (corresponding to the $(1\times m)$ Young diagram)
\begin{equation}
C_2= m(m+N-1).
\end{equation}
For the anti-symmetric tensor with highest weight $(1,1,0,\dots)$  (corresponding to the $(2\times 1)$ Young diagram)
\begin{equation}
C_2=2N-3.
\end{equation}
For symmetrically fused anti-symmetric representations with highest
weight $(m,m,0,\dots)$  (corresponding to the $(2\times m)$ Young diagram)
\begin{equation}
 C_2=2m(m+N-2).
\end{equation}

Let us now focus on the $GL(3)$ representations described by
rectangular Young diagrams.
\begin{itemize}
\item Let us take $(m,0,0)$. The Clebsch Gordan series
  \eqref{CG1} has two terms:
  \begin{equation}
       (1,0,0)\otimes (m,0,0)=(m+1,0,0)\oplus(m,1,0).
  \end{equation}

  For the component  $(m+1,0,0)$  we have
  \begin{equation}
     E_{ij}\Lambda_{ji}=    m.
   \end{equation}
   For the component $(m,1,0)$ we have
   \begin{equation}
     E_{ij}\Lambda_{ji}=-1.
     \end{equation}

It follows that the eigenvalues of the $R$-matrix
\begin{equation}
R^{(m,0)}(u)=  \frac{ u-\ii\frac{m-1}{2}+\ii E_{ij}\Lambda_{ji}}{u+\ii\frac{m+1}{2}}  
\end{equation}
are
\begin{equation}
  \frac{u-\ii\frac{m-1}{2}+\ii m}{u+\ii\frac{m+1}{2}}=1,\qquad
  \frac{ u-\ii\frac{m-1}{2}-\ii}{u+\ii\frac{m+1}{2}}=  \frac{ u-\ii\frac{m+1}{2}}{u+\ii\frac{m+1}{2}}.
\end{equation}
\item Let us now take $(m,m,0)$. There are two components in
  the Clebsch-Gordan series \eqref{CG1}:
  \begin{equation}
       (1,0,0)\otimes (m,m,0)=(m+1,m,0)\oplus (m,m,1).
  \end{equation}

For the component  $(m+1,m,0)$  we have
  \begin{equation}
    \begin{split}
      E_{ij}\Lambda_{ji}&=    m,
      \end{split}
  \end{equation}
  whereas for $(m,m,1,0)$ we have
  \begin{equation}
    \begin{split}
   E_{ij}\Lambda_{ji}&=    -2.
    \end{split}
  \end{equation}
  It follows that the eigenvalues  of
  \begin{equation}
  R^{(1,0),(0,m)}(u)=  \frac{ u-\ii\frac{m-2}{2}+\ii E_{ij}\Lambda_{ji}}{u+\ii\frac{m+2}{2}}
\end{equation}
are
\begin{equation}
  \frac{ u-\ii\frac{m-2}{2}+\ii m}{u+\ii\frac{m+2}{2}}=1,\qquad
   \frac{ u-\ii\frac{m-2}{2}-2\ii}{u+\ii\frac{m+2}{2}}=  \frac{ u-\ii\frac{m+2}{2}}{u+\ii\frac{m+2}{2}}.
 \end{equation}
\end{itemize}
We can see that both $R$-matrices satisfy the local inversion relation.

\addcontentsline{toc}{section}{References}
\providecommand{\href}[2]{#2}\begingroup\raggedright\endgroup


\end{document}